%

\documentclass[11pt,fleqn]{llncs}

\usepackage{syntonly}
\usepackage{latexsym}
\usepackage{amssymb}
\usepackage{epsfig}

\pagestyle{plain}

\newcommand{\msqrt}{\mbox{\rm msqrt}}
\newcommand{\mlog}{\mbox{\rm mlog}}
\newcommand{\msqrti}{\mbox{\footnotesize\rm msqrt}}
\newcommand{\mlogi}{\mbox{\footnotesize\rm mlog}}
\newcommand{\imax}{\mbox{\footnotesize\rm max}}

\newcommand{\ord}{\mbox{$o$}}
\newcommand{\Ord}{\mbox{$O$}}
\newcommand{\DT}{\mbox{\rm DTIME}}
\newcommand{\DTi}{\mbox{\footnotesize\rm DTIME}}
\newcommand{\NT}{\mbox{\rm NTIME}}
\newcommand{\AT}{\mbox{\rm ATIME}}
\newcommand{\XT}{\mbox{\rm XTIME}}

\newcommand{\X}{\mbox{$\mathbb{X}$}}

\newcommand{\Nd}{\mbox{\rm N}}
\newcommand{\code}{\mbox{\rm code}}

\newcommand{\Blin}{\mbox{$B_{\mbox{\footnotesize\rm lin}}$}}

\newcommand{\Bpol}{\mbox{$B_{\mbox{\footnotesize\rm pol}}$}}

\newcommand{\Bqpol}{\mbox{$B_{\mbox{\footnotesize\rm qpol}}$}}

\newcommand{\Bqlin}{\mbox{$B_{\mbox{\footnotesize\rm qlin}}$}}

\newcommand{\Bqmlin}[1]{\mbox{$B_{\mbox{\footnotesize\rm q$^#1$lin}}$}}

\newcommand{\Bhex}{\mbox{$B_{\mbox{\footnotesize\rm hex}}$}}

\newcommand{\Blogs}{\mbox{$B_{\mbox{\footnotesize\rm $\log^{*}$}}$}}

\newcommand{\ite}[1]{\mbox{$^{\langle#1\rangle}$}}  

\newcommand{\leqa}{\mbox{$\leq_{\mbox{\footnotesize\rm a}}$}}
\newcommand{\eqa}{\mbox{$\equiv_{\mbox{\footnotesize\rm a}}$}}
\newcommand{\leqo}{\mbox{$\leq_{\mbox{\footnotesize\rm ord}}$}}
\newcommand{\leo}{\mbox{$\ll_{\mbox{\footnotesize\rm ord}}$}}  
\newcommand{\nleqo}{\mbox{$\not\leq_{\mbox{\footnotesize\rm ord}}$}}
\newcommand{\leqi}{\mbox{$\leq_{\mbox{\footnotesize\rm it}}$}}
\newcommand{\lei}{\mbox{$<_{\mbox{\footnotesize\rm it}}$}}
\newcommand{\eqi}{\mbox{$\equiv_{\mbox{\footnotesize\rm it}}$}}
\newcommand{\leqpow}{\mbox{$\leq_{\mbox{\footnotesize\rm pow}}$}}
\newcommand{\lepow}{\mbox{$<_{\mbox{\footnotesize\rm pow}}$}}
\newcommand{\llpow}{\mbox{$\ll_{\mbox{\footnotesize\rm pow}}$}}
\newcommand{\leqx}{\mbox{$\leq_{\mbox{\footnotesize\rm lex}}$}}
\newcommand{\wlex}{\mbox{$\sqsubseteq_{\mbox{\footnotesize\rm lex}}$}}
\newcommand{\eqpow}{\mbox{$\equiv_{\mbox{\footnotesize\rm pow}}$}}

\newcommand{\emi}{\mbox{$\sqsubseteq_{\mbox{\footnotesize\rm it}}$}}
\newcommand{\isi}{\mbox{$\cong_{\mbox{\footnotesize\rm it}}$}}
\newcommand{\incl}{\mbox{$\subseteq_{\mbox{\footnotesize\rm low}}$}}

\newcommand{\It}{\mbox{\rm It}}
\newcommand{\Pow}{\mbox{\rm Pow}}
\newcommand{\Cl}{\mbox{\rm Clos}}
\newcommand{\Reg}{\mbox{$\cal R\!$\it eg}}
\newcommand{\Clit}{\mbox{$\cal C\!$\it lit}}
\newcommand{\Csets}{\mbox{$\cal C\!$\it sets}}
\newcommand{\Cuts}{\mbox{$\cal C\!$\it uts}}
\newcommand{\val}{\mbox{\it val}}

\newcommand{\ov}[1]{\mbox{$\overline{#1}$}}

\newcommand{\Sa}[2]{\mbox{$\Sigma_{#1}(#2)$}}
\newcommand{\Pa}[2]{\mbox{$\Pi_{#1}(#2)$}}
\newcommand{\Salt}[2]{\mbox{$\Sigma_{#1}^{\mbox{\footnotesize\rm alt}}(#2)$}}
\newcommand{\Palt}[2]{\mbox{$\Pi_{#1}^{\mbox{\footnotesize\rm alt}}(#2)$}}
\newcommand{\Sai}[2]{\mbox{\footnotesize $\Sigma_{#1}(#2)$}}

\newcommand{\OH}{\mbox{\rm OH}}


\newcommand{\NP}{\mbox{\rm NP}}
\renewcommand{\P}{\mbox{\rm P}}
\newcommand{\Lin}{\mbox{\rm LIN}}
\newcommand{\NLin}{\mbox{\rm NLIN}}

\newcommand{\co}{\mbox{\rm co}}

\newcommand{\N}{\mbox{$\mathbb{N}$}}                        
\newcommand{\Ni}{\mbox{\footnotesize $\mathbb{N}$}}         
\newcommand{\Nplus}{\mbox{$\mathbb{N}_+\,$}}
\newcommand{\Nplusi}{\mbox{\footnotesize $\mathbb{N}_+\,$}}                

\newcommand{\B}{\mbox{$\mathbb{B}$}}
\newcommand{\D}{\mbox{$\mathbb{D}$}}
\newcommand{\Q}{\mbox{$\mathbb{Q}$}}                        
\newcommand{\R}{\mbox{$\mathbb{R}$}}                        

\newcommand{\fct}[3]{\mbox{$#1 :\, #2\, \longrightarrow \, #3$}}

\newcommand{\iid}{\mbox{\footnotesize\rm id}}

\newcommand{\ran}{\mbox{\rm ran}}

\newcommand{\en}{\enspace}

\newcommand{\bea}{\begin{eqnarray*}}
\newcommand{\eea}{\end{eqnarray*}}

 \topmargin-1.5cm
 \textheight24.5cm   
 \textwidth16cm    
 \oddsidemargin0cm
 \evensidemargin0cm
  \unitlength1cm

\begin{document}












\vspace*{0.3cm}

\begin{center}\Large\bf  On Regular Sets of Time Bounds
 \\  
 and Determinism versus Nondeterminism
  \vspace*{0.5cm}

\normalsize\rm Armin Hemmerling
 \vspace*{0.12cm}

\small Ernst-Moritz-Arndt--Universit\"at Greifswald,
           Institut f\"ur Mathematik und Informatik\\
 Walther-Rathenau-Str. 47,
           D--17487 Greifswald, Germany\\
{\small\tt hemmerli@uni-greifswald.de}
 \vspace*{0.1cm}

\normalsize
September 2013
\end{center}
 \vspace*{1.cm}

\hspace*{0.25cm}
\begin{minipage}{14.cm}
\small
\noindent
 {\bf Abstract.} \en
This paper illustrates the richness of the concept of regular sets of time bounds and demonstrates its application to problems
of computational complexity. 
There is a universe of bounds whose regular subsets allow to represent several time complexity classes of common interest
and are linearly ordered with respect to the confinality relation which implies the inclusion between the corresponding
complexity classes.
By means of classical results of complexity theory, the separation of determinism from nondeterminism is possible
for a variety of sets of bounds below $n\cdot\log^*(n)$.
The system of all regular bound sets ordered by confinality allows the order-isomorphic embedding of, e.g., the ordered set
of real numbers or the Cantor discontinuum.
   \\[1.2ex]
{\bf MSC (2010):} \en    68Q15, 03D15 \\[0.7ex]
{\bf Keywords:} \en  time bound, regular set of bounds, complexity classes, determinism versus nondeterminism, 
    oracle hierarchies, alternation
  \end{minipage}
\vspace*{1.cm}

\section{Introduction and overview}

Complexity classes usually consist of all languages whose decision problems are solvable by means of a fixed underlying model of computation with a bounded amount of certain resources. Time complexity classes defined by deterministic, nondeterministic or alternating Turing machines occupy a considerable part of interest in computational complexity theory. They can be characterized by single time bounds as well as by sets of time bounds. In \cite{He1},  we introduced the concept of  regular sets of time bounds.
It guarantees that the related complexity classes own a certain robustness, in particular they are closed with respect to some standard techniques of Turing machine programming. Moreover, there are special languages which are complete in all such nondeterministic complexity classes, with respect  to related versions of m-reducibility. As an immediate consequence, the equality of the deterministic and the nondeterministic class, both of them defined by the same regular set of bounds,
 is upwards hereditary with respect to inclusion between the  sets of bounds.
Similar relationships hold for collapses of oracle hierarchies defined analogously to the linear time hierarchy or the polynomial time hierarchy, provided that the underlying set of bounds fulfills the condition of o-regularity.

Besides the sets of linear bounds and of polynomial bounds, respectively, there are some further o-regular bound sets yielding well-known complexity classes, see \cite{He1} and Section 2 of this paper. Nevertheless, so far it has been open how large the families of regular or even o-regular sets are and how useful or interesting the related results should be regarded. The present paper tries to contribute some insight concerning such questions.

After the basic notions and notations have been introduced in Section 2, Section 3 deals with regular subsets of tame universes
of time bounds. It turns out that regularity is closely connected with the sets of iterations of bounds.
Section 4 gives some further details and deals with properties of universes of bounds which enable us to represent a variety of commonly used complexity classes.
In Section 5, a universe of special bounds between linear and polynomial time is constructed. It is linearly ordered by $\leqi$,
this relation is closely related to the inclusion of the corresponding complexity classes.
Section 6 describes an analog construction of a universe of bounds between the polynomials and the quasipolynomials.
In Section 7, it is shown that the union of the universes from Sections 5 and 6 yields an order structure of regular sets
which is order-isomorphic to the Cantor discontinuum enriched by just one element greater than all the others.
In Section 8, the regular sets of bounds from Sections 5 and 6 are shown to be o-regular. Thus, oracle hierarchies and,
equivalently, alternating hierarchies can be defined over these sets. Applying a classical result of complexity theory,
we obtain the separation of deterministic from nondeterministic complexity classes defined by certain sets of bounds
below $n\cdot\log^*(n)$.
In the concluding Section 9, we explain that questions concerning determinism versus nondeterminism or
the problem of collapses of oracle hierarchies determine cuts in linearly ordered universes of bounds.

Even if the present paper does not solve any crucial problem, we are convinced that this introduction into the world
of regular sets of bounds and their complexity classes contributes some interesting new features of computational complexity.


\section{Time bounds and regular sets}

We start with recalling some basic notions and denotations. Most of them were already used in \cite{He1,He2},
and their definitions are essentially taken from there.

By a  {\it bound function}\/ (also {\it time bound}\/ or briefly {\it bound}\/), we always mean a function over the natural numbers, $\fct{\beta}{\N}{\N}$, satisfying $n\leq \beta(n)\leq \beta(n+1)$ for all $n\in\N$. The related {\it (time) complexity classes}\/ are
\bea
\XT(\beta) & = & \{L: \; L\subseteq \X^* \mbox{ and there is an $\mbox{\rm X}$TM $\frak{M}$ } \\
 & & \hspace*{2.2cm}\mbox{accepting $L$ with a time complexity $t_{\frak{M}}\in \Ord(\beta)$} \},
\eea
where $\mbox{\rm X}\in\{\mbox{\rm D}, \mbox{\rm N}, \mbox{\rm A}\}$.
They consist of {\it languages}\/ $L$ over the two-letter alphabet $\X=\{0,1\}$.
Under an $\mbox{\rm X}$TM, we here understand a {\it deterministic}\/, {\it nondeterministic}\/ and {\it alternating}\/, respectively, {\it Turing machine}\/ with a special read-only input tape and arbitrarily many additional work tapes. Further details on standard definitions can be found in textbooks like \cite{DK,Pa,Re}.

For a set $B$ of bounds and $\mbox{\rm X}\in\{\mbox{\rm D}, \mbox{\rm N}, \mbox{\rm A}\}$,
the complexity classes are defined as
\bea
 \XT(B) & = & \bigcup\nolimits_{\beta\in B} \XT(\beta).
\eea
In particular, we obtain the well-known deterministic and nondeterministic {\it polynomial time}\/ classes
in this way:
\[  \P= \DT(\Bpol)\en \mbox{ and } \en \NP= \NT(\Bpol), \]
where
\bea
\Bpol&=& \{\beta: \:\beta(n)=  n^k \mbox{ with some } k\in\Nplus\}.
\eea
This set represents the polynomial bounds. Notice that $0\in\N$ and $\Nplus=\N \setminus\{0\}$.
The (deterministic and nondeterministic)  {\it linear time}\/ classes are
\[\Lin =\DT(\Blin) \en \mbox{ and }\en \NLin =\NT(\Blin), \]
where
\bea
\Blin=\{\beta: \: \beta(n)= c \cdot n \mbox{ with some } c\in\Nplus\}
\eea
denotes the set of linear bounds.
The linear time classes can even be characterized by a single bound, e.g., by the identical function, $\beta_{\iid}(n)=n\,$:
\[\Lin =\DT(\beta_{\iid})\,,\en \NLin =\NT(\beta_{\iid})  \en \mbox{ and }\en \AT(\Blin)=\AT(\beta_{\iid})    . \]
Another important set of bounds is
\[\Bqlin = \{\beta: \:\beta(n)=  n\cdot (\,\log(n)\,)^k \mbox{ with } k\in\Nplus\}\]
characterizing the {\it quasilinear}\/ time classes, where
$\log(n)=\lceil\log_2(n)\rceil$ for $n\geq 2$ and $\log(0)=\log(1)=1$.
Also, the modifications
\[\Bqmlin{m} = \{\beta: \:\beta(n)=  n\cdot (\log\ite{m}(n))^k \mbox{ with } k\in\Nplus\},
\]
are of some interest. Herein, $\log\ite{m}$ is the $m$-fold iteration of $\log$. Quite general, for $m\in\Nplus$, the $m$-fold iteration of a function $f$ will be denoted by $f\ite{m}$. It must not be confused with $f^m$ standing for the arithmetical power of a function $f$ mapping into a number domain, i.e., $f^m(n)= (\,f(n)\,)^m$.

Also, the bound set
\[\Blogs = \{\beta: \:\beta(n)=  n\cdot (\log^*(n))^k \mbox{ with } k\in\Nplus\},
\]
with the {\it iterated logarithm}\/, $\log^*(n)= \min\{m:\, \log\ite{m}(n)=1\}$, and the set of
{\it quasipolynomial}\/ (or superpolynomial) bounds,
\[\Bqpol = \{\beta: \:\beta(n)=  n^{(\log(n))^k} \mbox{ with } k\in\Nplus\},\]
yield complexity classes known from literature or derived from such ones.
Finally, the set of iterations of the exponential function,
$\beta_{\mbox{\footnotesize exp}}(n)= 2^n$, should be mentioned:
let
\[\Bhex = \{\beta_{\mbox{\footnotesize exp}}\ite{m}: \: m\in\Nplus\}.\]
We have $\beta_{\mbox{\footnotesize exp}}\ite{m}(n)= 2^{2^{\cdot^{\cdot^{\cdot^{2^n}}}}}$, with $m$-times $2\,$s.

For bound functions $\fct{\beta_1,\beta_2}{\N}{\N}$ (or any functions $\beta_1,\beta_2$ mapping into number domains),  let $\beta_1\leqa \beta_2$ mean that $\beta_1(x)\leq \beta_2(x)$ for almost all arguments $x$. Moreover, operations on numbers are naturally transferred to such functions,
e.g., $(\beta_1+\beta_2)(n)=\beta_1(n)+\beta_2(n)$ and $(c\cdot \beta)(n)=c\cdot\beta(n)$ for constants $c\in\N$.
As usual, functions $\beta$ will also be denoted by terms describing them; e.g., $\beta(n)\leqa\, n$ means that $\beta\leqa\beta_{\iid}$,
with the identical function $\beta_{\iid}(n)=n$ on the right-hand side.

Obviously, relation $\leqa$ is reflexive and transitive.
Moreover, we shall have to use the monotony with respect to addition, multiplication and composition of functions.

\begin{lemma}
If $\beta_1\leqa\beta_2$ and $\beta_1'\leqa\beta_2'$ for bounds $\beta_1,\beta_2,\beta_1'$and $\beta_2'$, then  we have $\beta_1+\beta_1'\leqa\beta_2+\beta_2'$, $\beta_1\cdot\beta_1'\leqa\beta_2\cdot\beta_2'$, and  $\beta_1\circ\beta_1'\leqa\beta_2\circ\beta_2'$.
\end{lemma}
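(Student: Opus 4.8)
The plan is to treat the three assertions separately, since the additive and multiplicative parts are immediate while the composition part is the one that genuinely uses the two defining properties of bound functions. For addition and multiplication I would first fix a single threshold $n_0\in\N$ beyond which \emph{both} hypothesised inequalities hold pointwise; such an $n_0$ exists because each of $\beta_1\leqa\beta_2$ and $\beta_1'\leqa\beta_2'$ has only finitely many exceptions, so the larger of the two individual thresholds works. Then, for every $n\geq n_0$, one adds the inequalities $\beta_1(n)\leq\beta_2(n)$ and $\beta_1'(n)\leq\beta_2'(n)$ to obtain $(\beta_1+\beta_1')(n)\leq(\beta_2+\beta_2')(n)$, and multiplies them --- which is legitimate since all four values are natural numbers, hence nonnegative --- to obtain $(\beta_1\cdot\beta_1')(n)\leq(\beta_2\cdot\beta_2')(n)$.

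For composition I would again fix thresholds separately: $n_1$ with $\beta_1(m)\leq\beta_2(m)$ for all $m\geq n_1$, and $n_2$ with $\beta_1'(n)\leq\beta_2'(n)$ for all $n\geq n_2$, and put $n_0=\max(n_1,n_2)$. For $n\geq n_0$ the argument has two steps. First, a bound function is nondecreasing (iterate $\beta(n)\leq\beta(n+1)$), so the inner comparison $\beta_1'(n)\leq\beta_2'(n)$ yields $\beta_1(\beta_1'(n))\leq\beta_1(\beta_2'(n))$. Second, since every bound satisfies $m\leq\beta(m)$, we have $\beta_2'(n)\geq n\geq n_1$, so the outer comparison can be invoked at the point $\beta_2'(n)$, giving $\beta_1(\beta_2'(n))\leq\beta_2(\beta_2'(n))$. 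Chaining these two inequalities gives $(\beta_1\circ\beta_1')(n)\leq(\beta_2\circ\beta_2')(n)$ for all $n\geq n_0$, i.e.\ $\beta_1\circ\beta_1'\leqa\beta_2\circ\beta_2'$.

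The only non-routine point, and the sole place where the structural assumptions on bounds are used, is the composition case: monotonicity is what lets us replace the inner argument $\beta_1'(n)$ by the larger value $\beta_2'(n)$ before applying the outer inequality, and the property $n\leq\beta(n)$ is what guarantees that this larger value lies in the range on which the outer inequality is already known to hold. I expect the main effort to be the bookkeeping of which threshold governs which variable; beyond that there is no real obstacle, and no further hypotheses are needed.
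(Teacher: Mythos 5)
Your proposal is correct and follows essentially the same route as the paper: pointwise addition and multiplication beyond a common threshold, and for composition the same two-step chain $\beta_1(\beta_1'(n))\leq\beta_1(\beta_2'(n))\leq\beta_2(\beta_2'(n))$, justified by monotonicity of bounds and by $n\leq\beta_2'(n)$ ensuring the outer threshold is met. Your explicit bookkeeping of the thresholds is, if anything, slightly more careful than the paper's own write-up.
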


\begin{proof}
Whereas the first two assertions hold for any number functions, the proof of the last one uses the special properties of time bounds. Let $\beta_1(n)\leq\beta_2(n)$ for all $n\geq c$ and $\beta_1'(n)\leq\beta_2'(n)$ for all $n\geq c'$,
where $c,c'\in \N$. Then $(\beta_1+\beta_2)(n)\leq (\beta_1'+\beta_2')(n)$ and
$(\beta_1\cdot\beta_2)(n)\leq (\beta_1'\cdot\beta_2')(n)$ for all $n\geq \max(c,c')$.
Moreover,
\[\beta_1\circ\beta_1'(n)=\beta_1(\beta_1'(n))\leq \beta_1(\beta_2'(n))\leq
 \beta_2(\beta_2'(n))=\beta_2\circ\beta_2'(n)\]
if $n\geq c'$ and $\beta_2'(n)\geq c$; this holds for all  $n\geq \max(c,c')$, too.
 \qed
 \end{proof}

In order to compute word functions (within certain complexity bounds), we employ  DTMs with an additional one-way write-only output tape on which the values of the functions have to be produced.
A bound function $\beta$ is called {\it time-constructible}\/ iff the word function
$\fct{f_{\beta}}{\X^*}{\X^*}$, defined by $f_{\beta}(w)= 1^{\beta(|w|)}$, is (deterministically) computable in time $\Ord(\beta)$. Equivalently, one could require that there is a deterministic multi-tape Turing machine which halts on any input $w\in \X^*$ after $\Ord(\beta(|w|))$ steps such that the last work tape carries the inscription $1^{\beta(|w|)}$ then.

More general, a bound $\beta$ is said to be {\it constructible in time} $\beta'$ if $f_{\beta}$ is computable in time $\Ord(\beta')$. 
In particular, {\it linear-time}\/ constructibility, i.e., constructibility in time $\Ord(n)$ will be employed later. Instead of  the word function $f_{\beta}$ which is related to the unary notation of numbers, the function $f_{\beta}'$, where $f_{\beta}'(w)$ is the binary encoding of $|w|$, could equivalently be used. For all these details on time-constructibility, the reader is referred to \cite{Ka}.
One easily shows the time-constructibility of several commonly used bound functions as well as the following
\begin{lemma}
If $\beta_1$ and $\beta_2$ are time-constructible bounds, then $\beta_1+\beta_2$, $\beta_1\cdot\beta_2$ and
$\beta_1\circ\beta_2$ are time-constructible, too.
\qed
\end{lemma}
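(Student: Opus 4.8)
The plan is to treat the three operations separately, in each case assembling the constructors furnished for $\beta_1$ and $\beta_2$ into a single DTM and checking that its running time is $\Ord$ of the target bound. I would work throughout with the ``work-tape'' reformulation of time-constructibility recalled just above the lemma: for a time-constructible bound $\beta$ there is a deterministic multi-tape machine that, on every input $w$, halts within $\Ord(\beta(|w|))$ steps with $1^{\beta(|w|)}$ inscribed on its last work tape. I also use repeatedly that every bound satisfies $n\leq\beta(n)$, so $\beta(n)\geq 1$ for $n\geq 1$; together with Lemma~1 this yields $\beta_1,\beta_2\leqa\beta_1+\beta_2$, $\,\beta_1,\beta_2\leqa\beta_1\cdot\beta_2$, and $\beta_2=\beta_{\iid}\circ\beta_2\leqa\beta_1\circ\beta_2$, which is what lets the auxiliary phases below be absorbed into the final time bound.

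For $\beta_1+\beta_2$: on input $w$ run the $\beta_1$-machine, producing $1^{\beta_1(|w|)}$ on one group of work tapes in $\Ord(\beta_1(|w|))$ steps; then, on a fresh group of tapes, run the $\beta_2$-machine, producing $1^{\beta_2(|w|)}$ in $\Ord(\beta_2(|w|))$ steps; finally append the second block to the first, at cost $\Ord(\beta_1(|w|)+\beta_2(|w|))$. The total is $\Ord((\beta_1+\beta_2)(|w|))$. For $\beta_1\cdot\beta_2$: produce the blocks $1^{\beta_1(|w|)}$ and $1^{\beta_2(|w|)}$ on two groups of tapes as before, then run a loop that, for each of the $\beta_1(|w|)$ cells of the first block, copies the whole second block onto the output tape and rewinds the second-block head; each pass costs $\Ord(\beta_2(|w|))$ and advances the first-block head by one cell, so the loop costs $\Ord(\beta_1(|w|)\cdot\beta_2(|w|))$, and the preprocessing $\Ord(\beta_1(|w|)+\beta_2(|w|))$ is dominated since both bounds are eventually $\geq 1$. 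Hence the machine runs in time $\Ord((\beta_1\cdot\beta_2)(|w|))$.

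For $\beta_1\circ\beta_2$ --- the only case needing an idea --- the plan is: on input $w$, first run the $\beta_2$-machine to obtain the word $u=1^{\beta_2(|w|)}$ on a work tape in $\Ord(\beta_2(|w|))$ steps; then simulate the $\beta_1$-machine with $u$ playing the role of its read-only input, i.e.\ on an input of length $|u|=\beta_2(|w|)$. Since the $\beta_1$-constructor is stipulated to work on \emph{arbitrary} inputs and halts within $\Ord(\beta_1(|u|))$ steps, and a fixed machine simulating another with a designated work tape in place of the input tape incurs only constant-factor overhead, this phase costs $\Ord(\beta_1(\beta_2(|w|)))$ steps and leaves $1^{\beta_1(\beta_2(|w|))}$ on the last work tape. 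Using $\beta_2\leqa\beta_1\circ\beta_2$, the first phase is swallowed, so the whole machine runs in time $\Ord((\beta_1\circ\beta_2)(|w|))$, as required.

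I do not expect any genuine difficulty here; the one point to be stated carefully is exactly the composition step --- that feeding the unary output $u$ of the $\beta_2$-constructor into the $\beta_1$-constructor is legitimate (its behaviour depends only on $|u|$), and that the defining inequality $\beta_1(m)\geq m$, in the form $\beta_2\leqa\beta_1\circ\beta_2$, is what makes the cost of first writing $u$ negligible. Everything else reduces to the routine bookkeeping of running several fixed machines in sequence on disjoint tape groups and observing that a sum, a product, or a composition of bounds dominates each of the intermediate quantities involved.
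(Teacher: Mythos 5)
Your proposal is correct, and it supplies exactly the standard argument the paper treats as routine: Lemma~2 is stated with its proof omitted, so there is nothing to compare against beyond the intended folklore construction. Your handling of the only delicate case, composition --- feeding the unary output $1^{\beta_2(|w|)}$ of the $\beta_2$-constructor to the $\beta_1$-constructor (whose behaviour depends only on the input length) and absorbing the cost of the first phase via $\beta_2\leqa\beta_1\circ\beta_2$, which follows from $\beta_1(m)\geq m$ --- is precisely the point that needs to be made, and you make it correctly.
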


\begin{definition}[Regularity]
A non-empty set of bounds, $B$, is called  regular iff the following properties hold:
\\[1ex]
\hspace*{1.5cm}
\begin{minipage}{14.cm}
\begin{itemize}
\item[ i)] for all $\beta\in B$, there is a time-constructible $\beta'\in B$ such that
$\beta\leqa \beta'$;
\item[ii)] for all $\beta,\beta'\in B$, there is a $\beta''\in B$ such that
$\beta + \beta'\circ \beta \leqa \beta''$.
\end{itemize}
\end{minipage}
\end{definition}

This notion introduced in \cite{He1} guarantees a certain robustness of the complexity classes related to a bound set $B$ such that some key techniques of complexity theory are applicable. Notice that condition ii) implies that, for any $\beta_1,\beta_2,\beta\in B$ and any constant $c\in\Nplus$, there are $\beta',\beta''\in B$
such that $\beta_1+\beta_2\leqa\beta'$ and $c\cdot\beta\leqa\beta''$.
It is easily shown that all the bound sets mentioned so far, i.e.,
$\Blin, \Bpol, \Bqlin, \Bqmlin{m}, \Blogs, \Bqpol$ and $\Bhex$, are regular.

Relation $\leqa$ is canonically transferred to sets of bounds:
let $B_1\leqa B_2$ mean that to any $\beta_1\in B_1$ there is a
$\beta_2\in B_2$ such that $\beta_1\leqa\beta_2$. Also this relation between sets
is reflexive and transitive. Obviously, $B_1\subseteq B_2$ implies that $B_1\leqa B_2$.
Moreover, one easily shows that $\Blin\leqa B$ for each regular set $B$, and
we have
\[\Blin \leqa \Blogs \leqa \,\cdots \,\leqa \Bqmlin{3} \leqa \Bqmlin{2}\leqa \Bqmlin{1} =\Bqlin
\leqa \Bpol\leqa\Bqpol\leqa\Bhex.\]

By $\eqa$, the equivalence relation related to $\leqa$ is denoted, i.e., $B_1\eqa B_2$ means that
both $B_1\leqa B_2$ and $B_2\leqa B_1$. We shall then say that $B_1$ and $B_2$ are {\it confinal}\/.

Clearly, $B_1\leqa B_2$ implies $\XT(B_1)\subseteq \XT(B_2)$ for each $\mbox{\rm X}\in\{\mbox{\rm D}, \mbox{\rm N}, \mbox{\rm A}\}$. Thus, confinal sets of bounds yield the same complexity classes.

With respect to inclusions between complexity classes, the {\it order of growth}\/ of bound functions is
still more important than relation $\leqa$.
We write $\beta_1\leqo\beta_2$ iff $\beta_1\in\Ord(\beta_2)$, i.e., there is a constant $c\in\Nplus$ such that $\beta_1\,\leqa \, c \cdot \beta_2$. By $\beta_1\,\leo \, \beta_2$, we denote that
$\beta_1\in\ord(\beta_2)$, this means
\[\lim_{n\to \infty \atop (n>0)} \,\mbox{\normalsize $\frac{\beta_1(n)}{\beta_2(n)}$}\, =\,\limsup_{n\to \infty \atop (n>0)} \,\mbox{\normalsize $\frac{\beta_1(n)}{\beta_2(n)}$}\,
=\,\liminf_{n\to \infty \atop (n>0)} \,\mbox{\normalsize $\frac{\beta_1(n)}{\beta_2(n)}$}\,
=\,\infty.\]
In the sequel, we shall simply write
$\,\lim_{n\to \infty} \,\mbox{\normalsize $\frac{\beta_1(n)}{\beta_2(n)}$}$ for such limits, and analogously for
limits superior and inferior, even if $\beta_2(0)=0$ is not excluded.

It is easily shown that relation $\leqo$ is reflexive and transitive. The following lemma concerns its monotony.
A bound function $\beta$ is called {\it superlinear}\/ iff $\beta_{\iid}\leo\beta$.

\begin{lemma}
From $\beta_1\leqo\beta_2$ it follows that $\beta_1\circ\beta\leqo\beta_2\circ\beta$ for all bounds $\beta$.
If $\beta$ is superlinear, then it holds $\,\beta'\leo\beta\circ\beta'$ for all bounds $\beta'$.
\end{lemma}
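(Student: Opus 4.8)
The statement has two parts. For the first part, suppose $\beta_1\leqo\beta_2$, so there is a constant $c\in\Nplus$ with $\beta_1\leqa c\cdot\beta_2$. I want to show $\beta_1\circ\beta\leqo\beta_2\circ\beta$ for an arbitrary bound $\beta$. The natural approach is to compose on the right: from $\beta_1(m)\leq c\cdot\beta_2(m)$ for all $m\geq m_0$, substitute $m=\beta(n)$. Since $\beta$ is a bound we have $\beta(n)\geq n$, so $\beta(n)\geq m_0$ for all $n\geq m_0$, and hence $\beta_1(\beta(n))\leq c\cdot\beta_2(\beta(n))$ for all such $n$, i.e.\ $\beta_1\circ\beta\leqa c\cdot(\beta_2\circ\beta)$, which is exactly $\beta_1\circ\beta\leqo\beta_2\circ\beta$. (Note this uses monotonicity of $\leqa$ under composition, essentially Lemma~1, applied to $\beta_1\leqa c\cdot\beta_2$ and $\beta\leqa\beta$; the only extra ingredient is $\beta(n)\geq n\to\infty$ so the ``almost all'' threshold is preserved.)

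For the second part, assume $\beta$ is superlinear, i.e.\ $\beta_{\iid}\leo\beta$, which means $\lim_{n\to\infty}\frac{\beta(n)}{n}=\infty$. I must show $\beta'\leo\beta\circ\beta'$ for an arbitrary bound $\beta'$, i.e.\ $\lim_{n\to\infty}\frac{\beta(\beta'(n))}{\beta'(n)}=\infty$. The idea is again substitution: for every $K$ there is an $m_K$ such that $\frac{\beta(m)}{m}\geq K$ for all $m\geq m_K$; now set $m=\beta'(n)$. Since $\beta'$ is a bound, $\beta'(n)\geq n\to\infty$, so $\beta'(n)\geq m_K$ for all sufficiently large $n$, whence $\frac{\beta(\beta'(n))}{\beta'(n)}\geq K$ for all sufficiently large $n$. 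As $K$ was arbitrary, the quotient tends to $\infty$, which is the definition of $\beta'\leo\beta\circ\beta'$.

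\textbf{Main obstacle.} There is no deep obstacle here; the proof is a routine substitution argument. The one point that needs care—and the reason the hypotheses on bound functions matter—is that $\leqo$ and $\leo$ are statements about ``almost all'' arguments or about limits, and a priori composing on the right could map small arguments to large ones and vice versa, destroying the relevant tail behaviour. This is ruled out precisely because every bound $\beta'$ satisfies $\beta'(n)\geq n$, so $\beta'(n)\to\infty$ and the image of any cofinite set of arguments is cofinite in the relevant sense; this is what lets the threshold $m_0$ (resp.\ $m_K$) on the inner function be absorbed into a threshold on $n$. Monotonicity of $\beta'$ is not actually needed for either implication, only the lower bound $\beta'(n)\geq n$; monotonicity of the outer function $\beta_1$, $\beta_2$, $\beta$ is likewise not required beyond what is already built into being a bound. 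I would also remark that the superlinearity hypothesis in the second part cannot be weakened to $\beta_{\iid}\leqo\beta$: if $\beta=\beta_{\iid}$ then $\beta\circ\beta'=\beta'$ and the conclusion $\beta'\leo\beta'$ fails, so the strict order of growth is genuinely used.
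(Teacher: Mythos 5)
Your proof is correct and takes essentially the same route as the paper's: both arguments come down to substituting $m=\beta(n)$ (respectively $m=\beta'(n)$) and using $\beta'(n)\geq n\to\infty$ so that the tail behaviour is preserved under composition on the right. The paper merely phrases the first part via the boundedness of $\limsup_{n\to\infty}\beta_1(n)/\beta_2(n)$ and the second as a contradiction with $\liminf_{m\to\infty}\beta(m)/m$, which is only a cosmetic difference from your direct $c$-threshold and for-every-$K$ formulations.
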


\begin{proof}
$\beta_1\leqo\beta_2$ means that $\limsup_{n\to \infty} \,\mbox{\normalsize $\frac{\beta_1(n)}{\beta_2(n)}$}<\infty$. Then $\limsup_{n\to \infty} \,\mbox{\normalsize $\frac{\beta_1\circ\beta(n)}{\beta_2\circ\beta(n)}$}\in \R$, too.
To show the second assertion, notice that always $m\leq\beta(m)$, hence $\beta'(n)\leq\beta\circ\beta'(n)$ for all $n\in\N$.
If $\beta'\leo\beta\circ\beta'$ would not hold, then we had
$\liminf_{n\to \infty} \,\mbox{\normalsize $\frac{\beta\circ\beta'(n)}{\beta'(n)}$}<\infty$,
hence $\liminf_{m\to \infty} \,\mbox{\normalsize $\frac{\beta(m)}{m}$}<\infty$, a contradiction to
$\lim_{m\to \infty} \,\mbox{\normalsize $\frac{\beta(m)}{m}$}=\infty$.
\qed
\end{proof}

It might be of interest that $\beta_1\leqo\beta_2$ always implies $\beta \circ\beta_1\leqo\beta\circ\beta_2$
iff $\beta$ is {\it subhomogeneous}\, i.e., to any number $c\in\Nplus$ there exists a number $\overline{c}$ such that 
$\beta(c\cdot n)\leq \overline{c}\cdot \beta(n)$ for all $n\in\Nplus$, see \cite{He2}. \\
Indeed, if $\beta$ is subhomogeneous and $\beta_1\leqo\beta_2$, i.e., $\beta_1\leqa\beta_2$ for some $c\in\Nplus$,
then it follows $\beta\circ\beta_1(n)\leq\beta(c\cdot\beta_2(n))\leq \overline{c}\cdot \beta\circ\beta_2(n)$ for almost all $n\in\N$.
Conversely, if  $\beta_1\leqo\beta_2$ always implies $\beta \circ\beta_1\leqo\beta\circ\beta_2$, take 
$\beta_1(n)=c\cdot n$ and $\beta_2(n)=n$. Then we have $\beta(c\cdot n)\leq c'\cdot n$ for almost all $n$ with a suitably chosen $c'\in\Nplus$, and it follows $\beta(c\cdot n)\leq \overline{c}\cdot n$ for all $n\in\Nplus$, with some $\overline{c}\in \Nplus$.

It was shown in \cite{He2} that subhomogeneity is a natural property satisfied by a variety of bounds up to the polynomial ones. On the other hand, as already known from \cite{Ka}, superpolynomial bounds cannot be subhomogeneous.

For sets of bounds, $B_1$ and $B_2$,
let $B_1\leqo B_2$ mean that to any $\beta_1\in B_1$ there is a
$\beta_2\in B_2$ with $\beta_1\leqo\beta_2$. It is well-known that this suffices to imply $\XT(B_1)\subseteq \XT(B_2)$ for
all prefixes
 $\mbox{\rm X}\in\{\mbox{\rm D}, \mbox{\rm N}, \mbox{\rm A}\}$.

\begin{lemma}
For regular sets of bounds, $B_1$ and $B_2$, it holds \en
$ B_1\leqa B_2 \en \mbox{ iff } \en B_1\leqo B_2. $
\end{lemma}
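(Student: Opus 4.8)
The plan is to prove the two implications of the equivalence separately. One of them holds for arbitrary bound sets, while the other is exactly where the regularity of $B_2$ enters.

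The implication $B_1\leqa B_2 \Rightarrow B_1\leqo B_2$ needs no hypothesis on the sets: given $\beta_1\in B_1$, choose $\beta_2\in B_2$ with $\beta_1\leqa\beta_2$; since $\beta_1\leqa\beta_2$ is the same as $\beta_1\leqa 1\cdot\beta_2$, we have $\beta_1\in\Ord(\beta_2)$, i.e.\ $\beta_1\leqo\beta_2$. As $\beta_1$ was arbitrary, $B_1\leqo B_2$.

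For the converse, assume $B_1\leqo B_2$ and fix $\beta_1\in B_1$. By assumption there is $\beta_2\in B_2$ with $\beta_1\leqo\beta_2$, which by definition means $\beta_1\leqa c\cdot\beta_2$ for some constant $c\in\Nplus$. The key step is to absorb this constant factor inside $B_2$: since $B_2$ is regular, condition ii) of the definition of regularity yields (as noted right after that definition) some $\beta_2''\in B_2$ with $c\cdot\beta_2\leqa\beta_2''$. Transitivity of $\leqa$ then gives $\beta_1\leqa c\cdot\beta_2\leqa\beta_2''$ with $\beta_2''\in B_2$. Since $\beta_1\in B_1$ was arbitrary, this is precisely $B_1\leqa B_2$.

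I do not expect a genuine obstacle; the only point worth making explicit is why condition ii) supplies closure of $B_2$ under multiplication by a constant. For any bound $\beta$ we have $n\leq\beta(n)$ and $\beta$ nondecreasing, hence $\beta\leqa\beta\circ\beta$ and therefore $2\cdot\beta\leqa\beta+\beta\circ\beta$; applying ii) with $\beta'=\beta$ thus produces a member of $B_2$ dominating $2\cdot\beta_2$, and iterating this step $k$ times produces one dominating $2^k\cdot\beta_2$. Choosing $k$ with $2^k\geq c$ finishes the argument. In short, $\leqo$ and $\leqa$ differ only by multiplicative constants, and regularity is designed exactly so that such constants can be discarded; that observation is the whole content of the lemma.
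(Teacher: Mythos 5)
Your proof is correct and follows exactly the paper's argument: the forward direction is immediate with $c=1$, and the converse absorbs the constant $c$ into $B_2$ via regularity, invoking the remark after Definition~1 that condition~ii) yields some $\beta''\in B_2$ with $c\cdot\beta_2\leqa\beta''$. Your extra paragraph justifying that remark (using $\beta\leqa\beta\circ\beta$ and iterating to reach $2^k\geq c$) is a sound elaboration of a step the paper merely asserts, but it does not change the route.
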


\begin{proof}
Indeed, from $B_1\leqa B_2$ it always follows that $B_1\leqo B_2$; here the regularity is not needed.
Conversely, let $B_1\leqo B_2$. Thus, for any $\beta_1\in B_1$ there is a $\beta_2\in B_2$ with $\beta_1\,\leqo\,\beta_2$, i.e., $\beta_1\leqa\, c\cdot\beta_2$ for some $c\in\Nplus$.
If $B_2$ is regular, there is a $\beta_2'\in B_2$ satisfying $c\cdot \beta_2 \leqa \beta_2'$, hence $\beta_1\leqa\beta_2'$.
\qed
\end{proof}

In the sequel, we shall simply write $B_1\leq B_2$ instead of $B_1\leqa B_2$ or $B_1\leqo B_2$, for regular sets
$B_1$ and $B_2$. Let $\,\equiv\,$ denote the related equivalence relation, i.e.,
$B_1\,\equiv\, B_2$  iff both $B_1\leq B_2$  and $B_2\leq B_1$. This means that
 $B_1$ and $B_2$ are confinal. Finally, $B_1 < B_2$ means that $B_1\leq B_2$ but not $B_2\leq B_1$, i.e.,  $B_1\,\not\equiv\, B_2$. One easily verifies that
\[\Blin < \Blogs < \,\cdots \,< \Bqmlin{3} < \Bqmlin{2}< \Bqmlin{1} =\Bqlin
< \Bpol<\Bqpol<\Bhex.\]

So we have a strict linear ordering between the regular sets introduced so far.
In the next section, we shall show how the comparability of regular sets with respect to $\leq$ can be enforced by a suitable restriction of the universe of time bounds. Of course, the remaining subuniverse should enable us to represent all those regular sets which yield essential complexity classes.
More precisely, we say that a bound set $B$ is {\it represented in}\/ a universe (i.e., a set of bounds) $U'$ iff there is a subset
\[ B' \, \subseteq \, U' \; \mbox{ with } \; B\eqa B'.\]
Then we have
$\XT(B)  = \XT(B')$ for all prefixes $\mbox{\rm X}\in\{\mbox{\rm D}, \mbox{\rm N}, \mbox{\rm A}\}$.
This means that the complexity classes related to $B$ can also be obtained by the subset $B'$ of $U'$.

The following lemma shows that, in order to represent all regular subsets of an arbitrary universe of time bounds, this can be restricted to its time-constructible elements.

\begin{lemma}
Let $U$ be a set of bounds and $B$ be a regular subset of $U$.
Then $B$ is represented in the subuniverse $U'=\{\beta\in U: \beta \mbox{ is time-constructible}\,\}$.
\end{lemma}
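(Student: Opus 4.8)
The plan is to take $B'$ to consist of exactly the time-constructible members of $B$, that is, $B'=\{\beta\in B:\beta\text{ is time-constructible}\}=B\cap U'$. First I would observe that $B'\subseteq U'$ holds by construction, and that $B'$ is non-empty: being regular, $B$ is non-empty, so pick any $\beta\in B$; condition i) of regularity supplies a time-constructible $\beta'\in B$, and this $\beta'$ lies in $B'$. Note also that the $\beta'$ produced by i) is a legitimate element of $U'$ precisely because $\beta'\in B\subseteq U$ and $\beta'$ is time-constructible.

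Next I would verify the confinality $B\eqa B'$. The inclusion $B'\subseteq B$ immediately yields $B'\leqa B$, since inclusion of bound sets implies $\leqa$ and $\leqa$ is reflexive. For the reverse relation $B\leqa B'$, I would invoke condition i) once more: given an arbitrary $\beta\in B$, there is a time-constructible $\beta'\in B$ with $\beta\leqa\beta'$, and such a $\beta'$ belongs to $B'$ — which is exactly what $B\leqa B'$ requires. Combining the two directions gives $B\eqa B'$, and by the observation stated just before the lemma this entails $\XT(B)=\XT(B')$ for all $\mbox{\rm X}\in\{\mbox{\rm D},\mbox{\rm N},\mbox{\rm A}\}$. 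Hence $B$ is represented in the subuniverse $U'$.

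I do not expect any real obstacle: the statement is essentially an unfolding of the definition of regularity (only clause i) is used) together with the definition of ``represented in''. The only two points deserving an explicit word are the non-emptiness of $B'$ and the membership $\beta'\in U'$ of the witnesses furnished by i), both of which were handled above. It is worth emphasizing that $B'$ need not itself be regular; the notion of representation asks only for a confinal subset of $U'$, not for a regular one, so no further closure properties of $B'$ have to be checked.
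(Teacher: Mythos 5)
Your proof is correct and takes essentially the same route as the paper: in both arguments, $B'\leqa B$ is immediate and $B\leqa B'$ follows from condition i) of regularity, which supplies a time-constructible majorant inside $B$. The only difference is the choice of representing set --- the paper uses the larger set $\{\beta'\in U':\ \beta_1\leqa\beta'\leqa\beta_2 \mbox{ for some } \beta_1,\beta_2\in B\}$ instead of your $B\cap U'$ --- and both work; incidentally, by Lemma 6 your $B'$ is automatically regular (it consists of time-constructible bounds and is confinal to the regular set $B$), so your closing caveat, while accurate about what the definition of representation demands, is not actually needed.
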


\begin{proof} For a regular $B\subseteq U$, let
\[B'= \{\beta'\in U': \: \mbox{ there are $\beta_1,\beta_2\in B$ such that $\beta_1\leqa\beta'\leqa\beta_2$}\,\}.\]
We have to show that $B\equiv B'$.
If $\beta'\in B'$, there is a $\beta_2\in B$ with $\beta'\leqa \beta_2$, hence it holds $B'\leqa B$.
For $\beta\in B$, by condition i) of Definition 1, there is a time-constructible $\beta'\in B\subseteq U$ with $\beta\leqa\beta'$.
From $\beta\leqa\beta'\leqa\beta'$, it follows that $\beta'\in B'$. So we have shown that $B\leqa B'$.
\qed
\end{proof}

The next lemma shows that all sets representing regular sets in the sense of Lemma 5 are themselves regular.

\begin{lemma}
If a set $B$ consists of time-constructible bounds only and is confinal to a regular set, then $B$ is regular, too.
\end{lemma}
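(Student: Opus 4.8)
The plan is to verify the two defining conditions of regularity for $B$ directly, transferring them from the regular set $B_0$ to which $B$ is confinal. Let $B_0$ be regular with $B\eqa B_0$, and recall that this means $B\leqa B_0$ and $B_0\leqa B$. The key point is that confinality lets us move back and forth between $B$ and $B_0$ up to $\leqa$, while the hypothesis that every element of $B$ is time-constructible takes care of condition i) for free.

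For condition i), let $\beta\in B$. Since $\beta$ is itself time-constructible and $\beta\leqa\beta$, we may simply take $\beta'=\beta\in B$; the required time-constructible witness in $B$ is $\beta$ itself. No appeal to $B_0$ is needed here, and this is the step that makes the lemma work at all.

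For condition ii), let $\beta,\beta'\in B$. Using $B\leqa B_0$, pick $\gamma,\gamma'\in B_0$ with $\beta\leqa\gamma$ and $\beta'\leqa\gamma'$. By regularity of $B_0$ (condition ii) applied to $\gamma,\gamma'$), there is $\gamma''\in B_0$ with $\gamma+\gamma'\circ\gamma\leqa\gamma''$. Now use $B_0\leqa B$ to find $\beta''\in B$ with $\gamma''\leqa\beta''$. It remains to chain the inequalities: by Lemma~1 (monotony of $+$, $\cdot$, $\circ$ with respect to $\leqa$), from $\beta\leqa\gamma$ and $\beta'\leqa\gamma'$ we get $\beta+\beta'\circ\beta\leqa\gamma+\gamma'\circ\gamma$, and then transitivity of $\leqa$ gives $\beta+\beta'\circ\beta\leqa\gamma+\gamma'\circ\gamma\leqa\gamma''\leqa\beta''$, as required. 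Finally, $B$ is non-empty because $B_0$ is and $B_0\leqa B$.

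The only place one has to be a little careful is in the composition step: applying $\leqa$-monotony of $\circ$ requires exactly the form of Lemma~1, which holds for time bounds (the special monotonicity property $m\le\beta(m)$ is used there), so it is legitimate here. Beyond that, the argument is a routine transitivity chase, and the genuinely load-bearing observation is the trivial one in condition i): since the elements of $B$ are assumed time-constructible, each $\beta\in B$ serves as its own constructible majorant, which is precisely what fails for a general confinal set.
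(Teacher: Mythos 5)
Your proof is correct and follows essentially the same route as the paper: condition i) is immediate from the time-constructibility hypothesis, and condition ii) is transferred from the regular set via Lemma~1 (monotony of $+$ and $\circ$ under $\leqa$) together with confinality in both directions, exactly as in the paper's argument.
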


\begin{proof}
Let $B\eqa B_1$ for some regular set $B_1$. Notice that both $B_1$ and $B$ have to be non-empty.
Since all $\beta\in B$ are time-constructible, condition i) of Definition 1 holds trivially.
For any $\beta,\beta'\in B$, there are $\beta_1,\beta_1'\in B_1$ such that $\beta\leqa\beta_1$ and $\beta'\leqa\beta_1'$. By Lemma 1 and due to the regularity of $B_1$, it follows that
$\beta+\beta'\circ\beta \leqa \beta_1+\beta_1'\circ\beta_1\leqa \beta_1''$ for some $\beta_1''\in B_1$.
Since $B_1\leqa B$, there is a $\beta''\in B$  with $\beta_1''\leqa \beta''$. This proves condition ii) of regularity for the set $B$.
\qed
\end{proof}

The set of all bound functions, i.e., the maximal {\it universe of bounds}\/,
\[ U\imax = \{ \beta:\, \mbox{$\beta$ is a time bound}\,\},\]
has the cardinality $2^{\aleph_0}$ of the continuum.
Lemmas 5 and 6 show that, in order to represent all regular sets of bounds, we can restrict ourselves to regular subsets of the 
countable sub-universe of the time-constructible bounds,
\[ U'\imax = \{ \beta:\, \mbox{$\beta$ is a time-constructible bound}\,\}.\]

\section{Tameness and regularity}

To guarantee a linear ordering between all (confinality classes of) regular sets, 
it is useful to restrict oneself to universes $U$ of bound functions whose orders of growth are pairwise comparable with each other in a strong manner.
 This leads to the following notion.

\begin{definition}[Tameness]
A set $U$ of bounds is said to be tame iff for any two $\beta_1,\beta_2\in U$ the limit of the sequence of the quotients exists, i.e.,
\[\lim_{n\to \infty} \,\mbox{\normalsize $\frac{\beta_1(n)}{\beta_2(n)}$}\, =\,\limsup_{n\to \infty} \,\mbox{\normalsize $\frac{\beta_1(n)}{\beta_2(n)}$}\,
=\,\liminf_{n\to \infty} \,\mbox{\normalsize $\frac{\beta_1(n)}{\beta_2(n)}$}\,
\in \:\R \,\cup\,\{\,\infty\,\}.\]
\end{definition}

Notice that for all $\beta_1,\beta_2\in U$ in a tame universe $U$ we have
$\beta_1 \leqo\beta_2$ iff $\,\lim_{n\to \infty} \,\mbox{\normalsize $\frac{\beta_1(n)}{\beta_2(n)}$}<\infty$ iff $\,\lim_{n\to \infty} \,\mbox{\normalsize $\frac{\beta_2(n)}{\beta_1(n)}$}>0$, and it holds
$\beta_1 \leo\beta_2$ iff $\,\lim_{n\to \infty} \,\mbox{\normalsize $\frac{\beta_1(n)}{\beta_2(n)}$}=0$ iff $\,\lim_{n\to \infty} \,\mbox{\normalsize $\frac{\beta_2(n)}{\beta_1(n)}$}=\infty$.
Moreover, $\beta_1 \leo\beta_2$ is then equivalent to $\beta_1 \leqo\beta_2$ but not $\beta_2 \leqo\beta_1$, the latter will also be written as $\beta_2 \nleqo\beta_1$.

It is easily seen that the union of all regular sets introduced so far,
\[U_0\, =\, \Blin \,\cup\, \Blogs \,\cup\ \bigcup\nolimits_{m\in\Nplusi} \! \Bqmlin{m} \, \cup \,\Bpol\,\cup \,\Bqpol \,\cup\,\Bhex\,,\]
is a tame set of time-constructible bounds. This remains valid for many supersets obtained from $U_0$ by adding several (sets of) natural bounds which are usually considered in structural complexity theory.

Any two subsets of a tame universe, are comparable with respect to $\leqo$:

\begin{lemma}
For any non-empty subsets $B_1,B_2\subseteq U$ of a tame universe $U$, we have $B_1\leqo B_2$ or $B_2\leqo B_1$.
\end{lemma}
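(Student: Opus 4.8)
The plan is to reduce the statement about sets to a statement about individual bounds, then exploit tameness. The key observation is the following dichotomy: for a non-empty subset $B\subseteq U$ of a tame universe $U$ and a bound $\gamma\in U$, either there is some $\beta\in B$ with $\gamma\leqo\beta$, or else $\beta\leo\gamma$ for every $\beta\in B$ (using that in a tame universe $\beta\nleqo\gamma$ is equivalent to $\gamma\leo\beta$, as noted in the remark after Definition~3). In the second case $B\leqo\{\gamma\}$ in a particularly strong sense.

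First I would argue by contradiction: suppose $B_1\nleqo B_2$ and $B_2\nleqo B_1$. From $B_1\nleqo B_2$ pick a witness $\beta_1\in B_1$ such that no $\beta_2\in B_2$ satisfies $\beta_1\leqo\beta_2$; by tameness this means $\beta_2\leo\beta_1$ for \emph{every} $\beta_2\in B_2$, so in particular $\beta_2\leqo\beta_1$ for every $\beta_2\in B_2$, i.e.\ $B_2\leqo\{\beta_1\}\leqo B_1$ (the last step because $\beta_1\in B_1$). But $B_2\leqo B_1$ contradicts the assumption $B_2\nleqo B_1$. Hence at least one of $B_1\leqo B_2$, $B_2\leqo B_1$ must hold.

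The only place where care is needed is the transfer from ``$\beta_1\nleqo\beta_2$ for all $\beta_2\in B_2$'' to ``$\beta_2\leqo\beta_1$ for all $\beta_2\in B_2$''. This is exactly the content of tameness together with the chain of equivalences recorded immediately after Definition~3: in a tame universe, for any $\beta_1,\beta_2\in U$ we have $\beta_1\leqo\beta_2$ or $\beta_2\leo\beta_1$ (because the limit of the quotient is either finite or $\infty$, and $\infty$ gives $\beta_2\leo\beta_1$, which in turn implies $\beta_2\leqo\beta_1$). So the ``main obstacle'' is really just making sure one invokes tameness in the right direction; there is no genuine difficulty, and no regularity of $B_1,B_2$ is needed — only that both are non-empty so that a witness $\beta_1$ and comparison elements $\beta_2$ actually exist.

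One small subtlety worth flagging in the write-up: the definition of $B_1\nleqo B_2$ unpacks as ``there exists $\beta_1\in B_1$ such that for all $\beta_2\in B_2$, $\beta_1\nleqo\beta_2$'', which is precisely the existential witness the argument uses, so the negation is handled cleanly without any choice beyond picking that single $\beta_1$. I would present the proof in just a few lines, citing the tameness dichotomy for pairs of bounds and then the two-line contradiction above.
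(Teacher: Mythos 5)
Your proof is correct and follows essentially the same route as the paper: an indirect argument that takes a witness $\beta_1\in B_1$ from the negation of $B_1\leqo B_2$ and uses the tameness dichotomy to flip the comparison to $\beta_2\leo\beta_1$ for all $\beta_2\in B_2$. In fact your version is marginally leaner, since you conclude $B_2\leqo B_1$ directly from that single witness, whereas the paper picks a second witness $\beta_2^0\in B_2$ and derives the contradiction from the two limits being simultaneously $0$.
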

\begin{proof}
We conclude indirectly. If neither $B_1\leqo B_2$ nor $B_2\leqo B_1$,
there is a $\beta_1^0\in B_1$ such that $\beta_1^0 \nleqo \beta_2$, hence $\beta_2 \leo \beta_1^0$, for all $\beta_2\in B_2$ because of the tameness of $U$.
Analogously, there is a $\beta_2^0\in B_2$ such that $\beta_2^0 \nleqo \beta_1$, i.e., $\beta_1 \leo \beta_2^0$, for all $\beta_1\in B_1$.
It follows
$\,\lim_{n\to \infty} \frac{\beta_2^0(n)}{\beta_1^0(n)}= 0 =
\lim_{n\to \infty} \frac{\beta_1^0(n)}{\beta_2^0(n)}$, a contradiction.
\qed
\end{proof}

In particular, any two regular subsets $B_1$ and $B_2$ of a tame universe $U$ are comparable with each other with respect to $\leq$ (meaning $\leqa$ or $\leqo$, see Lemma 4).

Since $\beta\leqa\beta\circ\beta$ for any time bound $\beta$, we have
\[B\,\leqa \,\{\beta\circ\beta:\,\beta\in B\} \,\subseteq \,\{\beta_1\circ\beta_2:\: \beta_1,\beta_2\in B\}\]
for any non-empty set $B$ of bounds. Moreover, for time bounds $\beta$ and $\beta'$, it always holds that $\beta'\circ\beta\leqa \beta+\beta'\circ\beta$.
So, by condition ii) of Definition 1, we obtain
\[\{\beta_1\circ\beta_2:\: \beta_1,\beta_2\in B\}\,\leqa \,B\]
for every regular set $B$.
Conversely, this condition implies the regularity for all sets $B$ consisting of time-constructible bounds and satisfying $\Blin\leqa B$. We here omit the proof of this assertion and proceed with a stronger result.
It says that if a certain tameness is supposed in addition, the regularity of $B$
even follows from a weaker condition.

\begin{proposition}[Criterion of regularity]
Let $\Blin\leqa B$ for a set $B$ of time-constructible bounds such that $B\cup\{\beta_{\iid}\}$ is tame. Then $B$ is regular iff $\:\{\beta\circ\beta:\,\beta\in B\}\,\leqa\, B$.
\end{proposition}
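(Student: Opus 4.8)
The ``only if'' direction needs nothing new: it was already observed just before the statement that every regular $B$ satisfies $\{\beta_1\circ\beta_2:\beta_1,\beta_2\in B\}\leqa B$, and since $\{\beta\circ\beta:\beta\in B\}$ is contained in that set, transitivity of $\leqa$ on sets of bounds gives $\{\beta\circ\beta:\beta\in B\}\leqa B$. So the content is the converse.

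So the plan is to assume that $B$ consists of time-constructible bounds, that $\Blin\leqa B$, that $B\cup\{\beta_{\iid}\}$ is tame, and that $\{\beta\circ\beta:\beta\in B\}\leqa B$, and to verify Definition 1. Condition i) is immediate, since every $\beta\in B$ is already time-constructible (take $\beta'=\beta$). The real work is condition ii), and I would route it through the auxiliary claim that \emph{$B$ is closed under multiplication by positive constants up to $\leqa$}, i.e.\ for each $\beta\in B$ and $c\in\Nplus$ there is a $\delta\in B$ with $c\cdot\beta\leqa\delta$. This claim I would prove by the tameness dichotomy relative to $\beta_{\iid}$: if $\beta$ is superlinear, then $\beta(m)\geq c\cdot m$ for almost all $m$, and substituting $m=\beta(n)$ (legitimate because $\beta(n)\geq n\to\infty$) gives $c\cdot\beta\leqa\beta\circ\beta$, so the hypothesis supplies $\delta\in B$ with $\beta\circ\beta\leqa\delta$; if $\beta$ is not superlinear, tameness of $B\cup\{\beta_{\iid}\}$ forces $\beta\leqo\beta_{\iid}$, so $c\cdot\beta$ is dominated by a linear bound, and $\Blin\leqa B$ supplies the required $\delta$.

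Granting the claim, condition ii) becomes bookkeeping. Given $\beta,\beta'\in B$, tameness makes them $\leqo$-comparable, so one of the two dominates the other up to a constant; in either case the claim yields a single $\gamma\in B$ with $\beta\leqa\gamma$ and $\beta'\leqa\gamma$. Then Lemma 1 gives $\beta'\circ\beta\leqa\gamma\circ\gamma$, and the hypothesis $\{\beta\circ\beta:\beta\in B\}\leqa B$ gives $\eta\in B$ with $\gamma\circ\gamma\leqa\eta$. Since $\gamma(n)\geq n$ forces $\gamma\leqa\gamma\circ\gamma$, we also get $\beta\leqa\eta$, hence $\beta+\beta'\circ\beta\leqa 2\cdot\eta$; a final use of the claim produces a $\beta''\in B$ with $2\cdot\eta\leqa\beta''$, which is exactly condition ii).

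The delicate point — and the reason both the tameness hypothesis and $\Blin\leqa B$ are genuinely needed — is the auxiliary closure claim: a lone squaring $\beta\circ\beta$ does not by itself absorb a scalar factor when $\beta$ grows only linearly, so that case must be split off by tameness and handled through $\Blin\leqa B$. Everything past the claim is a routine chain of $\leqa$-inequalities using Lemma 1 and the elementary facts $\gamma(n)\geq n$ and monotonicity of the bounds.
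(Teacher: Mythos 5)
Your argument is correct, and it rests on exactly the same two mechanisms as the paper's proof: the tameness dichotomy (each $\beta\in B$ is either superlinear or satisfies $\beta\leqo\beta_{\iid}$), with constants absorbed by $\Blin\leqa B$ in the linearly bounded case and by composition in the superlinear case. The packaging differs, though. The paper runs a direct three-case analysis on the pair $(\beta,\beta')$ versus $\beta_{\iid}$ and, in the superlinear cases, absorbs the constant factor via $c\cdot\delta\leqa\delta\circ\delta$ applied to $\delta=\beta\circ\beta$ (resp.\ $\delta=\beta'\circ\beta'$), which forces a nested double application of the hypothesis ($\beta\circ\beta\leqa\beta_1$, then $\beta_1\circ\beta_1\leqa\beta_2$). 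You instead factor out a standalone scalar-absorption claim ($c\cdot\beta\leqa\delta$ for some $\delta\in B$), use it to build a common majorant $\gamma$ of $\beta$ and $\beta'$, and then need only one application of the hypothesis in the main chain plus a final absorption of the factor $2$. Your version is slightly more modular and handles the mixed cases uniformly; the paper's is more direct but repeats the composition trick case by case. Both correctly isolate the delicate point you name, namely that a lone squaring cannot absorb a scalar when $\beta$ grows only linearly, which is where $\Blin\leqa B$ and tameness are genuinely used.
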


\begin{proof}
By the preceding remark, we have
$\{\beta\circ\beta:\,\beta\in B\}\subseteq \{\beta_1\circ\beta_2:\: \beta_1,\beta_2\in B\}\leqa B$
for any regular $B$.
Now it is shown that, under the further suppositions from Proposition 1, $\{\beta\circ\beta:\,\beta\in B\}\leqa B$ implies condition ii) of regularity of $B$. For any given bounds $\beta,\beta'\in B$, we consider the following three possible cases. \\
Case 1: $\beta'\leqo \beta$ and $\beta\leqo\beta_{\iid}$.
\\
Then we also have $\beta'\leqo\beta_{\iid}$, and it follows $\beta+\beta'\circ\beta\leqo\beta_{\iid}$,
hence $\beta+\beta'\circ\beta\leqo\beta_0$ for a suitable $\beta_0\in B$.
\\
Case 2: $\beta'\leqo \beta$ and $\beta\nleqo\beta_{\iid}$. \\
Since $B\cup\{\beta_{\iid}\}$ is tame and $\beta\nleqo\beta_{\iid}$, we have $\beta_{\iid}\leo\beta$.
This means that $\beta$ is superlinear. Now it follows
\[ \beta+\beta'\circ\beta\,\leqa\, 2\cdot \beta\circ\beta \,\leqa\, (\beta\circ\beta)\circ (\beta\circ\beta).\]
By supposition, there are a $\beta_1\in B$ such that $(\beta\circ\beta)\leqa \beta_1$ as well as a
$\beta_2\in B$ such that $(\beta_1\circ\beta_1)\leqa \beta_2$. So we have  $\beta+\beta'\circ\beta\leqa \beta_2\in B$.
\\
Case 3: $\beta'\nleqo \beta$, thus $\beta\leo\beta'$ due to the tameness of $B\cup\{\beta_{\iid}\}$. \\
Then $\beta'$ is superlinear, and we have $\beta\leqa\beta'$ and $\beta'\circ\beta\leqa\beta'\circ\beta'\leqa\beta_1$,
with a suitable $\beta_1\in B$. Finally we get
\[ \beta+\beta'\circ\beta\,\leqa \,2\cdot \beta'\circ\beta' \,\leqa\, (\beta'\circ\beta')\circ (\beta'\circ\beta')\,\leqa \,\beta_2,\]
for some $\beta_2\in B$.
\qed
\end{proof}
In particular, the tameness of $B\cup\{\beta_{\iid}\}$ enforces that any bound $\beta\in B$
is either superlinear (i.e., $\beta_{\iid}\leo\beta$) or linearly bounded (i.e., $\beta\leqo\beta_{\iid}$).

It should be noticed that the suppositions of Proposition 1, namely the time-constructibility of all elements of $B$,
$\Blin\leqa B$ and the tameness of $B\cup\{\beta_{\iid}\}$, are fulfilled by a large variety of sets of bounds.
So the criterion of regularity given by Proposition 1 is useful in many cases concerning natural universes of time bounds. We still mention that the tameness of $B\cup\{\beta_{\iid}\}$ is equivalent to that of $B\cup\Blin$:
\begin{lemma}
For any set $B$ of bounds, $B\cup\{\beta_{\iid}\}$ is tame iff $B\cup\Blin$ is tame.
\end{lemma}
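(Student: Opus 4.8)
The plan is to dispose of the forward implication first, since it is immediate. The identity function $\beta_{\iid}(n)=n=1\cdot n$ belongs to $\Blin$, so $B\cup\{\beta_{\iid}\}\subseteq B\cup\Blin$. Because the defining condition of tameness in Definition 2 is universally quantified over pairs of elements of the universe, every subset of a tame set is tame; hence if $B\cup\Blin$ is tame, then so is $B\cup\{\beta_{\iid}\}$.

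For the converse, assume $B\cup\{\beta_{\iid}\}$ is tame, and verify the tameness condition for an arbitrary pair $\gamma_1,\gamma_2\in B\cup\Blin$ by a case distinction on where each of the two bounds lies. If both lie in $B$, then the limit of $\gamma_1(n)/\gamma_2(n)$ exists in $\R\cup\{\infty\}$ directly by the assumed tameness of $B\cup\{\beta_{\iid}\}$. If both lie in $\Blin$, write $\gamma_i(n)=c_i\cdot n$ with $c_i\in\Nplus$; then $\gamma_1(n)/\gamma_2(n)=c_1/c_2$ for every $n\geq 1$, so the limit exists trivially.

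The only genuinely new case is a mixed pair, say $\gamma_1\in B$ and $\gamma_2\in\Blin$ with $\gamma_2(n)=c\cdot n$ for some $c\in\Nplus$ (the other mixed case being symmetric, or obtainable by passing to reciprocals). Here I would invoke that $\gamma_1$ and $\beta_{\iid}$ both lie in $B\cup\{\beta_{\iid}\}$, so $L=\lim_{n\to\infty}\gamma_1(n)/n$ exists in $\R\cup\{\infty\}$. Then $\lim_{n\to\infty}\gamma_1(n)/(c\cdot n)$ equals $L/c$ when $L\in\R$ and equals $\infty$ when $L=\infty$; in either case it exists in $\R\cup\{\infty\}$, and the reciprocal limit exists as well. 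Since these cases are exhaustive, $B\cup\Blin$ is tame.

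I do not expect a real obstacle: the substance is simply that each element of $\Blin$ is a positive-constant multiple of $\beta_{\iid}$, so that controlling the order of growth against $\beta_{\iid}$ already controls it against every linear bound, while linear bounds are pairwise comparable among themselves in an utterly trivial way. The only point requiring a little care is keeping the three cases (both in $B$, both in $\Blin$, one in each) complete and treating the value $L=\infty$ in parallel with finite $L$.
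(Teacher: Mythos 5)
Your proof is correct and follows essentially the same route as the paper's: the forward direction by noting $B\cup\{\beta_{\iid}\}\subseteq B\cup\Blin$, and the converse by a case split in which the only substantive case is a mixed pair, handled by writing the linear bound as $c\cdot n$ and reducing the quotient limit to the one against $\beta_{\iid}$. You are merely a bit more explicit than the paper about the constant-versus-constant case, the reverse mixed order, and the value $L=\infty$.
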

\begin{proof}
The if-direction is trivial, since $B\cup\{\beta_{\iid}\}\subseteq B\cup\Blin$.
Now let $B\cup\{\beta_{\iid}\}$ be tame. Then both $B$ and $\Blin$ are tame. Thus, for $\beta_1,\beta_2\in B$ or
$\beta_1,\beta_2\in \Blin$ the sequence of quotients has a limit. If $\beta_1\in B$ and $\beta_2\in\Blin$, say $\beta_2(n)= c\cdot n$ with $c\in\Nplus$, then
\[\lim_{n\to \infty} \,\mbox{\normalsize $\frac{\beta_1(n)}{\beta_2(n)}$}\, =\,
  \lim_{n\to \infty} \,\mbox{\normalsize $\frac{\beta_1(n)}{c\cdot n}$}\,=\,
  \mbox{\normalsize $\frac{1}{c}$}\cdot\lim_{n\to \infty} \,\mbox{\normalsize $\frac{\beta_1(n)}{n}$},\]
and the latter limit exists due to the tameness of $B\cup\{\beta_{\iid}\}$.
\qed
\end{proof}

For a time bound $\beta$ and a set $B$ of bounds, respectively, we consider the {\it sets of iterations}\/:
\[ \It(\beta)= \{\beta\ite{m}:\, m\in\Nplus\} \en \mbox{ and } \en
\It(B)= \bigcup\nolimits_{\beta\in B} \! \It(\beta).\]

Since $\beta_{\iid}\leqa \beta$ for any bound $\beta$, Lemma 1 yields $\beta\ite{l}\leqa\beta\ite{k}$
 whenever $l\leq k$. For example, it follows
$\It(\beta)\,\eqa \, \{\beta\ite{2^m}:\, m\in\N\}$ and even $\It(\beta)\,\eqa \, \{\beta\ite{m}:\, m\in M\}$
for every infinite subset $M\subseteq\N$.

If $\{\beta\circ\beta:\,\beta\in B\}\leqa B$, then $\{\beta\ite{2^m}:\, m\in\N, \, \beta\in B\}\leqa B$, hence
we have $\It(B)\leqa B$. So Proposition 1 immediately yields
\begin{corollary}
Let $\Blin\leqa B$ for a set $B$ of time-constructible bounds such that $B\cup\{\beta_{\iid}\}$ is tame. Then $B$ is regular iff $\:\It(B)\,\leqa \,B$.
\qed
\end{corollary}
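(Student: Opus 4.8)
The plan is to read the Corollary off Proposition 1. Under precisely the hypotheses in force here --- $\Blin\leqa B$, all elements of $B$ time-constructible, and $B\cup\{\beta_{\iid}\}$ tame --- Proposition 1 already characterizes regularity of $B$ by the closure condition $\{\beta\circ\beta:\beta\in B\}\leqa B$. So it suffices to prove that, for any set $B$ of bounds,
\[\{\beta\circ\beta:\beta\in B\}\leqa B \qquad\Longleftrightarrow\qquad \It(B)\leqa B .\]
The implication from right to left is immediate: for each $\beta\in B$ we have $\beta\circ\beta=\beta\ite{2}\in\It(\beta)\subseteq\It(B)$, so $\It(B)\leqa B$ yields a $\beta''\in B$ with $\beta\circ\beta\leqa\beta''$.

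For the implication from left to right, suppose $\{\beta\circ\beta:\beta\in B\}\leqa B$. First I would show, by induction on $m\in\N$, that $\{\beta\ite{2^m}:\beta\in B\}\leqa B$. The case $m=0$ is trivial, since $\beta\ite{1}=\beta$. For the inductive step, fix $\beta\in B$; the induction hypothesis supplies a $\gamma\in B$ with $\beta\ite{2^m}\leqa\gamma$, and then Lemma 1 (monotony of composition with respect to $\leqa$) gives
\[\beta\ite{2^{m+1}}=\beta\ite{2^m}\circ\beta\ite{2^m}\,\leqa\,\gamma\circ\gamma,\]
while $\{\beta\circ\beta:\beta\in B\}\leqa B$ provides a $\beta''\in B$ with $\gamma\circ\gamma\leqa\beta''$, so $\beta\ite{2^{m+1}}\leqa\beta''$. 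Having this for all $m$, I would invoke the equivalence $\It(\beta)\eqa\{\beta\ite{2^m}:m\in\N\}$ recorded just before the Corollary (which rests on $\beta\ite{l}\leqa\beta\ite{k}$ for $l\leq k$, itself a consequence of $\beta_{\iid}\leqa\beta$ and Lemma 1): every member of $\It(\beta)$ is $\leqa$-dominated by some $\beta\ite{2^m}$, which in turn is $\leqa$-dominated by a member of $B$. Hence $\It(\beta)\leqa B$ for every $\beta\in B$, i.e.\ $\It(B)\leqa B$.

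There is no real obstacle: the Corollary is genuinely a corollary. The only thing demanding a little attention is the bookkeeping in the inductive step --- one must first replace $\beta\ite{2^m}\circ\beta\ite{2^m}$ by $\gamma\circ\gamma$ via Lemma 1 and only then apply the closure hypothesis, rather than attempting to compose $\beta$-iterates with elements of $B$ directly. Once the reduction to Proposition 1 is made, everything else is routine.
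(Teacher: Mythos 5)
Your proposal is correct and follows essentially the same route as the paper: reduce the Corollary to Proposition 1 by showing $\{\beta\circ\beta:\beta\in B\}\leqa B$ is equivalent to $\It(B)\leqa B$, using $\It(\beta)\eqa\{\beta\ite{2^m}:m\in\N\}$. The paper merely asserts the implication from the squaring condition to $\{\beta\ite{2^m}:m\in\N,\beta\in B\}\leqa B$ without writing out the induction, which you spell out correctly via Lemma 1.
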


Since the set operator $\,\It\,$ is idempotent, i.e., it always holds $\It(\,\It(B)\,)=\It(B)$,
for any set $B$ fulfilling the suppositions of Proposition 1 (or Corollary 1) the set $\It(B)$ is regular.

For any superlinear and time-constructible bound $\beta$, the set $B=\It(\beta)$ fulfills these suppositions. The required tameness of $\It(\beta)\cup\{\beta_{\iid}\}$ follows by Lemma 3 which yields: \[\beta_{\iid}\,\leo\,\beta\,\leo\,\beta\circ\beta\,\leo \,\beta\ite{3}\,\leo \,\beta\ite{4}\,
\leo \:\ldots\;. \]
If a bound $\beta$ is {\it ultimately linear}\/, i.e., there is a $c\in\Nplus$ such that $\beta(n)=c\cdot n$ for almost all $n\in\N$,
then $\It(\beta)\cup\{\beta_{\iid}\}$ is tame, too.
So we have
\begin{lemma}
Let $\beta$ be a time-constructible bound which is superlinear or ultimately linear. Then
 $\It(\beta)\cup\{\beta_{\iid}\}$ is tame, and $\,\It(\beta)\,$ is a regular set.
\qed
\end{lemma}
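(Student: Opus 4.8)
The plan is to verify the two assertions separately, using the two cases (superlinear, ultimately linear) in parallel wherever possible and invoking Corollary~1 for the regularity part.

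\emph{Tameness of $\It(\beta)\cup\{\beta_{\iid}\}$.} First I would dispose of the superlinear case, which is already essentially done in the text preceding the statement: by Lemma~3 (second assertion, applied repeatedly with $\beta$ in the role of the superlinear function and $\beta\ite{m}$ in the role of $\beta'$) one gets the chain $\beta_{\iid}\leo\beta\leo\beta\circ\beta\leo\beta\ite{3}\leo\cdots$, so any two distinct iterates are comparable via $\leo$, equal iterates give quotient limit $1$, and the quotient against $\beta_{\iid}$ also tends to $\infty$ (for $m\ge 1$) or equals $1$ (for the identity itself); hence every pairwise quotient limit exists in $\R\cup\{\infty\}$. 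For the ultimately linear case, say $\beta(n)=c\cdot n$ for all $n\ge n_0$ with $c\in\Nplus$: if $c=1$ then $\beta\eqa\beta_{\iid}$ and each $\beta\ite{m}\eqa\beta_{\iid}$, so the universe is confinal to $\{\beta_{\iid}\}$ and trivially tame; if $c\ge 2$ one computes $\beta\ite{m}(n)=c^m\cdot n$ for all sufficiently large $n$, so for iterates $\beta\ite{k},\beta\ite{l}$ the quotient is eventually the constant $c^{k-l}$, and against $\beta_{\iid}$ it is eventually $c^m$; all these limits lie in $\R$, so tameness holds.

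\emph{Regularity of $\It(\beta)$.} Here I would apply Corollary~1. Its hypotheses are: all elements time-constructible, $\Blin\leqa B$, and $B\cup\{\beta_{\iid}\}$ tame. Time-constructibility of each $\beta\ite{m}$ follows from Lemma~2 by induction on $m$ (composition of time-constructible bounds is time-constructible), using that $\beta$ itself is time-constructible. The condition $\Blin\leqa\It(\beta)$ holds because $\beta_{\iid}\leqa\beta$ (so every linear bound $c\cdot\beta_{\iid}$ satisfies $c\cdot\beta_{\iid}\leqa\,$? — careful here: one needs $c\cdot n\leqa\beta\ite{m}(n)$ for suitable $m$). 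In the superlinear case this is immediate since $\beta_{\iid}\leo\beta$ forces $c\cdot\beta_{\iid}\leo\beta$, so $c\cdot n\leqa\beta(n)$; in the ultimately linear case with $c\ge 2$ one has $c'\cdot n\leqa\beta\ite{m}(n)$ as soon as $c^m\ge c'$, and with $c=1$ the set $\It(\beta)$ is confinal to $\Blin$ anyway. Tameness of $\It(\beta)\cup\{\beta_{\iid}\}$ was just established. It remains to check $\It(\It(\beta))\leqa\It(\beta)$, but $\It$ is idempotent ($\It(\It(\beta))=\It(\beta)$ as remarked in the text, since iterating an iterate is again an iterate), so this is trivial. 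Corollary~1 then yields regularity.

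\emph{Main obstacle.} The only genuinely delicate point is the degenerate sub-case of the ultimately linear hypothesis where $c=1$ (or more generally where $\beta$ is not eventually strictly above $\beta_{\iid}$): there the iterates do not grow, $\It(\beta)\eqa\{\beta_{\iid}\}\eqa\Blin$, and one must be sure that $\Blin\leqa\It(\beta)$ still holds — which it does, because $\Blin\eqa\{\beta_{\iid}\}$ and $\beta_{\iid}\leqa\beta\in\It(\beta)$ while $\beta\leqa\beta_{\iid}$ eventually gives $\beta\eqa\beta_{\iid}$, so $\It(\beta)\eqa\Blin$ and the hypothesis of Corollary~1 is met. Everything else is a routine verification of the three bookkeeping hypotheses of the corollary. \qed
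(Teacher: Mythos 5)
Your proof follows exactly the route the paper intends: tameness of $\It(\beta)\cup\{\beta_{\iid}\}$ via the chain $\beta_{\iid}\leo\beta\leo\beta\circ\beta\leo\cdots$ from Lemma 3 in the superlinear case and a direct computation ($\beta\ite{m}(n)=c^m\cdot n$ almost everywhere) in the ultimately linear case, and then regularity through Corollary 1 after checking time-constructibility of the iterates (Lemma 2), $\Blin\leqa\It(\beta)$, tameness, and the idempotence of $\It$. For superlinear $\beta$, and for ultimately linear $\beta$ with factor $c\geq 2$, every step you give is correct and matches the paper's (implicit) argument.

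The one genuine flaw is your resolution of the degenerate sub-case $c=1$, i.e.\ $\beta\eqa\beta_{\iid}$. You assert there that $\Blin\eqa\{\beta_{\iid}\}$, hence $\It(\beta)\eqa\Blin$, so that the hypothesis $\Blin\leqa\It(\beta)$ of Corollary 1 is met. This is false: $\Blin\leqa\{\beta_{\iid}\}$ would require $2n\leqa n$, and the paper itself stresses (Section 4) that $\{\beta_{\iid}\}$ and $\Blin\setminus\{\beta_{\iid}\}$ are distinct classes. When $\beta\eqa\beta_{\iid}$ one has $\beta\ite{m}(n)=n$ for almost all $n$, so $\Blin\leqa\It(\beta)$ fails and Corollary 1 is simply not applicable. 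Moreover, no repair can succeed in this sub-case, because the regularity claim itself breaks down: condition ii) of Definition 1 demands some $\beta''\in\It(\beta)$ with $\beta_{\iid}+\beta_{\iid}\circ\beta_{\iid}=2n\leqa\beta''(n)$, which is impossible when every $\beta''\in\It(\beta)$ equals $n$ almost everywhere. So instead of forcing $c=1$ through Corollary 1, you should treat ``ultimately linear'' as $\beta(n)=c\cdot n$ almost everywhere with $c\geq 2$ (evidently the intended reading, cf.\ $\Blin\eqa\It(2n)$), or state explicitly that for $\beta\eqa\beta_{\iid}$ only the tameness assertion survives. With that degenerate case excluded, your proof coincides with the paper's.
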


The regular sets of bounds we considered at the beginning of Section 2, $\Bpol,\Blin$ etc., are not sets of iterations of single bounds, but obviously they are confinal to such ones:
\bea
\Blin & \eqa &\It(\beta) \;\mbox{ for } \beta(n)=2\cdot n,\\
\Bpol & \eqa &\It(\beta) \;\mbox{ for } \beta(n)=n^2,\\
\Bqmlin{m} & \eqa &\It(\beta) \;\mbox{ for } \beta(n)=n\cdot (\log\ite{m}(n)),\\
\Blogs& \eqa &\It(\beta) \;\mbox{ for } \beta(n)=n\cdot \log^*(n),\\
\Bqpol & \eqa &\It(\beta_1) \:\eqa \:\It(\beta_2) \;\mbox{ for } \beta_1(n)=n^{\log(n)}
                                                \mbox{ and } \beta_2(n)=n^{(\log(n))^2},\\
\Bhex & \eqa &\It(\beta) \;\mbox{ for } \beta(n)=2^n\,.
\eea
The regularity of $\Blin$ is obvious; the regularities of all the other sets of bounds we just mentioned follow by Lemmas 9 and 6.

Lemma 9 enables us to specify a lot of further regular sets of bounds.
In order to obtain universes of regular sets which are linearly ordered with respect to $\leq$, we shall employ some further tools that will be prepared in the next section.

Here, we still observe that the tameness of a set can be lost in applying the operator of iteration.
For example, let
\[ \beta_1(n) =\left\{ \begin{array}{rl}
                n^2 & \mbox{ if $n$ is even,}\\
                2n^2 & \mbox{ if $n$ is odd,}
                \end{array} \right.
                \qquad \mbox{ and }\en
                \beta_2(n)=n^4 \en\mbox{ for all $n\in\N$}. \]
The set $B=\{\beta_{\iid}, \beta_1,\beta_2\}$ is tame, but $\It(B)$ is not tame. Indeed,
\[ \beta_1\ite{2}(n) =\left\{ \begin{array}{rl}
                n^4 & \mbox{ if $n$ is even,}\\
                4n^4 & \mbox{ if $n$ is odd.}
                \end{array} \right. \]
So we have
\[\limsup\nolimits_{n\to \infty} \,\mbox{\normalsize $\frac{\beta_1^{\langle 2\rangle}(n)}{\beta_2(n)}$}\,=\,4\,\not=\, 1\,=\,
\liminf\nolimits_{n\to \infty} \,\mbox{\normalsize $\frac{\beta_1^{\langle 2\rangle}(n)}{\beta_2(n)}$}.\]
Even if $\It(B)$ is not tame in this example, it is confinal to rather simple tame sets:
\[\It(B)\,\equiv\,\It(\beta_1)\,\equiv\,\It(\beta_2).\]

\section{About iterations and useful properties of universes}

For time bounds $\beta_1$ and $\beta_2$ let $\beta_1\leqi\beta_2$ mean that $\It(\beta_1)\leqa\It(\beta_2)$.
$\leqi$ is a reflexive and transitive relation between time bounds. By $\eqi$ we denote the related equivalence relation:
\[\beta_1\,\eqi\,\beta_2 \en \mbox{ iff } \en \It(\beta_1)\,\eqa\,\It(\beta_2). \]
$\beta_1$ and $\beta_2$ are called {\it it-equivalent}\/ in this case.
By $\beta_1\lei\beta_2$, we mean that $\beta_1\leqi\beta_2$ but not $\beta_1\eqi\beta_2$. 

For example, any two iterations of a single bound $\beta$ are it-equivalent by Lemma 1,
\[ \beta\ite{l} \,\eqi \, \beta\ite{k} \en \mbox{ for all } l,k\in\Nplus\,.\]

Again by Lemma 1, $\beta_1\leqa\beta_2$ implies $\beta_1\ite{m}\leqa\beta_2\ite{m}$ for all $m\in\Nplus$, hence
$\It(\beta_1)\leqa\It(\beta_2)$, i.e., $\beta_1\leqi\beta_2$. Relation $\leqi$ is also a weakening of $\leqo$,
at least for superlinear bounds:
\begin{lemma}
If $\beta_1\leqo\beta_2$ and $\beta_2$ is a superlinear bound, then $\beta_1\leqi\beta_2$.
Thus, $\beta_1\lei\beta_2$ implies $\beta_2\nleqo\beta_1$, for superlinear $\beta_2$.
\end{lemma}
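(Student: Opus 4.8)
The plan is to show the first assertion by combining the hypothesis $\beta_1\leqo\beta_2$ with Lemma~3, and then derive the second assertion as a formal contrapositive. First I would note that $\beta_1\leqo\beta_2$ means $\beta_1\leqa c\cdot\beta_2$ for some $c\in\Nplus$. Since $\beta_2$ is superlinear, the second part of Lemma~3 applies: taking $\beta'=\beta_2$ there, we get $\beta_2\leo\beta_2\circ\beta_2$, and more generally $c\cdot\beta_2\leqa\beta_2\circ\beta_2$ for almost all $n$, because $\beta_2\circ\beta_2$ grows strictly faster than $\beta_2$ (so it eventually dominates any constant multiple of $\beta_2$). Hence $\beta_1\leqa\beta_2\circ\beta_2=\beta_2\ite{2}$. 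Now by Lemma~1, iterating this inequality gives $\beta_1\ite{m}\leqa(\beta_2\ite{2})\ite{m}=\beta_2\ite{2m}$ for all $m\in\Nplus$, so every member of $\It(\beta_1)$ is $\leqa$ some member of $\It(\beta_2)$; that is exactly $\It(\beta_1)\leqa\It(\beta_2)$, i.e.\ $\beta_1\leqi\beta_2$.

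For the second assertion, suppose $\beta_2$ is superlinear and $\beta_1\lei\beta_2$; I want $\beta_2\nleqo\beta_1$. Assume for contradiction that $\beta_2\leqo\beta_1$. In a setting where the first part applies we would need $\beta_1$ superlinear to conclude $\beta_2\leqi\beta_1$ directly; but in fact $\beta_2\leqo\beta_1$ together with $\beta_2$ superlinear forces $\beta_1$ to be superlinear as well (since $\beta_2\leqo\beta_1$ gives $\beta_{\iid}\leo\beta_2\leqo\beta_1$, and $c\cdot\beta_2\leqo\beta_1$ for a constant $c$ means $\beta_{\iid}\leo\beta_1$). So the first assertion applies with the roles reversed: $\beta_2\leqo\beta_1$ and $\beta_1$ superlinear yield $\beta_2\leqi\beta_1$. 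Combined with the already-noted implication $\beta_1\leqa\beta_2\Rightarrow\beta_1\leqi\beta_2$ — no wait, here we use instead the hypothesis $\beta_1\leqi\beta_2$ — we obtain both $\beta_1\leqi\beta_2$ and $\beta_2\leqi\beta_1$, hence $\beta_1\eqi\beta_2$, contradicting $\beta_1\lei\beta_2$.

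The main obstacle, such as it is, lies in getting the step $c\cdot\beta_2\leqa\beta_2\circ\beta_2$ cleanly from Lemma~3: Lemma~3 as stated gives the strict order-of-growth statement $\beta'\leo\beta\circ\beta'$, and one must translate $\beta_2\leo\beta_2\circ\beta_2$ into "$\beta_2\circ\beta_2$ eventually exceeds any fixed constant multiple of $\beta_2$", which is immediate from the definition of $\leo$ (the quotient $\frac{\beta_2\circ\beta_2(n)}{\beta_2(n)}\to\infty$, so it eventually exceeds $c$). A secondary point of care is the bookkeeping in the contrapositive: one must verify that $\beta_1$ inherits superlinearity, which is needed to invoke the first assertion with arguments swapped. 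Neither step is deep; the proof is essentially a two-line argument plus the observation that superlinearity propagates upward under $\leqo$.
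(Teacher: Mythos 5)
Your proof of the first assertion is exactly the paper's argument: use $\beta_1\leqa c\cdot\beta_2$, Lemma~3 to get $c\cdot\beta_2\leo\beta_2\circ\beta_2$, hence $\beta_1\leqa\beta_2\ite{2}$, and Lemma~1 to iterate to $\beta_1\ite{m}\leqa\beta_2\ite{2m}$. The second assertion, which the paper dismisses with ``Thus,'' you fill in correctly -- including the needed observation that $\beta_2\leqo\beta_1$ with $\beta_2$ superlinear forces $\beta_1$ to be superlinear, so the first assertion can be applied with the roles swapped to reach the contradiction $\beta_1\eqi\beta_2$.
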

\begin{proof}
$\beta_1\leqo\beta_2$ means that $\beta_1\leqa c\cdot \beta_2$ with a suitable constant $c\in\Nplus$.
If $\beta_2$ is superlinear, Lemma 3 yields $\beta_2\leo \beta_2\circ\beta_2$, hence
$c \cdot\beta_2\leo \beta_2\circ\beta_2$ and $\beta_1\leqa\beta_2\circ\beta_2$. By Lemma 1, it follows
 $\beta_1\ite{m}\leqa\beta_2\ite{2m}$ for all $m\in\Nplus$, thus $\It(\beta_1)\leqa\It(\beta_2)$.
\qed
\end{proof}

One should notice the special role of the identical bound $\beta_{\iid}$ with respect to it-equivalence.
Since $\It(\beta_{\iid})=\{\beta_{\iid}\}$, even the set $\Blin$ of linear bounds yields two equivalence classes with respect to $\eqi$, namely $\{\beta_{\iid}\}$ and $\Blin\setminus \{\beta_{\iid}\}$. For all 
$\beta\in\Blin\setminus \{\beta_{\iid}\}$ we have $\beta\leqo \beta_{\iid}$ but not $\beta\leqi \beta_{\iid}$.

\begin{definition}[it-embeddability and it-isomorphism]
For universes (i.e., sets) of bounds, $U_1$ and $U_2$, we say that $U_1$ is it-embeddable in $U_2$,
in symbols: $U_1 \emi U_2$, iff for any $\beta_1\in U_1$ there is a $\beta_2\in U_2$ such that $\beta_1\eqi\beta_2$.\\
$U_1$ and $U_2$ are said to be it-isomorphic iff they are mutually it-embeddable in each other:
\[ U_1 \isi U_2 \en \mbox{ iff }\en U_1 \emi U_2 \mbox{ and }\, U_2 \emi U_1. \]
\end{definition}

\begin{lemma}
If $U_1 \emi U_2$ and $U_2$ consists of time-constructible bounds only,
then every regular subset $B\subseteq \It(U_1)$ can be represented in $\It(U_2)$.
\end{lemma}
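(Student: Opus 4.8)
The plan is to take a regular subset $B\subseteq\It(U_1)$, find a confinal subset $B'\subseteq\It(U_2)$, and invoke Lemma 6 to conclude that $B'$ is automatically regular (since $\It(U_2)$ consists of time-constructible bounds by Lemma 2, because $U_2$ does, and iterates of time-constructible bounds are time-constructible). So the whole content is: build $B'$ and show $B\eqa B'$.

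First I would set
\[ B' \, = \, \{\, \gamma\in\It(U_2) :\; \exists\,\beta_1,\beta_2\in B \;\mbox{ with }\; \beta_1\leqa\gamma\leqa\beta_2 \,\}, \]
in direct analogy to the proof of Lemma 5. The inequality $B'\leqa B$ is immediate from the definition. For the reverse, $B\leqa B'$: take any $\beta\in B\subseteq\It(U_1)$, so $\beta=\alpha\ite{m}$ for some $\alpha\in U_1$ and $m\in\Nplus$. Since $U_1\emi U_2$, pick $\alpha'\in U_2$ with $\alpha\eqi\alpha'$, i.e.\ $\It(\alpha)\eqa\It(\alpha')$. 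From $\beta=\alpha\ite{m}\in\It(\alpha)\leqa\It(\alpha')$ we get some $\alpha'\ite{k}\in\It(U_2)$ with $\beta\leqa\alpha'\ite{k}$. To pin this down as an element of $B'$, I also need a lower bound from $B$: by the regularity of $B$ (condition ii) iterated, as noted after Definition 1, addition and constant multiples of elements of $B$ stay in $B$ up to $\leqa$), and by confinality inside $\It(U_1)$, I can find $\beta^\sharp\in B$ with $\beta\leqa\beta^\sharp$ and then, climbing one more step, arrange $\alpha'\ite{k}\leqa$ (something dominated by an element of $B$). More carefully: since $\It(\alpha')\eqa\It(\alpha)$, there is $\alpha\ite{l}$ with $\alpha'\ite{k}\leqa\alpha\ite{l}$; and since $\alpha\ite{l}\in\It(U_1)$ and $B$ is confinal... — here is where I have to be slightly careful, because $B$ need not be confinal to all of $\It(U_1)$, only a subset of it. So instead I argue: $\beta=\alpha\ite{m}\in B$ and $B$ regular gives $\beta\circ\beta\leqa\beta''$ for some $\beta''\in B$; meanwhile $\alpha'\ite{k}\leqa\alpha\ite{l}\leqa(\alpha\ite{m})\ite{l}=\beta\ite{l}\leqa\beta\ite{2^j}$ for suitable $j$, and iterating $\beta\circ\beta\leqa\beta''\in B$ a logarithmic number of times (as in Case 2 of Proposition 1) yields some $\beta'''\in B$ with $\beta\ite{2^j}\leqa\beta'''$. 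Then $\alpha'\ite{k}$ satisfies $\beta\leqa\alpha'\ite{k}\leqa\beta'''$ with $\beta,\beta'''\in B$, so $\alpha'\ite{k}\in B'$, and $\beta\leqa\alpha'\ite{k}$ witnesses $B\leqa B'$.

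Combining $B\leqa B'$ and $B'\leqa B$ gives $B\eqa B'$, so $B$ is represented in $\It(U_2)$. Finally, $B'\subseteq\It(U_2)$ consists of time-constructible bounds (Lemma 2 applied to iterates), so by Lemma 6 it is regular, completing the claim.

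The main obstacle is the step just flagged: closing the gap between ``$\alpha'\ite{k}$ is dominated by \emph{some} iterate of $\beta$'' and ``$\alpha'\ite{k}$ is dominated by an \emph{element of $B$}''. The clean way around it is exactly the logarithmic-iteration trick from the proof of Proposition 1 (Case 2): regularity of $B$ gives $\beta\circ\beta\leqa\beta''\in B$, and iterating this $\lceil\log_2 j\rceil$ times pushes $\beta\ite{2^j}$ below an element of $B$. Everything else is bookkeeping with $\leqa$, Lemma 1 (monotonicity of iteration), and the idempotence-type facts about $\It$ recorded before Definition 3.
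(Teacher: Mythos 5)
Your proposal is correct and follows essentially the same route as the paper: for each $\beta\in B$ you pass to an it-equivalent representative in $U_2$, use the confinality of the two iteration sets to dominate $\beta$ by an element of $\It(U_2)$, push that element back under an element of $B$ via the fact that $\It(B)\leqa B$ for regular $B$, and finish with Lemma 6. The only cosmetic differences are that you define $B'$ as a Lemma-5-style ``sandwich'' subset of $\It(U_2)$ whereas the paper takes $B'=\It(\{f(\beta):\beta\in B\})$ for a chosen map $f$ with $f(\beta)\eqi\beta$, and that you re-derive $\It(B)\leqa B$ by the doubling argument (with a harmless miscount: dominating $\beta^{\langle 2^j\rangle}$ needs about $j$, not $\lceil\log_2 j\rceil$, applications of condition ii)) instead of citing the remark preceding Corollary 1.
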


\begin{proof}
Let $U_1 \emi U_2$ and $B$ be a regular subset of $\It(U_1)$.
By the Axiom of Choice, there is a mapping $\fct{f}{B}{U_2}$ such that $f(\beta)\eqi\beta$ for all $\beta\in B$.
Now we put
\[B'\,=\, \It(\,\{f(\beta):\: \beta\in B\}\,).\]
For any $\beta\in B$, it holds $\It(f(\beta))\eqa\It(\beta)$, hence there is a $\beta'\in\It(f(\beta))$ with $\beta\leqa\beta'$. So we have $B\leqa B'$.

For any $\beta'\in B'$, there is a $\beta\in B$ such that $\beta'\in\It(f(\beta))$.
Since $\It(f(\beta))\eqa\It(\beta)$, there is a $\beta_0\in\It(\beta)$ with $\beta'\leqa\beta_0$.
Remember that $\It(B)\leqa B$ for every regular set $B$. Thus, there is a $\beta_1\in B$ with $\beta_0\leqa\beta_1$, and we have shown that $B'\leqa B$.

So we have $B'\eqa B$. By supposition of the lemma, $U_2$ consists of time-constructible bounds. Then this holds for
$\It(U_2)$, too, and so for its subset  $B'$. Lemma 6 shows that $B'$ is regular.
\qed
\end{proof}

By Lemma 11, if $U_1$ and $U_2$ are it-isomorphic universes both consisting of time-constructible bounds only, then
their iteration sets, $\It(U_1)$ and $\It(U_2)$,
yield systems of regular sets which are equivalent with respect to confinality. In particular, it immediately follows

\begin{proposition}
If $U_1 \isi U_2$ for universes $U_1$ and $U_2$ of time-constructible bounds, then
\[ \{\XT(B):\: B\subseteq \It(U_1) \mbox{ and $B$ regular}\,\}
 \, =\, \{\XT(B):\: B\subseteq \It(U_2) \mbox{ and $B$ regular}\,\}\]
for each prefix $\mbox{\rm X}\in\{\mbox{\rm D}, \mbox{\rm N}, \mbox{\rm A}\}$.
\qed
\end{proposition}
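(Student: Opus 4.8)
The plan is to obtain this proposition as an immediate consequence of Lemma 11 together with the principle, recorded in Section 2, that confinal sets of bounds yield the same complexity classes. Since the hypothesis $U_1 \isi U_2$ is symmetric in $U_1$ and $U_2$, it suffices to prove the inclusion of the left-hand family in the right-hand one; the reverse inclusion then follows verbatim after interchanging the roles of the two universes.

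So first I would fix a prefix $\mbox{\rm X}\in\{\mbox{\rm D}, \mbox{\rm N}, \mbox{\rm A}\}$ and a regular set $B\subseteq \It(U_1)$, and consider the class $\XT(B)$ lying in the left-hand family. Because $U_1 \isi U_2$ entails in particular $U_1 \emi U_2$, and because $U_2$ consists of time-constructible bounds by assumption, Lemma 11 applies directly and produces a set $B'\subseteq\It(U_2)$ with $B\eqa B'$; moreover, as established inside the proof of Lemma 11 (appealing to Lemma 6, since $\It(U_2)$ and hence $B'$ consists of time-constructible bounds), the set $B'$ is itself regular.

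Next I would invoke $B\eqa B'\ \Rightarrow\ \XT(B)=\XT(B')$, valid for every prefix $\mbox{\rm X}$. This shows $\XT(B)=\XT(B')$ with $B'\subseteq\It(U_2)$ a regular set, so $\XT(B)$ belongs to the right-hand family. Running the same argument with $U_2 \emi U_1$ in place of $U_1 \emi U_2$ yields the opposite inclusion, and the two families coincide, completing the proof.

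I do not expect any genuine obstacle here: the proposition is essentially a repackaging of Lemma 11 and the confinality–equality principle for complexity classes, and indeed it is already announced as an immediate corollary in the text preceding it. The only point deserving a remark is that the two-sided relation $\isi$ is genuinely needed rather than the one-sided $\emi$: a bare $U_1 \emi U_2$ would only give the inclusion $\{\XT(B):\ B\subseteq\It(U_1)\ \mbox{regular}\}\subseteq\{\XT(B):\ B\subseteq\It(U_2)\ \mbox{regular}\}$, and it is the it-embeddability in both directions that delivers equality of the two families.
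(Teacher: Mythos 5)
Your proposal is correct and follows exactly the paper's route: Proposition 2 is stated there as an immediate consequence of Lemma 11 (applied in both directions, using $U_1\emi U_2$ and $U_2\emi U_1$ with time-constructibility) together with the fact that confinal sets of bounds yield the same classes $\XT(\cdot)$. Your extra remark that the representing set $B'$ is regular (via Lemma 6, as in the proof of Lemma 11) correctly covers the only detail needed to place $\XT(B)$ in the right-hand family.
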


In many cases,
Lemma 11 and Proposition 2 even hold with the universes $U_i$ instead of $\It(U_i)$, for $i=1,2$.
For concise formulations, we introduce the notion of it-completeness of a universe $U$.
This is a weakening of the closedness under the set operator $\It$, which would mean that $\It(U)\subseteq U$.
Also a weakening of tameness will be employed in the sequel.

\begin{definition}[it-completeness and it-tameness]
A universe $U$ of bounds is called it-complete iff for any $\beta\in U$ there is a subset $B\subseteq U$ such that
$B\eqa \It(\beta)$.\\
$U$ is said to be it-tame iff any two $\beta_1,\beta_2\in U$ are comparable with respect to $\leqi$, i.e., it holds
$\beta_1\leqi\beta_2$ or $\beta_2\leqi\beta_1$.
\end{definition}

\begin{corollary}
Let $U_1$ and $U_2$ be it-complete sets of time-constructible bounds.\\
If $U_1 \emi U_2$, then every regular subset $B\subseteq U_1$ can be represented in $U_2$.
If $U_1 \isi U_2$, then
\[ \{\XT(B):\: B\subseteq U_1 \mbox{ and $B$ regular}\,\}
 \, =\, \{\XT(B):\: B\subseteq U_2 \mbox{ and $B$ regular}\,\}\]
for each prefix  $\mbox{\rm X}\in\{\mbox{\rm D}, \mbox{\rm N}, \mbox{\rm A}\}$.
\end{corollary}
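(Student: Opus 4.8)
The plan is to prove both assertions at once through a descent construction in the spirit of the proof of Lemma~11, but aimed directly at $U_2$ rather than at $\It(U_2)$; it-completeness of $U_2$ is precisely the hypothesis that powers this descent. (One could instead combine Lemma~11 with Proposition~2 and then descend from $\It(U_i)$ to $U_i$, but the direct route is shorter.)

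For the first assertion, let $B\subseteq U_1$ be regular. Using the Axiom of Choice twice, I would fix for every $\beta\in B$ a bound $\gamma_\beta\in U_2$ with $\beta\eqi\gamma_\beta$ (possible since $U_1\emi U_2$) and then a subset $B_\beta\subseteq U_2$ with $B_\beta\eqa\It(\gamma_\beta)$ (possible since $U_2$ is it-complete). As $\beta\eqi\gamma_\beta$ means $\It(\beta)\eqa\It(\gamma_\beta)$, this gives $B_\beta\eqa\It(\beta)$ for each $\beta\in B$. Put $B'=\bigcup_{\beta\in B}B_\beta\subseteq U_2$. Then $B\leqa B'$ is immediate, since each $\beta$ lies in $\It(\beta)\leqa B_\beta\subseteq B'$. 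Conversely, given $\beta'\in B_\beta$ for some $\beta\in B$, choose $\delta\in\It(\beta)$ with $\beta'\leqa\delta$; since $\delta\in\It(B)$ and $\It(B)\leqa B$ by regularity of $B$, there is $\beta_1\in B$ with $\delta\leqa\beta_1$, whence $\beta'\leqa\beta_1$. Thus $B'\eqa B$, i.e., $B$ is represented in $U_2$; moreover $B'$ consists of time-constructible bounds and is confinal to the regular set $B$, so $B'$ is itself regular by Lemma~6.

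For the second assertion it suffices, by symmetry, to show that the left-hand family is contained in the right-hand one. Given a regular $B\subseteq U_1$, the hypothesis $U_1\isi U_2$ gives in particular $U_1\emi U_2$, so the first assertion furnishes a regular $B'\subseteq U_2$ with $B\eqa B'$; since confinal bound sets define the same complexity classes, $\XT(B)=\XT(B')$ belongs to the right-hand family. Swapping the roles of $U_1$ and $U_2$ (via $U_2\emi U_1$) yields the reverse containment, completing the proof.

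The step I expect to demand the most care is the inequality $B'\leqa B$: one has to bring an arbitrary element of the union $\bigcup_{\beta\in B}B_\beta$ back into $B$, and this succeeds only because regularity of $B$ supplies the absorption property $\It(B)\leqa B$. The remaining ingredients — the defining equivalence $\It(\beta)\eqa\It(\gamma_\beta)$ behind $\beta\eqi\gamma_\beta$, the non-emptiness of $B'$, the bookkeeping with time-constructibility, and the final applications of Lemma~6 — are routine.
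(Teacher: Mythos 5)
Your proof is correct and is essentially the paper's own argument: the paper proves this corollary as "a slight modification of the proof of Lemma 11," and your construction is exactly that modification — replace the sets $\It(f(\beta))$ used there by confinal subsets $B_\beta\subseteq U_2$ supplied by it-completeness, verify $B\eqa\bigcup_\beta B_\beta$ via the absorption $\It(B)\leqa B$ for regular $B$, and invoke Lemma 6 for regularity. The deduction of the second assertion from the first (confinal regular sets yield the same classes, plus symmetry) also coincides with the paper's route.
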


\begin{proof}
The first assertion is shown by a slight modification of the proof of Lemma 11,
and it implies the second one.
\qed
\end{proof}

For example, the union of all regular sets introduced in Section 2
\[U_{00}\, =\, \Blin \,\cup\, \Blogs \,\cup\ \bigcup\nolimits_{m\in\Nplusi} \! \Bqmlin{m} \, \cup \,\Bpol\,\cup \,\Bqpol \,\cup\,\Bhex\,,\]
is it-complete. So, even if it does not contain all iterations of its elements, i.e., $U_{00}\not= \It(U_{00})$,
it is sufficient to represent all regular subsets of $\It(U_{00})$.
Indeed, we easily see that $U_{00}\,\isi\,\It(U_{00})$.

 $U_{00}$ is both tame and  it-tame. It can also be represented by a subuniverse of bounds which are strictly linearly ordered with respect to $\leqi$, as the next result shows.

\begin{proposition}
Let $U$ be an it-tame universe of bounds. Then there is an it-isomorphic subuniverse $U'\isi U$ such that
for any two $\beta_1,\beta_2\in U'$, it holds
\[ \beta_1=\beta_2 \en \mbox{ or }\en \beta_1 \lei \beta_2 \en \mbox{ or }\en \beta_2 \lei  \beta_1. \]
\end{proposition}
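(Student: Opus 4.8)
The plan is to build $U'$ as a \emph{transversal} of the equivalence relation $\eqi$ restricted to $U$, i.e. a subset of $U$ containing exactly one representative from each $\eqi$-class. First I would recall that $\eqi$ is an equivalence relation on the set of all bounds (indeed $\beta_1\eqi\beta_2$ iff $\It(\beta_1)\eqa\It(\beta_2)$, and $\eqa$ is an equivalence relation on sets of bounds), so it partitions $U$ into $\eqi$-classes. By the Axiom of Choice — exactly as in the proof of Lemma~11 — I would fix one element out of each class and collect them into $U'\subseteq U$; if $U=\emptyset$ then $U'=U$ works trivially.

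Next I would check $U'\isi U$. The direction $U'\emi U$ is immediate because $U'\subseteq U$ and every $\beta\in U'$ is it-equivalent to itself. For $U\emi U'$: given $\beta\in U$, the $\eqi$-class of $\beta$ contains precisely one element $\beta'\in U'$, and by construction $\beta\eqi\beta'$; this is the defining condition of $\emi$. Hence $U'\isi U$. Then I would verify the strict-order property: let $\beta_1,\beta_2\in U'$ with $\beta_1\neq\beta_2$. Since $U'$ meets each $\eqi$-class only once, $\beta_1$ and $\beta_2$ lie in different classes, so $\beta_1\neqi\beta_2$. Because $U'\subseteq U$ and $U$ is it-tame, either $\beta_1\leqi\beta_2$ or $\beta_2\leqi\beta_1$; say the former. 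Were $\beta_2\leqi\beta_1$ also true, then $\It(\beta_1)\eqa\It(\beta_2)$, i.e. $\beta_1\eqi\beta_2$, contradicting the above; hence $\beta_1\lei\beta_2$. Symmetrically in the other case. So for any $\beta_1,\beta_2\in U'$ we get $\beta_1=\beta_2$ or $\beta_1\lei\beta_2$ or $\beta_2\lei\beta_1$, as claimed.

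There is no serious obstacle here: the construction is essentially forced, and the only non-elementary ingredient is the choice of representatives. The one point that needs care — and which I would make explicit — is that $\lei$ is exactly the strict part of the preorder $\leqi$, which amounts to the observation that $\leqi$ is antisymmetric modulo $\eqi$ (i.e. $\beta_1\leqi\beta_2$ and $\beta_2\leqi\beta_1$ together force $\It(\beta_1)\eqa\It(\beta_2)$, hence $\beta_1\eqi\beta_2$). Combined with it-tameness of $U$, this turns the distinct elements of the transversal $U'$ into a strictly linearly ordered set under $\leqi$.
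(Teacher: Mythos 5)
Your proposal is correct and follows essentially the same route as the paper: the paper's proof also takes $U'$ to be a choice of one representative from each $\eqi$-class of $U$, notes $U'\emi U$ since $U'\subseteq U$, gets $U\emi U'$ from the uniqueness of representatives, and obtains the strict linear order from it-tameness. Your additional remark that $\beta_1\leqi\beta_2$ together with $\beta_2\leqi\beta_1$ forces $\beta_1\eqi\beta_2$ is a correct (and harmless) elaboration of what the paper leaves implicit.
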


\begin{proof}
Let $U'$ be any complete (but minimal) choice of bounds representing the different $\eqi$-classes of elements of $U$.
Then it is linearly ordered by $\lei$.
Since $U'\subseteq U$, we have $U'\emi U$. Conversely, for any $\beta\in U$ there is exactly one $\beta'\in U'$ with
$\beta \eqi \beta'$. Thus $U\emi U'$.
\qed
\end{proof}

\section{A dense universe of bounds between linear and polynomial time}

Due to Proposition 1 and Corollary 1, in order to simplify the check for regularity of a set of bounds, it is desirable to restrict the universe of bounds to such functions $\beta$ for which $\It(\beta)$ can arithmetically be characterized in some easy way. In the present section, we shall deal with a special type of such bounds. They are obtained by means of nondecreasing functions $\fct{\alpha}{\N}{\N}$ satisfying 
$1\leqa\alpha(n)\leqa n$.

For arbitrary functions $\fct{\alpha_1,\alpha_2}{\N}{\N}$, let $\alpha_1\,\leqpow\, \alpha_2$ mean that
for each $k\in\N$ there is an $l\in\N$ such that $\,\alpha_1^k\,\leqa\,\alpha_2^l$, i.e., $(\alpha_1(n))^k\leq(\alpha_2(n))^l$ for
almost all $n\in\N$.

The corresponding equivalence relation $\,\eqpow\,$ as well as  the strict ordering $\,\lepow\,$ are straightforwardly defined:
$\alpha_1\,\eqpow\, \alpha_2$ means that both $\alpha_1\,\leqpow\, \alpha_2$ and $\alpha_2\,\leqpow\, \alpha_1$; 
$\alpha_1\,\lepow\, \alpha_2$ indicates that $\alpha_1\,\leqpow\, \alpha_2$ but not  $\alpha_2\,\leqpow\, \alpha_1$. 

Another sharpening of $\leqpow$ is the relation $\,\llpow\,$;
 it will later be crucial in showing the tameness of certain universes of bounds. 
By $\alpha_1 \llpow\alpha_2$ we mean that $\alpha_1^k\leqa\alpha_2$ for all $k\in\N$.

\begin{lemma}
If $2\,\leqa\, \alpha_1\,\llpow\,\alpha_2$, then we have $\alpha_1^k \,\leo\,\alpha_2$ for all $k\in\N$,
and it holds neither  $\alpha_2\,\leqo\,\alpha_1\,$ nor $\,\alpha_2\,\leqa\,\alpha_1$.
\end{lemma}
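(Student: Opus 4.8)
The plan is to prove the first claim, $\alpha_1^k\leo\alpha_2$, for every $k\in\N$, and then to read off the two negative statements immediately from the special case $k=1$. The point to watch is that the obvious one-step estimate is too weak: from $\alpha_1^{\,k+1}\leqa\alpha_2$ one only gets $\alpha_1^k(n)/\alpha_2(n)\leq 1/\alpha_1(n)\leq 1/2$, which is a constant bound, not a null sequence. The idea is therefore to absorb the desired accuracy into the exponent, using that $\llpow$ supplies \emph{all} powers of $\alpha_1$ below $\alpha_2$, not just the next one.

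Concretely, I would fix $k\in\N$, let $\varepsilon>0$ be arbitrary, and choose $m\in\Nplus$ with $2^{-m}<\varepsilon$. Since $\alpha_1\llpow\alpha_2$ gives $\alpha_1^{\,k+m}\leqa\alpha_2$, and $2\leqa\alpha_1$ gives $\alpha_1(n)\geq 2$ for almost all $n$, one obtains for almost all $n$
\[ \frac{\alpha_1^k(n)}{\alpha_2(n)}\;\leq\;\frac{\alpha_1^k(n)}{\alpha_1^{\,k+m}(n)}\;=\;\frac{1}{(\alpha_1(n))^m}\;\leq\;2^{-m}\;<\;\varepsilon. \]
Since $\varepsilon>0$ was arbitrary and the quotient is nonnegative, this yields $\lim_{n\to\infty}\alpha_1^k(n)/\alpha_2(n)=0$, i.e. $\alpha_1^k\in\ord(\alpha_2)$, which is $\alpha_1^k\leo\alpha_2$. (Taking $k=0$ here incidentally also shows $\alpha_2(n)\to\infty$, a fact one implicitly needs when $\alpha_1$ happens to be bounded.)

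For the remaining assertions I would specialise to $k=1$: then $\alpha_1\leo\alpha_2$, so $\alpha_2(n)/\alpha_1(n)\to\infty$. Were $\alpha_2\leqo\alpha_1$ true, there would be a constant $c\in\Nplus$ with $\alpha_2\leqa c\cdot\alpha_1$, hence $\alpha_2(n)/\alpha_1(n)\leq c$ for almost all $n$ — impossible. Thus $\alpha_2\nleqo\alpha_1$; and since $\alpha_2\leqa\alpha_1$ would entail $\alpha_2\leqo\alpha_1$ (with $c=1$), the relation $\alpha_2\leqa\alpha_1$ fails as well.

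The only genuine obstacle is recognising that the naive estimate must be replaced by the extra-power trick; once the target accuracy $\varepsilon$ is folded into the exponent $m$, the rest is the elementary computation above, which uses nothing beyond $2\leqa\alpha_1$ (to make powers of $\alpha_1$ grow) and the defining property of $\llpow$ (to keep every such power below $\alpha_2$).
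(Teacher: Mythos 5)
Your proof is correct and follows essentially the same route as the paper: both arguments use the full strength of $\llpow$ together with $2\leqa\alpha_1$ to bound $\alpha_1^k(n)/\alpha_2(n)$ by $2^{-m}$ for every $m$ (the paper writes this as $2^l\cdot\alpha_1^k\leqa\alpha_1^{l+k}\leqa\alpha_2$), concluding that the quotient tends to $0$, and then both derive the two negative statements by noting that either $\alpha_2\leqo\alpha_1$ or $\alpha_2\leqa\alpha_1$ would give $\alpha_2\leqa c\cdot\alpha_1$, contradicting $\alpha_1\leo\alpha_2$.
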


\begin{proof}
The supposition implies that
$2^l\cdot \alpha_1^k \leqa \alpha_1^{l+k} \leqa \alpha_2$ for all $l,k\in\N$.
It follows $\frac{2^l\cdot \alpha_1^k(n)}{\alpha_2(n)}\,\leqa\,1$, 
hence $\frac{ \alpha_1^k(n)}{\alpha_2(n)}\,\leqa\,\frac{1}{2^l}$.
Thus, we have
$\lim_{n\to\infty}\frac{ \alpha_1^k(n)}{\alpha_2(n)} =0$, and this means $\alpha_1^k \,\leo\,\alpha_2$.

From   $\alpha_2\,\leqo\,\alpha_1$ or $\alpha_2\,\leqa\,\alpha_1$ it would follow that 
$\alpha_2\leqa c\cdot \alpha_1$ for some $c\in\Nplus$. This yields $\frac{\alpha_2(n)}{\alpha_1(n)}\,\leqa\, c$, a contradiction to 
$\alpha_1 \,\leo\,\alpha_2$.
\qed
\end{proof}

For example, we have $\alpha_1\eqpow\alpha_2$ for any two polynomials $\alpha_i(n)=n^{e_i}$ with $e_i\in\Nplus$ (for $i=1,2$).
It even holds $(\alpha(n))^e \eqpow \alpha(n)$ for all exponents $e\in\Nplus$ and arbitrary functions $\alpha$, but
 $\log\ite{m_1}(n) \lepow \log\ite{m_2}(n)$ and  $\log\ite{m_1}(n) \llpow \log\ite{m_2}(n)$ 
for all $m_1>m_2\geq 1$.

The {\it set of powers}\/ of a function $\alpha$ is defined by
\[ \Pow(\alpha)= \{ \alpha^m:\, m\in\Nplus\}.\]

The following property of f-consistency ensures that the function $\alpha$ taken as a factor in defining
$\beta(n)= n\cdot \alpha(n)$ yields a well controllable iteration set $\It(\beta)$. 

\begin{definition}[f-consistency]
A function $\fct{\alpha}{\N}{\N}$ is said to be f-consistent iff 
\[\{\alpha(n^k):\, k\in\N\} \,\leqa \,\Pow(\alpha).\]
 This means that to any $k\in\N$ there is an $l\in\N$ such that
$\alpha(n^k)\, \leqa\, (\alpha(n))^l \,$,\en  i.e.,
$\alpha(n^k) \leq (\alpha(n))^l \,$ for almost all $ n\in\N$.
\end{definition}
If $\alpha$ is supposed to be nondecreasing, then it is already f-consistent if there is an  $l\in\N$ such that
$\alpha(n^2)\,\leqa\, (\alpha(n))^l$.

The following lemma provides a useful arithmetical characterization of $\It(\beta)$ for a variety of time bounds $\beta$ between the linear and the polynomial ones.

\begin{lemma}
Let $\beta(n)= n\cdot \alpha(n)$ with a nondecreasing, f-consistent function $\alpha$ such that $1\,\leqa\, \alpha(n)$
and $\alpha(n)\,\leqa\, n$. Then 
\[\It(\beta)\, \eqa\, \{\, n\cdot \alpha'(n): \,\alpha'\in\Pow(\alpha)\,\}.\]
If $\beta_i(n)=n\cdot\alpha_i(n)$ with  nondecreasing, f-consistent functions $\alpha_i$ satisfying 
$1\,\leqa\, \alpha_i(n) \,\leqa\, n$ for $i=1,2$, then 
\[\beta_1\,\leqi\,\beta_2 \en \mbox{ iff } \en \alpha_1\,\leqpow\,\alpha_2 \quad \mbox{ and } \quad 
\beta_1\,\lei\,\beta_2 \en \mbox{ iff } \en \alpha_1\,\lepow\,\alpha_2,\]
and from $\,2\,\leqa\, \alpha_1\,\llpow\,\alpha_2\,$ it follows that $\,\beta_1\,\lei\,\beta_2\,$.
\end{lemma}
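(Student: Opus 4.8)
The plan is to prove the three assertions in turn, the core being the arithmetical description of $\It(\beta)$; once that is available, the two comparison statements follow by unwinding definitions together with the lemma on $\llpow$ proved above. I expect the induction underlying the description of $\It(\beta)$ to be the only real work.

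First I would show, by induction on $m\in\Nplus$, that there is an exponent $L_m\in\Nplus$ with $\beta\ite{m}(n)\leqa n\cdot(\alpha(n))^{L_m}$ and, at the same time, that $n\cdot(\alpha(n))^m\leqa\beta\ite{m}(n)$. The base case $m=1$ is immediate, with $L_1=1$, since $\beta\ite{1}=\beta$. For the inductive step, put $\gamma_m(n)=n\cdot(\alpha(n))^{L_m}$, which is again a time bound; from $\beta\ite{m}\leqa\gamma_m$ Lemma 1 gives $\beta\ite{m+1}=\beta\circ\beta\ite{m}\leqa\beta\circ\gamma_m$. Since $\alpha(n)\leqa n$ we have $\gamma_m(n)\leqa n^{L_m+1}$, so by monotonicity of $\alpha$ and f-consistency $\alpha(\gamma_m(n))\leqa\alpha(n^{L_m+1})\leqa(\alpha(n))^l$ for a suitable $l$, whence $\beta\circ\gamma_m(n)=\gamma_m(n)\cdot\alpha(\gamma_m(n))\leqa n\cdot(\alpha(n))^{L_m+l}$; set $L_{m+1}=L_m+l$. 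For the lower inequality, $n\leq\beta\ite{m}(n)$ and monotonicity of $\alpha$ give $\alpha(n)\leq\alpha(\beta\ite{m}(n))$, so $\beta\ite{m+1}(n)=\beta\ite{m}(n)\cdot\alpha(\beta\ite{m}(n))$ is at least $n(\alpha(n))^m\cdot\alpha(n)=n(\alpha(n))^{m+1}$. As $\alpha^{L_m},\alpha^m\in\Pow(\alpha)$, the two chains of inequalities together yield $\It(\beta)\eqa\{\,n\cdot\alpha'(n):\alpha'\in\Pow(\alpha)\,\}$.

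Next I would deal with the comparison of two bounds $\beta_i(n)=n\cdot\alpha_i(n)$ as in the statement. By the first part, $\It(\beta_i)\eqa\{\,n\cdot(\alpha_i(n))^m:m\in\Nplus\,\}$, and since $\leqa$ between sets of bounds is transitive, $\beta_1\leqi\beta_2$ — that is, $\It(\beta_1)\leqa\It(\beta_2)$ — is equivalent to: for every $m$ there is an $l$ with $n(\alpha_1(n))^m\leqa n(\alpha_2(n))^l$. Cancelling the common factor $n$, which is legitimate for almost all $n$ because there $n\geq1$, this reads $(\alpha_1(n))^m\leqa(\alpha_2(n))^l$, i.e.\ exactly $\alpha_1\leqpow\alpha_2$. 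Applying this equivalence in both directions shows $\beta_1\eqi\beta_2$ precisely when $\alpha_1\eqpow\alpha_2$, and hence $\beta_1\lei\beta_2$ precisely when $\alpha_1\lepow\alpha_2$.

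Finally, assume $2\leqa\alpha_1\llpow\alpha_2$. Taking $l=1$ in the definition of $\llpow$ gives $\alpha_1^k\leqa\alpha_2$ for all $k$, so certainly $\alpha_1\leqpow\alpha_2$. To rule out $\alpha_2\leqpow\alpha_1$, I would use the $\llpow$-lemma above, by which $\alpha_1^k\leo\alpha_2$ holds for every $k$: if $\alpha_2\leqpow\alpha_1$ held, then $\alpha_2\leqa\alpha_1^l$ for some $l$, contradicting $\alpha_1^l\leo\alpha_2$. Hence $\alpha_1\lepow\alpha_2$, so $\beta_1\lei\beta_2$ by the previous paragraph. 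As indicated, the crux is the induction in the second paragraph — checking that a single iteration of $\beta$ increases the exponent of the $\alpha$-factor by only a bounded amount — which depends on combining $\alpha(n)\leqa n$ (to keep the argument of $\alpha$ below a fixed power of $n$) with f-consistency (to convert $\alpha$ of that power back into a power of $\alpha(n)$); everything else is routine.
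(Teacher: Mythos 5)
Your proof is correct and follows essentially the same route as the paper: the same two inductions (using $\alpha(n)\leqa n$ plus f-consistency for the upper bound, and monotonicity of $\alpha$ for the lower bound) to get $\It(\beta)\eqa\{n\cdot\alpha'(n):\alpha'\in\Pow(\alpha)\}$, with the comparison statements then read off from this representation and the final claim obtained from Lemma 12 exactly as the paper does. The only differences are cosmetic (introducing $\gamma_m$ and citing Lemma 1 explicitly, and spelling out the cancellation of the factor $n$ that the paper leaves as "immediate").
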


\begin{proof}
To prove the inequality $\It(\beta)\, \leqa\, \{\, n\cdot \alpha'(n): \,\alpha'\in\Pow(\alpha)\,\}$,
we inductively show that for all $k\in\Nplus$ there is an $l\in\N$ such that 
\[\beta\ite{k}(n)\,\leqa\, n\cdot(\alpha(n))^l.\]
 This holds for $k=1$ and $l=1$ by definition of $\beta$. If it holds for some $k$ and $l$,
then 
\bea
 \beta\ite{k+1}(n) &=& \beta\circ\beta\ite{k}(n) \\
   &=& \beta\ite{k}(n)\cdot \alpha(\beta\ite{k}(n))\\
&\leqa& n\cdot(\alpha(n))^l\cdot \alpha(n\cdot(\alpha(n))^l)\\
& \leqa & n\cdot(\alpha(n))^l\cdot \alpha(n^{l+1})\\
&\leqa & n\cdot(\alpha(n))^l\cdot( \alpha(n))^{l'}\\
& \leq & n\cdot(\alpha(n))^{l+l'},
\eea
with a suitable $l'\in\N$. Thus, the step of induction has been done.

For the converse inequality, it is inductively shown that 
\[n\cdot(\alpha(n))^k \leqa \beta\ite{k}(n).\]
This holds for $k=1$, and the step of induction follows by
\bea
 n\cdot (\alpha(n))^{k+1} &=& n\cdot(\alpha(n))^k\cdot \alpha(n) \\
& \leqa & \beta\ite{k}(n)\cdot \alpha(n) \\
& \leqa & \beta\ite{k}(n)\cdot \alpha(\beta\ite{k}(n)) \\
& = & \beta\circ\beta\ite{k}(n)\en = \en \beta\ite{k+1}(n).
\eea
So the stated representation of $\It(\beta)$ has been shown. The second assertion of the lemma follows immediately,
whereas Lemma 12 yields the last one. 
\qed
\end{proof}

In the sequel, we employ an integer square root  function and a slight modification of the integer logarithm: let
\[\msqrt(n) =\lfloor \sqrt{n} \rfloor \en \mbox{ and } \en \mlog(n)= \left\{
   \begin{array}{cl}\lfloor \log_2(n)\rfloor & \mbox{ if } n\geq 1 \\
                                0 & \mbox{ if } n=0.
 \end{array} \right. \]
For example, it holds $\msqrt(n\cdot m) = \lfloor \sqrt{n}\cdot \sqrt{m}\rfloor \leq \sqrt{n}\cdot\sqrt{m}\leq 
(\msqrt(n)+1)\cdot(\msqrt(m)+1)$. Also, we have $\Pow(\msqrt\circ \alpha)\eqa\Pow(\alpha)$
for all nondecreasing unbounded functions $\fct{\alpha}{\N}{\N}$.

\begin{lemma}
Let $\alpha_1$ and $\alpha_2$ be nondecreasing, f-consistent functions with
 $1\,\leqa \,\alpha_i(n)\leqa n$ for $i=1,2$.
Then the function $\alpha$ defined by
\[\alpha(n) = 2^{\msqrti(\,(\mlogi\,\circ\,\alpha_1(n))\cdot(\mlogi\,\circ \,\alpha_2(n))\,)  }\]
fulfills all these properties, too.
\end{lemma}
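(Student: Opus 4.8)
The plan is to verify, for the function $\alpha$ defined by
\[\alpha(n) = 2^{\msqrti(\,(\mlogi\,\circ\,\alpha_1(n))\cdot(\mlogi\,\circ \,\alpha_2(n))\,)},\]
the three properties in the claimed list: $\alpha$ is nondecreasing, $1\leqa\alpha(n)\leqa n$, and $\alpha$ is f-consistent. Monotonicity is immediate: $\alpha_1,\alpha_2$ are nondecreasing, hence so are $\mlog\circ\alpha_i$, their product, $\msqrt$ applied to the product, and finally $2^{(\cdot)}$; a composition of nondecreasing functions is nondecreasing. The lower bound $1\leqa\alpha(n)$ is clear since the exponent is a natural number, so $\alpha(n)\geq 2^0=1$.

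For the upper bound $\alpha(n)\leqa n$, I would bound the exponent. Since $\alpha_i(n)\leqa n$ and $\mlog$ is nondecreasing we get $\mlog\circ\alpha_i(n)\leq \mlog(n)\leq \log_2(n)$ for $n\geq 1$; hence the product of the two $\mlog$ values is at most $(\log_2 n)^2$, and $\msqrt$ of that is at most $\log_2 n$. Therefore $\alpha(n)\leq 2^{\log_2 n}=n$ for $n\geq 1$ (and $\alpha(0)$ is a fixed constant, absorbed by $\leqa$). So the first two bullet properties hold.

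The only real work is f-consistency. By the remark following Definition~5 (f-consistency), since $\alpha$ is nondecreasing it suffices to find $l\in\N$ with $\alpha(n^2)\leqa(\alpha(n))^l$. Passing to exponents, this amounts to showing
\[\msqrt\bigl((\mlog\circ\alpha_1(n^2))\cdot(\mlog\circ\alpha_2(n^2))\bigr)\ \leqa\ l\cdot \msqrt\bigl((\mlog\circ\alpha_1(n))\cdot(\mlog\circ\alpha_2(n))\bigr).\]
Here I use f-consistency of $\alpha_1$ and $\alpha_2$: there is an $l_0$ with $\alpha_i(n^2)\leqa(\alpha_i(n))^{l_0}$ for $i=1,2$. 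Applying $\mlog$ and using $\mlog(x^{l_0})=\lfloor l_0\log_2 x\rfloor\leq l_0\cdot\mlog(x)+l_0$, I get $\mlog\circ\alpha_i(n^2)\leqa l_0\cdot(\mlog\circ\alpha_i(n))+l_0$, which for large $n$ (where $\mlog\circ\alpha_i(n)\geq 1$ since $\alpha_i$ is unbounded — note $\alpha_i(n)\geq$ a growing value once one observes $\alpha_i$ cannot be bounded, as $1\leqa\alpha_i$ together with f-consistency and the intended use; more carefully, if $\alpha_i$ is ultimately constant the inequalities are trivial) is $\leqa 2l_0\cdot(\mlog\circ\alpha_i(n))$. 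Multiplying the two bounds gives the product for $n^2$ at most $4l_0^2$ times the product for $n$; taking $\msqrt$ and using $\msqrt(c\cdot m)\leq \sqrt{c}\cdot(\msqrt(m)+1)\leq 2\sqrt c\cdot\msqrt(m)$ for $m\geq 1$, we obtain the exponent at $n^2$ bounded by $l:=4l_0$ times the exponent at $n$, as required. Exponentiating base $2$ yields $\alpha(n^2)\leqa(\alpha(n))^{l}$, so $\alpha$ is f-consistent.

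**Expected main obstacle.**
The delicate point is handling the additive slack introduced by the floor functions in $\mlog$ and $\msqrt$ and by the "+1" in inequalities like $\msqrt(nm)\leq(\msqrt n+1)(\msqrt m+1)$. These constants must be absorbed into the multiplicative constant $l$, which is legitimate because all comparisons are with respect to $\leqa$ (almost-everywhere domination) and because one restricts attention to arguments large enough that $\mlog\circ\alpha_i(n)\geq 1$; the edge cases where some $\alpha_i$ is ultimately constant make the corresponding factor bounded and only simplify matters. I would make this bookkeeping explicit but brief, citing the elementary inequalities for $\msqrt$ and $\mlog$ recorded just before the lemma.
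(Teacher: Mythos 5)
Your proof is correct and takes essentially the same route as the paper's: monotonicity and the bounds $1\leqa\alpha(n)\leqa n$ are verified directly, and f-consistency is reduced to the exponent $k=2$ and then obtained by pushing the f-consistency of $\alpha_1,\alpha_2$ through $\mlog$ and $\msqrt$, absorbing the floor/$+1$ slack into the multiplicative constant. The only slip is the aside that $\alpha_i$ ``cannot be bounded'' (false, e.g.\ $\alpha_i\equiv 2$ satisfies all hypotheses), but your own fallback --- a nondecreasing bounded integer function is ultimately constant, and then the needed inequality $\mlog\circ\alpha_i(n^2)\leqa 2l_0\cdot\mlog\circ\alpha_i(n)$ holds trivially --- repairs this, so the argument stands.
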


\begin{proof}
It is obvious that $1\leq \alpha(n)\leq \max(\alpha_1(n),\alpha_2(n))\leqa n$.
Since the functions $\alpha_1,\alpha_2,\mlog$ and $\msqrt$ as well as $\beta_{\mbox{\footnotesize exp}}$ are nondecreasing and the product and composition of nondecreasing functions remains nondecreasing, it follows that $\alpha$ is nondecreasing.
In order to prove the f-consistency, it is enough to deal with the exponent $k=2$. We have
\bea
\alpha(n^2) &=&  2^{\msqrti(\,(\mlogi\circ\alpha_1(n^2))\cdot(\mlogi\circ \alpha_2(n^2))\,)  }\\
                    &\leq&  2^{\msqrti(\,\mlogi(l_1\cdot\alpha_1(n))\cdot\mlogi(l_2\cdot \alpha_2(n))\,)  }\\
                    &\leq&  2^{\msqrti(\, l_1\cdot l_2 \cdot (\mlogi\circ\alpha_1(n))\cdot(\mlogi\circ \alpha_2(n))\,)  }\\
                    &\leq&  2^{(\msqrti(l_1\cdot l_2)+1)\cdot (\msqrti(\,(\mlogi\circ\alpha_1(n))\cdot(\mlogi\circ \alpha_2(n)))+1\,)  }\\
                    &\leq&  2^{(\msqrti(\,(\mlogi\circ\alpha_1(n))\cdot(\mlogi\circ \alpha_2(n)))+1\,) \cdot ( \msqrti(l_1\cdot l_2)+1)  }\\
                    &\leqa& 2^{  \msqrti(\,(\mlogi\circ\alpha_1(n))\cdot(\mlogi\circ \alpha_2(n))\,) \cdot 2\cdot \msqrti(l_1\cdot l_2+1)  }\\
                    &\leqa& 2^{  \msqrti(\,(\mlogi\circ\alpha_1(n))\cdot(\mlogi\circ \alpha_2(n))\,)\, \cdot\, l'  }\\
                    &=& (\alpha(n))^{l'} \en \mbox{ with suitable $l_1,l_2, l'\in\N$}. \hspace*{7.25cm} \qed
\eea
\end{proof}

\begin{lemma}
Let $\alpha_1$ and $\alpha_2$ be nondecreasing functions satisfying
 $1\,\leqa \,\alpha_i(n)\leqa n$ for $i=1,2$, and let $\alpha$ be defined as in Lemma 14.
If $2\leqa \alpha_1\leqpow\alpha_2$, then  it holds $\alpha_1\leqpow\alpha\leqpow\alpha_2$. If, moreover,   $\alpha_1\lepow\alpha_2$, then we even have $\alpha_1\lepow\alpha\lepow\alpha_2$.
From $2\leqa \alpha_1\llpow \alpha_2$ it follows that $\alpha_1\,\llpow\,\alpha\,\llpow\,\alpha_2$.
\end{lemma}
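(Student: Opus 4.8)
The plan is to analyze the function $\alpha(n) = 2^{\msqrti((\mlogi\circ\alpha_1(n))\cdot(\mlogi\circ\alpha_2(n)))}$ by taking $\mlog$ and comparing exponents, since powers of $\alpha$ correspond (up to the usual slack) to multiplying the exponent by a constant, while $\alpha_i$ relates to $2^{\mlogi\circ\alpha_i}$. Write $a_i(n) = \mlog\circ\alpha_i(n)$, so that $\mlog\circ\alpha(n)$ is, up to an additive $O(1)$, equal to $\msqrt(a_1(n)\cdot a_2(n)) \approx \sqrt{a_1(n)\,a_2(n)}$, the geometric mean of $a_1$ and $a_2$. The three claims then reduce to the elementary fact that the geometric mean of two sequences lies (in the appropriate ordering) between them.

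First I would establish the basic dictionary: for a nondecreasing function $\gamma$ with $1\leqa\gamma(n)\leqa n$, one has $2^{\mlogi\circ\gamma}\eqpow\gamma$ (since $\gamma(n)/2 \leq 2^{\mlogi\circ\gamma(n)} \leq \gamma(n)$), and $\gamma_1\leqpow\gamma_2$ holds iff for each $k$ there is $l$ with $k\cdot(\mlog\circ\gamma_1)(n)\leqa l\cdot(\mlog\circ\gamma_2)(n)$, i.e. $\mlog\circ\gamma_1$ is bounded by a constant multiple of $\mlog\circ\gamma_2$ for almost all $n$; similarly $\gamma_1\llpow\gamma_2$ iff for each $k$ we have $k\cdot(\mlog\circ\gamma_1)(n)\leqa(\mlog\circ\gamma_2)(n)$. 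With $a(n)=\mlog\circ\alpha(n)$, one checks $\sqrt{a_1(n)a_2(n)} - 1 \leq a(n) \leq \sqrt{a_1(n)a_2(n)} + 1$, so $a\eqa\sqrt{a_1 a_2}$ up to bounded additive error, which does not affect any of the $\leqpow$, $\lepow$, $\llpow$ comparisons (note $a_1,a_2\geq 1$ almost everywhere when $2\leqa\alpha_1$, so the additive $\pm 1$ is swamped).

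Next, for the containment claims: if $2\leqa\alpha_1\leqpow\alpha_2$, then $a_1\leqa c\cdot a_2$ for a suitable constant $c$ (for each fixed $k$ a matching $l$; take $k=1$), while $a_1\leq a_2$ need not hold pointwise, but $a_1 \leqa \sqrt{a_1\,a_2}\cdot\sqrt{a_1/a_2}$ — cleaner: since $a_1\leqa c\,a_2$ we get $a_1^2\leqa c\,a_1 a_2$, hence $a_1\leqa\sqrt{c}\cdot\sqrt{a_1 a_2}\leqa\sqrt{c}\cdot a$, giving $\alpha_1\leqpow\alpha$; symmetrically $\sqrt{a_1 a_2}\leqa\sqrt{c}\cdot a_2$ (from $a_1\leqa c\,a_2$ again), so $\alpha\leqpow\alpha_2$. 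For $\llpow$: from $2\leqa\alpha_1\llpow\alpha_2$ we have, for every $k$, $k\,a_1\leqa a_2$, so $k^2 a_1^2\leqa a_1 a_2\leqa$ (using $a_1\geq 1$) $\cdots$; more directly, $k^2\cdot a_1\cdot a_1 \leqa a_1 a_2$ and also $a_1 a_2 \leqa \frac{1}{k^2} a_2 a_2$, whence $k\,a_1\leqa k\sqrt{a_1 a_2}\leqa a_1\sqrt{a_2/a_1}\cdots$ — the right bookkeeping gives $k\cdot a\leqa a_2$ and $k\cdot a_1\leqa a$ for all $k$, i.e. $\alpha_1\llpow\alpha\llpow\alpha_2$, and Lemma 13 (with $\alpha_1$ replaced by $2\leqa\alpha$, which holds) then upgrades the strict parts.

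Finally, for the strict assertions $\alpha_1\lepow\alpha$ and $\alpha\lepow\alpha_2$ under $\alpha_1\lepow\alpha_2$: suppose for contradiction $\alpha\leqpow\alpha_1$, i.e. $a\leqa c\,a_1$, i.e. $\sqrt{a_1 a_2}\leqa c\,a_1$, so $a_2\leqa c^2 a_1$, contradicting $\alpha_2\not\leqpow\alpha_1$; symmetrically $\alpha_2\leqpow\alpha$ would give $a_2\leqa c\sqrt{a_1 a_2}$, hence $a_2\leqa c^2 a_1$, again a contradiction. I expect the main obstacle to be purely notational: keeping the additive $O(1)$ slack from $\msqrt$ and $\mlog$ under control and verifying it is harmless precisely because $2\leqa\alpha_1$ forces $a_1,a_2\geq 1$ almost everywhere — the underlying inequality (geometric mean lies between the two terms) is trivial, but it must be phrased entirely in the multiplicative-exponent language of $\leqpow$/$\lepow$/$\llpow$ rather than the cleaner real-number language, and one must be careful that $\msqrt(l_1 l_2)$-type factors absorbed into "suitable $l'$" do not interfere with the uniform-in-$k$ requirement of $\llpow$.
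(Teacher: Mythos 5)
Your proposal is correct and takes essentially the same route as the paper's proof: both arguments work in the exponent, use that $\mlog\circ\alpha$ is (up to an additive $1$) the geometric mean $\sqrt{(\mlog\circ\alpha_1)\cdot(\mlog\circ\alpha_2)}$ of $\mlog\circ\alpha_1$ and $\mlog\circ\alpha_2$, derive the $\leqpow$ and $\llpow$ chains from this, and obtain the strict parts by the same contradiction computation (your step $a_2\leqa c^2\cdot a_1$ is exactly the paper's $\alpha_2\leqa(\alpha_1)^{16k^2}$ bound). The minor slips — invoking Lemma 13 to ``upgrade the strict parts'' (unnecessary, since your contradiction argument already covers them) and attributing $a_2\geq 1$ almost everywhere to $2\leqa\alpha_1$ alone rather than to $2\leqa\alpha_1$ combined with $\alpha_1\leqpow\alpha_2$ — do not affect the argument.
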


\begin{proof}
In showing the first assertion, without loss of generality,
we can suppose that $\alpha_1\leqa \alpha_2$. Hence, for any $l\in\N$,
\bea
(\alpha_1(n)\,)^l &\leq & 2^{l\cdot(\mlogi\circ\alpha_1(n)+1)}\\
                            &\leqa& 2^{l\cdot\sqrt{\mlogi\circ\alpha_1(n)+1}\cdot \sqrt{\mlogi\circ\alpha_2(n) +1}}\\
                             &\leqa& 2^{2 l\cdot\sqrt{(\mlogi\circ\alpha_1(n))\cdot(\mlogi\circ\alpha_2(n))}}\\ 
                            &\leqa& 2^{2 l\cdot(\msqrti((\mlogi\circ\alpha_1(n))\cdot(\mlogi\circ\alpha_2(n)))+1)}\\
                             &\leqa& 2^{4 l\cdot(\msqrti((\mlogi\circ\alpha_1(n))\cdot(\mlogi\circ\alpha_2(n)))}\\
                           & = &  (\alpha(n))^{4l}
\eea
and
\bea 
 (\alpha(n))^l &=& 2^{ l\cdot(\msqrti((\mlogi\circ\alpha_1(n))\cdot(\mlogi\circ\alpha_2(n)))}\\
                      & \leqa &  2^{ l\cdot(\msqrti((\mlogi\circ\alpha_2(n))^2)}\\
                        & \leqa &  2^{ l\cdot\sqrt{(\mlogi\circ\alpha_2(n))^2}}\\
                     & \leqa &  2^{ l\cdot(\mlogi\circ\alpha_2(n))}\\
                      &\leqa & ( \alpha_2(n))^l.
\eea
Thus, the first assertion of the  lemma has been shown.

If $\alpha_1(n)\lepow\alpha_2(n)$, i.e., $\alpha_2(n)\leqpow \alpha_1(n)$ does not hold,
then $\alpha_2(n) \nleqo (\alpha_1(n))^l$, and for all $c,l\in\N$ we have $\alpha_2(n)> c\cdot (\alpha_1(n))^l$, for
infinitely many numbers $n$.

From $\alpha_1(n)\eqpow \alpha(n)$ it would follow $\alpha\, \leqa\, \alpha_1^k$ for some $k\in\N$, i.e.,
\bea
 2^{\msqrti(\,(\mlogi\,\circ\,\alpha_1(n))\cdot(\mlogi\,\circ \,\alpha_2(n))\,)  } &\leqa & 2^{k\cdot\sqrt{(\log_2\circ\alpha_1(n))^2}},
 \en \mbox{ hence}\\
\msqrt(\,(\mlog\,\circ\,\alpha_1(n))\cdot(\mlog\,\circ \,\alpha_2(n))\,) &\leqa & k\cdot\sqrt{(\log_2\circ\alpha_1(n))^2}.
\eea
Since $\frac{1}{2}\cdot \sqrt{(\frac{1}{2}\log_2\circ\alpha_1(n))\cdot (\frac{1}{2}\log_2\circ\alpha_2(n))} \leqa
\msqrt(\,(\mlog\,\circ\,\alpha_1(n))\cdot(\mlog\,\circ \,\alpha_2(n))\,)$, we would have  
\bea
 \sqrt{\log_2\circ\alpha_2(n)} & \leqa &4k\cdot \sqrt{\log_2\circ\alpha_1(n)}, \en \mbox{thus}\\
\log_2\circ\alpha_2(n)&\leqa& 16k^2\cdot \log_2\circ\alpha_1(n), \en \mbox{and}\\
\alpha_2(n) &\leqa & (\alpha_1(n))^{16 k^2},\en \mbox{a contradiction.}
\eea

From $ \alpha(n)\eqpow \alpha_2(n)$ it would follow that $\alpha_2 \,\leqa\, \alpha^k$ for some $k\in\N$, i.e.,
\bea
2^{\log_2\circ\alpha_2(n)} & \leqa &  2^{k\cdot\msqrti(\,(\mlogi\,\circ\,\alpha_1(n))\cdot(\mlogi\,\circ \,\alpha_2(n))\,)  }\\
                                      &\leqa & 2^{k\cdot\sqrt{(\log_2\circ\alpha_1(n))(\log_2\circ\alpha_2(n))}},  \en \mbox{ hence}\\
\log_2\circ \alpha_2(n) & \leqa &k\cdot \sqrt{(\log_2\circ\alpha_1(n))(\log_2\circ\alpha_2(n))},\\
\sqrt{\log_2\circ\alpha_2(n)} & \leqa&k\cdot \sqrt{\log_2\circ\alpha_1(n)}, \\
\log_2\circ\alpha_2(n) & \leqa&k^2\cdot(\log_2\circ\alpha_1(n)), \en \mbox{ thus}\\
\alpha_2(n) & \leqa &(\alpha_1(n))^{k^2}, \en \mbox{again a contradiction.}
\eea

To show the last assertion of the lemma, let $2\leqa \alpha_1\llpow \alpha_2$, i.e., $\alpha_1^k\leqa \alpha_2$ for all $k\in\N$.
Thus,  $k^2\cdot\log_2\circ\alpha_1 \leqa \log_2\circ\alpha_2$ and we have
\bea 
 (\alpha(n))^k &=& 2^{k\cdot(\msqrti((\mlogi\circ\alpha_1(n))\cdot(\mlogi\circ\alpha_2(n)))}\\
                        & \leqa &  2^{ k\cdot\sqrt{(\log_2\circ\alpha_1(n)) \cdot  (\log_2\circ\alpha_2(n)) }}\\
                   & = &  2^{\sqrt{k^2\cdot (\log_2\circ\alpha_1(n)) \cdot  (\log_2\circ\alpha_2(n)) }}\\
              & \leqa &  2^{ \sqrt{(\log_2\circ\alpha_2(n)) \cdot  (\log_2\circ\alpha_2(n)) }}\\
                     &= &  2^{ \log_2\circ\alpha_2(n)   }\\
                      &= &  \alpha_2(n).
\eea
On the other hand, it holds
\bea 
 (\alpha_1(n))^k &=& 2^{ k\cdot\log_2\circ\alpha_1(n)}\\
               & = &  2^{\sqrt{ k^2\cdot (\log_2\circ\alpha_1(n))^2}}\\
                      & \leqa &  2^{ 2\cdot (\msqrti( k^2 \cdot   (2\cdot\mlogi\circ\alpha_1(n))^2)}\\
                        & \leqa &  2^{\msqrti( 16 k^2 \cdot(\mlogi\circ\alpha_1(n))^2)}\\
            & \leqa &  2^{\msqrti( (\mlogi\circ\alpha_1(n)) \cdot (\mlogi\circ\alpha_2(n))  )}\\
              & = & \alpha(n).      
\eea
This completes the proof of Lemma 15.
\qed
 \end{proof}

\begin{definition}[Bounds of Type 1]
A time bound $\beta$ is said to be of Type  1 iff
\[\beta(n)= n\cdot\alpha(n)\]
with a nondecreasing function $\alpha$ such that $2\leqa\alpha(n)\leqa n$ and $\alpha$ is f-consistent and constructible in linear time.
\end{definition}
Notice that all bounds of Type  1 are time-constructible.
The smallest bound of Type  1 is  almost everywhere equal to the linear bound $\beta(n)=2n$, the largest one is almost everywhere equal to the polynomial $\beta(n)=n^2$.

\begin{theorem}
There is a tame set $U_1$ consisting of time bounds of Type  1 such that 
\[\Blin \,\cup\, \Blogs \,\cup\, \bigcup\nolimits_{m\in\Nplusi} \! \Bqmlin{m} \, \cup \,\Bpol\; \emi \; U_1\]
and $U_1$ is densely linearly ordered by $\,\leqi$. Moreover, if $\beta_i(n)= n\cdot\alpha_i(n)$, for $i=1,2$ and 
$\beta_1,\beta_2\in U_1$, then it holds:
$\;\beta_1\lei\beta_2\:$ iff $\:\alpha_1\llpow\alpha_2$.
\end{theorem}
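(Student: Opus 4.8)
The plan is to build $U_1$ from a countable family of "factor functions" $\alpha$, each nondecreasing with $2\leqa\alpha(n)\leqa n$, f-consistent and linear-time constructible, chosen so that the family is linearly and densely ordered by $\llpow$ and is rich enough to dominate (in the $\leqi$-sense, via Lemma~13) every element of $\Blin\cup\Blogs\cup\bigcup_m\Bqmlin{m}\cup\Bpol$. Then $U_1=\{\beta:\beta(n)=n\cdot\alpha(n)$, $\alpha$ in the family$\}$. By Definition~10 every such $\beta$ is of Type~1 hence time-constructible; by Lemma~13, for $\beta_i(n)=n\cdot\alpha_i(n)$ we get $\beta_1\leqi\beta_2\iff\alpha_1\leqpow\alpha_2$, $\beta_1\lei\beta_2\iff\alpha_1\lepow\alpha_2$, and $2\leqa\alpha_1\llpow\alpha_2\Rightarrow\beta_1\lei\beta_2$; so linear/dense order and the iff-clause for $\llpow$ reduce to the corresponding facts about the $\alpha$'s, and tameness of $U_1$ will follow from Lemma~11 (with $2\leqa\alpha_1\llpow\alpha_2$) once the family is linearly $\llpow$-ordered.

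The first concrete step is to fix the "seed" factors that realize the embedding of the listed bound sets. For $\Blin$ take the constant $\alpha\equiv2$ (giving $\beta(n)=2n$, which is $\eqi$ to every bound in $\Blin\setminus\{\beta_{\iid}\}$, cf.\ the remark after Lemma~10); for $\Blogs$ take $\alpha(n)=\log^*(n)$ suitably floored to stay $\geq2$; and for $\Bqmlin{m}$ take $\alpha(n)=\mlog\ite{m}(n)$ (again floored to $\geq2$), which by Lemma~13 gives $\It(\beta)\eqa\Bqmlin{m}$; for $\Bpol$ take $\alpha(n)=n$. One checks $2\leqa\alpha\llpow$ for the next larger seed in each case, so these seeds sit in the chain $2\ \llpow\ \log^*\ \llpow\ \cdots\ \llpow\ \mlog\ite{3}\ \llpow\ \mlog\ite{2}\ \llpow\ \mlog\ite{1}\ \llpow\ \mbox{(anything }\lepow n\mbox{)}\ \llpow\ n$. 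Then, to obtain density, I iterate the construction of Lemma~14: given two family members $\alpha_1\lepow\alpha_2$ (in fact $\alpha_1\llpow\alpha_2$ or at least $\alpha_1\leqpow\alpha_2$ with $2\leqa\alpha_1$), Lemma~14 produces a new nondecreasing, f-consistent, $1\leqa\alpha(n)\leqa n$ factor strictly between them in $\lepow$, by Lemma~15, and in $\llpow$ when the endpoints are $\llpow$-comparable; one verifies linear-time constructibility of this $\alpha$ from the constructibility of $\msqrt,\mlog,\beta_{\mbox{\footnotesize exp}},\alpha_1,\alpha_2$ and the closure Lemma~2. Closing the seed set under this betweenness operation (a countable process) yields the desired densely $\llpow$-ordered family, and hence $U_1$.

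The main obstacle is verifying that the resulting $U_1$ is genuinely \emph{tame}, i.e.\ that for any two $\beta_1,\beta_2\in U_1$ the quotient $\beta_1(n)/\beta_2(n)=\alpha_1(n)/\alpha_2(n)$ has a limit in $\R\cup\{\infty\}$. Linear $\llpow$-comparability of the $\alpha$'s only gives, via Lemma~12, that the \emph{smaller} one is $\leo$ the larger one whenever they are $\llpow$-comparable --- but two distinct family members could a priori be $\lepow$-comparable without being $\llpow$-comparable, and then Lemma~12 says nothing about the actual limit of the ratio. The remedy is to be more careful in the density step: the function $\alpha$ of Lemma~14 is the geometric-mean-type combination $2^{\msqrti((\mlogi\circ\alpha_1)(\mlogi\circ\alpha_2))}$, and I will show by direct estimates (in the spirit of the computation in Lemma~15) that $\log_2\alpha(n)$ is, up to a bounded multiplicative and additive error, $\sqrt{(\log_2\alpha_1(n))(\log_2\alpha_2(n))}$; arranging all factors so that each $\log_2\alpha$ is, up to $\Theta(1)$ factors, a fixed monotone function of $n$ of the form $n^{q}\cdot(\mlog n)^{r}\cdots$ with distinct "exponent vectors", one gets that any two ratios $\log_2\alpha_1/\log_2\alpha_2$ tend to $0$, a positive constant, or $\infty$ --- which is exactly tameness after exponentiating. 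So the real work is to pin down the family of exponent-data tightly enough that \emph{every} pairwise ratio converges; once that bookkeeping is done, the remaining claims (the $\emi$-embedding, dense linear order under $\leqi$, the $\llpow$-iff clause) follow mechanically from Lemmas~13--15 and Lemma~11.
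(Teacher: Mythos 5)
Your construction is essentially the paper's: the same seed factors ($\alpha\equiv 2$, $\log^*$, the iterated logarithms, and $\alpha(n)=n$) and the same interpolation step via the geometric-mean function of Lemma~14, with Lemma~13 translating $\leqi$/$\lei$ on the bounds into $\leqpow$/$\lepow$/$\llpow$ on the factors. The genuine gap is in the part you yourself single out as the main obstacle, tameness. Your remedy — normalizing every factor so that $\log_2\alpha$ is, up to $\Theta(1)$, a monomial-type expression with an ``exponent vector'', and concluding tameness from convergence of the ratios $\log_2\alpha_1/\log_2\alpha_2$ — does not work as stated. First, the family produced by the construction has no such normal form: $\log^*$, the iterated logarithms, and the nested combinations $2^{\msqrti((\mlogi\circ\alpha_1)(\mlogi\circ\alpha_2))}$ are not captured by exponent vectors of the kind you describe. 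Second, and more seriously, convergence of $\log_2\alpha_1/\log_2\alpha_2$ to a positive constant (in particular to $1$) does \emph{not} imply convergence of $\alpha_1/\alpha_2$; the ratio of the functions themselves can oscillate while the ratio of their logarithms tends to $1$. So ``which is exactly tameness after exponentiating'' is unjustified precisely in the case your worry concerns (factors that are $\lepow$-related but not $\llpow$-related).

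The missing idea is that no such bookkeeping is needed, because the construction itself maintains the invariant that \emph{any two distinct} members of the factor family are $\llpow$-comparable: this holds for the seeds, and when a new $\alpha$ is inserted between $\llpow$-consecutive $\alpha_1\llpow\alpha_2$, Lemma~15 gives $\alpha_1\llpow\alpha\llpow\alpha_2$, and transitivity of $\llpow$ (if $\alpha_3^k\leqa\alpha_1$ for all $k$ and $\alpha_1\leqa\alpha$, then $\alpha_3^k\leqa\alpha$) extends comparability to all other elements of the current stage and to the other newly inserted elements. This invariant is exactly the ``Moreover'' clause of the theorem ($\beta_1\lei\beta_2$ iff $\alpha_1\llpow\alpha_2$), and once it is in place, tameness is immediate from Lemma~12: for distinct $\beta_1,\beta_2\in U_1$ one has, say, $\alpha_1\llpow\alpha_2$, hence $\lim_{n\to\infty}\alpha_1(n)/\alpha_2(n)=0$ and so $\lim_{n\to\infty}\beta_1(n)/\beta_2(n)=0$, while equal elements trivially give limit $1$. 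Note that in your own write-up you defer the $\llpow$-iff clause to the end as something that ``follows mechanically''; in fact it is the inductive invariant that has to be verified at each stage (including comparability of a new element with \emph{all} old ones, not just its two neighbours), and it is what makes the tameness claim a one-line consequence rather than a separate analytic problem.
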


\begin{proof}
Let 
$\: U_1=  \bigcup\nolimits_{i\in\Ni} W_i\,,\:$
where the sets $W_i$ are successively generated as follows.\\
$W_0$ consists of the functions (given by describing terms):
\[2\,n\,,\, n\cdot \log^*(n)\,,\, n\cdot \log\ite{m}(n), \mbox{ for all $m\in\Nplus$}\,,\mbox{ and }\,n^2.\]
All these bounds are of Type  1, $W_0\subseteq U_1$ implies that 
$\Blin \cup \Blogs\cup \bigcup\nolimits_{m\in\Nplusi} \! \Bqmlin{m} \cup\Bpol\, \emi \,U_1$,  $W_0$ is
linearly ordered with respect to $\,\leqi$, and all $\beta_1,\beta_2\in W_0$ fulfill the logical equivalence stated at the end of  the theorem.

Given a set $W_j$ of time bounds of Type  1 which is linearly ordered by $\,\leqi$, let
\bea
W_{j+1}& =& W_j \,\cup\,\{\beta: \mbox{ there are functions $\beta_1, \beta_2\in W_j$ such that }  \beta_1\,\lei\,\beta_2,\\
&& \hspace*{1.97cm} \mbox{but there is no $\beta_3\in W_j$ with $\beta_1\lei\beta_3\lei\beta_2$},\\
&& \hspace*{1.97cm} \mbox{and }\:
\beta(n) = n\cdot 2^{\msqrti(\,(\mlogi\,\circ\,\alpha_1(n))\cdot(\mlogi\,\circ \,\alpha_2(n))\,)  }, \\
 && \hspace*{5cm} \mbox{where $\beta_i(n)=n\cdot \alpha_i(n)\,$ for $i=1,2$} \,\}.
\eea
By Lemmas 13-15, $W_{j+1}$  is linearly ordered by $\,\leqi$ and  consists of bounds of Type  1 only, where the time-constructibility of the elements can be shown by standard arguments. 

Moreover, from the above construction it 
follows that the universe $U_1$ owns all the properties  stated in the theorem. In particular, the density of the ordering follows by the stepwise construction. 
The equivalence $\beta_1\lei\beta_2$ iff $\alpha_1\llpow\alpha_2$, for any two elements  $\beta_i(n)= n\cdot\alpha_i(n)$,
holds within $W_0$ and follows for all elements of $W_{j+1}$ if it holds in $W_j$. Thus, it holds for $U_1$, too. Therefore, the universe $U_1$ is tame.
\qed
\end{proof}

It should be noticed that even $\It(U_1)$ is tame.
The smallest bound in $U_1$ is  $\beta(n)=2n$, whereas the largest one is $n^2$. 
The sets $U_1\cup\{\beta_{\iid}\}$ and $\It(U_1)\cup\{\beta_{\iid}\}$ are also tame
and linearly ordered by $\,\leqi$, but there is a jump between the functions $\beta_{\iid}$ and $2n$.

If one starts with $\alpha_1(n)=2$ and a bound $\beta_0(n)= n\cdot \alpha_0(n)$ of Type  1 with an unbounded function $\alpha_0$, then the repeated application 
of Lemmas 14 and 15 (with successively changed functions $\alpha_2$) yields a downwards infinite chain $\, \ldots\, \lei \,\beta_2\,\lei\,\beta_1\,\lei\,\beta_0$ of bounds of Type  1, with
$\, \ldots\, \llpow \,\alpha_2\,\llpow\,\alpha_1\,\llpow\,\alpha_0$ for the corresponding factor functions.
Another way to construct such chains is opened by the following result.

\begin{lemma}
Let $\beta_1(n)= n\cdot \alpha_1(n)$ and $\beta_2(n)= n\cdot \alpha_2(n)$ be bounds of Type  1. Then the function
\[\beta(n)=\, n\cdot (\alpha_1\circ\alpha_2(n)\,)\]
is also of Type 1, and we have $\beta\leqi \beta_1$ and $\beta\leqi \beta_2$. If, moreover, $\alpha_2$ is unbounded and 
$\alpha_1\,\llpow\,n$, then  $\alpha_1\circ\alpha_2\,\llpow \,\alpha_2$ and $\: \beta\,\lei\,\beta_2$.
\end{lemma}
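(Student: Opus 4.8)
The plan is to verify the three claimed facts in order: that $\beta$ is of Type 1, that $\beta\leqi\beta_1$ and $\beta\leqi\beta_2$, and finally (under the extra hypotheses) that $\alpha_1\circ\alpha_2\llpow\alpha_2$ and hence $\beta\lei\beta_2$. First I would check the defining conditions of Definition 6 for the factor function $\alpha:=\alpha_1\circ\alpha_2$. Monotonicity is immediate since a composition of nondecreasing functions is nondecreasing. For the two-sided bound $2\leqa\alpha(n)\leqa n$: since $\alpha_2(n)\geqa 2\geq 1$ and $\alpha_1$ is nondecreasing with $\alpha_1(m)\geqa 2$ for all $m$, we get $\alpha_1\circ\alpha_2(n)\geqa 2$; and $\alpha_1\circ\alpha_2(n)\leqa\alpha_1(n)\leqa n$ using $\alpha_2(n)\leqa n$, monotonicity of $\alpha_1$, and $\alpha_1(n)\leqa n$. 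Linear-time constructibility of $\alpha_1\circ\alpha_2$ follows from the linear-time constructibility of $\alpha_1$ and $\alpha_2$ by the standard composition argument (as in Lemma 2, adapted to linear-time constructibility). For f-consistency I would use Definition 5 in the reduced form (it suffices to treat $k=2$): $\alpha(n^2)=\alpha_1(\alpha_2(n^2))\leqa\alpha_1((\alpha_2(n))^{l_2})$ for some $l_2$ by f-consistency of $\alpha_2$; then $\alpha_1((\alpha_2(n))^{l_2})\leqa\alpha_1((\alpha_2(n))^{2^t})$ for a suitable $t$, and iterating f-consistency of $\alpha_1$ a finite number of times gives $\leqa(\alpha_1(\alpha_2(n)))^{l}$ for some $l$, i.e.\ $\leqa(\alpha(n))^l$.

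Next, for $\beta\leqi\beta_1$ and $\beta\leqi\beta_2$: by Lemma 13 (applicable since all of $\alpha,\alpha_1,\alpha_2$ are nondecreasing, f-consistent, and between $1$ and $n$), it suffices to show $\alpha\leqpow\alpha_1$ and $\alpha\leqpow\alpha_2$. Both are immediate from $\alpha(n)=\alpha_1(\alpha_2(n))\leqa\alpha_1(n)$ (for the first, using $\alpha_2(n)\leqa n$ and monotonicity) and $\alpha(n)=\alpha_1(\alpha_2(n))\leqa\alpha_2(n)$ (for the second, using $\alpha_1(m)\leqa m$); a pointwise inequality trivially gives $\leqpow$.

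For the last assertion, suppose in addition that $\alpha_2$ is unbounded and $\alpha_1\llpow n$, i.e.\ $(\alpha_1(m))^k\leqa m$ for every $k\in\N$. Fix $k$. Then $(\alpha(n))^k=(\alpha_1(\alpha_2(n)))^k\leqa\alpha_2(n)$ for almost all $n$, because for large $n$ the argument $\alpha_2(n)$ is large (by unboundedness of $\alpha_2$) and the inequality $(\alpha_1(m))^k\leq m$ holds for all large $m$. Hence $\alpha_1\circ\alpha_2\llpow\alpha_2$. Since also $2\leqa\alpha_1\circ\alpha_2$ (shown above), Lemma 12 gives $\alpha_1\circ\alpha_2\lepow\alpha_2$, and then the final clause of Lemma 13 (the implication $2\leqa\alpha'\llpow\alpha''\Rightarrow\beta'\lei\beta''$, or alternatively the $\lei$-vs-$\lepow$ equivalence already established there) yields $\beta\lei\beta_2$.

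The main obstacle is the f-consistency verification for the composition: one must be careful that iterating f-consistency of $\alpha_1$ to absorb a power $(\alpha_2(n))^{l_2}$ only requires finitely many steps, which works because each application of Definition 5 raises the outer exponent by a bounded factor while the inner power $l_2$ is a fixed constant — so a fixed number of applications (depending only on $l_2$ and the constants supplied by $\alpha_1$'s f-consistency) suffices. Everything else is routine monotonicity bookkeeping together with invocations of Lemmas 12 and 13.
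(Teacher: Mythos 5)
Your proposal is correct and follows essentially the same route as the paper: verify the Type~1 properties for $\alpha_1\circ\alpha_2$, get $\beta\leqi\beta_1,\beta_2$ from the pointwise bounds $\alpha_1\circ\alpha_2(n)\leqa\alpha_1(n),\alpha_2(n)$, and obtain $\alpha_1\circ\alpha_2\llpow\alpha_2$ from the unboundedness of $\alpha_2$ together with $\alpha_1\llpow n$, concluding $\beta\lei\beta_2$ via Lemma~13. The only differences are cosmetic: you argue the $\llpow$ step directly where the paper argues by contradiction, and you spell out the f-consistency of the composition (which can be done in one stroke by applying f-consistency of $\alpha_1$ with exponent $l_2$, rather than iterating the $k=2$ case) where the paper merely asserts it.
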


\begin{proof}
To show the first assertion, notice that the composition $\alpha_1\circ\alpha_2$ remains nondecreasing, f-consistent and constructible in linear time if both $\alpha_1$ and $\alpha_2$ own these properties and, moreover, $\alpha_1(n),\alpha_2(n)\leqa \, n$.
Also we have
 $\alpha_1\circ\alpha_2(n)\leqa \alpha_1(n),\,\alpha_2(n)$, hence
\[  n\cdot (\alpha_1\circ\alpha_2)(n)\leqi n\cdot \alpha_1(n) \mbox{ and }
 n\cdot (\alpha_1\circ\alpha_2)(n)\leqi n\cdot \alpha_2(n).\]

If not $\alpha_1\circ\alpha_2\llpow \alpha_2$, there would be a $k\in\Nplus$ such that it does not hold
 $(\alpha_1\circ\alpha_2)^k \leqa\alpha_2$, hence we would have $(\alpha_1\circ\alpha_2)^k(n) >\alpha_2(n)$ for infinitely many 
numbers $n$. If $\alpha_2$ is unbounded, this would imply $( \alpha_1(m))^k> m$ for infinitely many $m\in\N$,
hence $\liminf_{n\to\infty} \frac{n}{\alpha_1^k(n)} <1$,
in contradiction to the supposition $\alpha_1\,\llpow\,n$.
So we have $\alpha_1\circ\alpha_2\llpow \alpha_2$, and from this $\beta\,\lei\,\beta_2$ follows by Lemma 13.
\qed
\end{proof}

Applying Lemma 16 to $\alpha_1(n)=\log(n)$ and $\alpha_2(n)=n$, i.e., $\beta_2(n)=n^2$, we obtain $\beta(n)=n\cdot\log(n)$.
This just yields the quasilinear bounds: $\,\It(n\cdot\log(n))\equiv \Bqlin$, whereas $\It(n^2)\equiv \Bpol$.
Applying the construction to $\alpha_1(n) =\log(n)=\alpha_2(n)$, we obtain $\beta(n)= n\cdot \log\ite{2}(n)$ and 
$\It(\beta)\equiv\Bqmlin{2}$.
So one can successively generate all the sets $\Bqmlin{m}$ for $m\in\Nplus$.
Analogously, starting with $\alpha_1(n)=\log(n)$ and $\alpha_2(n)=\log^*(n)$, one generates the downwards infinite chain
$\,\ldots\,\llpow\, \log\ite{2}\circ\log^*(n)\,\llpow\,\log\circ\log^*(n)\,\llpow\,\log^*(n)$.
A start with  $\alpha_1(n)=\log^*(n)=\alpha_2(n)$ yields the downwards infinite chain
$\;\ldots\,\llpow\, \log^*\ite{3}(n)\,\llpow\,\log^*\ite{2}(n)\,\llpow\,\log^*(n)$.

\section{A dense universe of bounds above polynomial time}

The supposition $\alpha(n)\leqa n$ was essential for the proof of Lemma 13. Thus, at most polynomials can be reached by bounds of Type  1, and in order to obtain (a dense universe of) bounds beyond polynomial time, other types of construction are needed.
The following property of e-consistency ensures that a related function $\alpha$, if it is taken as the exponent in defining
$\beta(n)=n^{\alpha(n)}$, yields a controllable iteration set $\It(\beta)$.  

\begin{definition}[e-consistency]
A function $\fct{\alpha}{\N}{\N}$ is said to be e-consistent iff 
\[\{\alpha(n^{(\log(n))^k}):\, k\in\N\}\: \leqa\: \Pow(\alpha).\]
 This means that to every $k\in\N$ there exists an $l\in\N$ such that
$\:\alpha(n^{(\log(n))^k})\, \leqa\, (\alpha(n))^l\,$.
\end{definition}

Examples of e-consistent functions are the powers of the logarithm,
$\alpha(n)= (\log(n))^m$, for arbitrary $m\in\N$.
Indeed, then we have 
\[\alpha(n^{(\log(n))^k})=(\log(n^{(\log(n))^k}))^m\leqa (\,(\log(n))^k +1)^m
 \leqa (\log(n))^{(k+1)\cdot m}=(\alpha(n))^{k+1}.\]

\begin{lemma}
If $\beta(n)=n^{\alpha(n)}$ and $1\,\leqa \,\alpha(n)\,\leqa \log(n)$
with some nondecreasing and e-consistent function $\alpha$. Then
\[ \It(\beta)\,\eqa\,\{ n^{(\alpha(n))^l}: \; l\in\N\}.\]
If $\beta_i(n)=n^{\alpha_i(n)}$ with  nondecreasing, e-consistent functions $\alpha_i$ satisfying 
$1\,\leqa\, \alpha_i(n) \,\leqa\, \log(n)$ for $i=1,2$, then 
\[\beta_1\,\leqi\,\beta_2 \en \mbox{ iff } \en \alpha_1\,\leqpow\,\alpha_2 \quad \mbox{ and } \quad 
\beta_1\,\lei\,\beta_2 \en \mbox{ iff } \en \alpha_1\,\lepow\,\alpha_2,\]
and from $\,2\,\leqa\, \alpha_1\,\llpow\,\alpha_2\,$ it follows that $\,\beta_1\,\lei\,\beta_2\,$.
\end{lemma}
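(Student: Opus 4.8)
The plan is to mirror the proof of Lemma~13, replacing the factor function $\alpha$ by the exponent function and the arithmetical power $n\cdot(\alpha(n))^l$ by the quasipolynomial power $n^{(\alpha(n))^l}$; e-consistency will play exactly the role that f-consistency played there. First I would prove the representation $\It(\beta)\eqa\{n^{(\alpha(n))^l}:\,l\in\N\}$ by two inductions. For the upper bound I would show by induction on $k$ that $\beta\ite{k}(n)\leqa n^{(\alpha(n))^l}$ for a suitable $l$: the case $k=1$ is the definition, and for the step one computes
\[
\beta\ite{k+1}(n)=\bigl(\beta\ite{k}(n)\bigr)^{\alpha(\beta\ite{k}(n))}
 \leqa \bigl(n^{(\alpha(n))^l}\bigr)^{\alpha\bigl(n^{(\alpha(n))^l}\bigr)}
 = n^{(\alpha(n))^l\cdot\alpha\bigl(n^{(\alpha(n))^l}\bigr)}.
\]
Here the crucial estimate is that the inner exponent $(\alpha(n))^l$ is bounded above by $(\log n)^{k'}$ for some $k'$ — this follows from $\alpha(n)\leqa\log(n)$, so that $n^{(\alpha(n))^l}\leqa n^{(\log n)^{k'}}$ — and then e-consistency gives $\alpha\bigl(n^{(\log n)^{k'}}\bigr)\leqa(\alpha(n))^{l'}$, yielding $\beta\ite{k+1}(n)\leqa n^{(\alpha(n))^{l+l'}}$, using monotonicity of $\alpha$ throughout. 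For the lower bound I would show $n^{(\alpha(n))^k}\leqa\beta\ite{k}(n)$ by induction, writing $n^{(\alpha(n))^{k+1}}=\bigl(n^{(\alpha(n))^k}\bigr)^{\alpha(n)}\leqa\bigl(\beta\ite{k}(n)\bigr)^{\alpha(n)}\leqa\bigl(\beta\ite{k}(n)\bigr)^{\alpha(\beta\ite{k}(n))}=\beta\ite{k+1}(n)$, exactly paralleling the second induction of Lemma~13 and using $n\leq\beta\ite{k}(n)$ together with monotonicity of $\alpha$.

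Once the representation $\It(\beta_i)\eqa\{n^{(\alpha_i(n))^l}:\,l\in\N\}$ is in hand for $i=1,2$, the second assertion is essentially bookkeeping. For $\beta_1\leqi\beta_2$ iff $\alpha_1\leqpow\alpha_2$: if $\alpha_1\leqpow\alpha_2$, then to any $k$ there is $l$ with $\alpha_1^k\leqa\alpha_2^l$, whence $n^{(\alpha_1(n))^k}\leqa n^{(\alpha_2(n))^l}$, so $\It(\beta_1)\leqa\It(\beta_2)$; conversely, if $n^{(\alpha_1(n))^k}\leqa c\cdot n^{(\alpha_2(n))^l}$ almost everywhere for some $l$, then for large $n$ one can absorb the constant $c$ into the exponent and take logarithms, obtaining $(\alpha_1(n))^k\leqa(\alpha_2(n))^l\cdot(1+o(1))\leqa(\alpha_2(n))^{l+1}$, so $\alpha_1\leqpow\alpha_2$. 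The $\lei$-versus-$\lepow$ equivalence then follows by negating, and the last clause — that $2\leqa\alpha_1\llpow\alpha_2$ implies $\beta_1\lei\beta_2$ — is immediate from Lemma~12 (which gives $\alpha_1\lepow\alpha_2$ from $\alpha_1\llpow\alpha_2$ once $2\leqa\alpha_1$) combined with the $\lei$-equivalence just established.

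The main obstacle is the induction step for the upper bound: one must verify that the exponent $(\alpha(n))^l$ stays below some fixed power of $\log n$ so that e-consistency (which is stated only for exponents of the form $(\log n)^k$) actually applies to $\alpha\bigl(n^{(\alpha(n))^l}\bigr)$. This is where the hypothesis $\alpha(n)\leqa\log(n)$ is essential and where the analogy with Lemma~13 (whose corresponding hypothesis was $\alpha(n)\leqa n$, feeding into f-consistency with exponents $n^k$) is used; I would state this reduction carefully as the key sublemma-style observation inside the induction. The converse direction of the $\leqpow$ characterization also needs a small care in handling the additive constant inside an iterated logarithm/exponent, but this is routine once one takes logarithms. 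Everything else is a direct transcription of the Type~1 argument.
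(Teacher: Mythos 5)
Your proposal matches the paper's proof of this lemma essentially step for step: the same two inductions for $\It(\beta)\eqa\{n^{(\alpha(n))^l}:l\in\N\}$, with the upward step using monotonicity, $\alpha(n)\leqa\log(n)$ to bound the inner exponent, and e-consistency to get $\alpha\bigl(n^{(\alpha(n))^l}\bigr)\leqa(\alpha(n))^{l'}$, and then the $\leqpow$/$\lepow$ characterizations read off from the representation with the final clause via Lemma~12. The only cosmetic difference is your constant $c$ in the converse direction of the $\leqpow$ argument, which is unnecessary since $\leqa$ is plain almost-everywhere domination without multiplicative constants (and in any case the integrality of the exponents disposes of it), so no gap remains.
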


\begin{proof}
To show the inequality $\It(\beta)\,\leqa\,\{ n^{(\alpha(n))^l}: \; l\in\N\}$, we inductively prove that for all $k\in\Nplus$ there exists an $l\in\N$ such that
\[\beta\ite{k}(n)\,\leqa\, n^{(\alpha(n))^l}.\]
This holds for $k=l=1$. If it holds for some $k$ and $l$, then
\bea
\beta\ite{k+1}(n)&=& \beta\circ\beta\ite{k}(n) \\
&\leqa& \beta( \,n^{(\alpha(n))^l}\,) \\
&=& ( n^{(\alpha(n))^l})^{(\,\alpha( n^{(\alpha(n))^l})\,)       }\\
&\leqa& ( n^{(\alpha(n))^l})^{(\,\alpha( n^{(\log(n))^l})\,)       }\\
&\leqa&  ( n^{(\alpha(n))^l})^{(\,(\alpha( n))^{l'}\,)       }\\
&=&   n^{(\,\alpha(n))^{l+l'}},
\eea
for some $l'\in\Nplus$.

For the converse inequality, we use that  $n\leqa\beta\ite{k}(n)$ and  show that 
\[  n^{(\alpha(n))^k}\,\leqa\, \beta\ite{k}(n).\]
This holds for $k=1$, and if it holds for some $k$, then 
\bea
 n^{(\alpha(n))^{k+1}} & \leqa & ( \beta\ite{k}(n))^{\alpha(n)} \\
 & \leqa & ( \beta\ite{k}(n))^{\alpha(\beta\ite{k}(n))} \\
& = & \beta( \,\beta\ite{k}(n)\,) \\
& = & \beta\ite{k+1}(n).
\eea

Like in the proof of Lemma 13, the second assertion follows immediately,
and Lemma12 yields the last one.
\qed
\end{proof}

Next we show an analogue of Lemma 14.

\begin{lemma}
Let $\alpha_1$ and $\alpha_2$ be nondecreasing, e-consistent functions with
 $1\,\leqa \,\alpha_i(n)\leqa\log(n)$ for $i=1,2$.
Then the function $\alpha$ defined by
\[\alpha(n) = 2^{\msqrti(\,(\mlogi\,\circ\,\alpha_1(n))\cdot(\mlogi\,\circ \,\alpha_2(n))\,)  }\]
fulfills these properties, too.
\end{lemma}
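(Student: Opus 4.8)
The plan is to verify the three required properties of $\alpha$ — nondecreasingness, the bounds $1 \leqa \alpha(n) \leqa \log(n)$, and e-consistency — one at a time, closely mirroring the proof of Lemma 14, with the single genuine difference being that the upper bound is now $\log(n)$ rather than $n$. The first two properties are immediate: since $1 \leqa \alpha_i(n)$ we have $\mlog \circ \alpha_i(n) \geqa 0$, so $\msqrt$ of a product of two nonnegative quantities is nonnegative and hence $\alpha(n) \geq 2^0 = 1$; for the upper bound, monotonicity of $\mlog$, $\msqrt$ and $2^{(\cdot)}$ gives $\alpha(n) \leq \max(\alpha_1(n), \alpha_2(n)) \leqa \log(n)$, exactly as in Lemma 14 but with $\log(n)$ in place of $n$. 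Nondecreasingness follows verbatim from the Lemma 14 argument: $\alpha_1$, $\alpha_2$, $\mlog$, $\msqrt$ and $\beta_{\mbox{\footnotesize exp}}$ are all nondecreasing, and products and compositions of nondecreasing functions stay nondecreasing.

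The substantive part is e-consistency, and here I would check that it suffices to treat the single exponent $k=1$, i.e.\ to bound $\alpha(n^{\log(n)})$ by a fixed power of $\alpha(n)$; the general $k$ then follows by the same iteration trick used implicitly in earlier lemmas, since $n \mapsto n^{\log n}$ composed with itself produces an $n^{(\log n)^{O(1)}}$-type argument. For $k=1$ I would write
\[
\alpha(n^{\log(n)}) \;=\; 2^{\msqrti\bigl((\mlogi \circ \alpha_1(n^{\log(n)})) \cdot (\mlogi \circ \alpha_2(n^{\log(n)}))\bigr)},
\]
and then use the e-consistency of $\alpha_1$ and $\alpha_2$ to replace $\alpha_i(n^{\log(n)})$ by $(\alpha_i(n))^{l_i}$ for suitable $l_i \in \N$ (this is precisely what e-consistency, Definition~10, provides at $k=1$). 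Passing $\mlog$ through these powers introduces multiplicative constants $l_i$ on $\mlog \circ \alpha_i(n)$ — using $\mlog((\alpha_i(n))^{l_i}) \leqa l_i \cdot \mlog(\alpha_i(n)) + 1$ — and then one pulls the constant $\sqrt{l_1 l_2}$ (up to the usual $+1$ slack from $\msqrt$ of a product, $\msqrt(ab) \leq (\msqrt(a)+1)(\msqrt(b)+1)$) out of the outer $\msqrt$, exactly following the chain of inequalities in the Lemma 14 proof. The upshot is $\alpha(n^{\log(n)}) \leqa (\alpha(n))^{l'}$ for a suitable $l' \in \N$, which is e-consistency.

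The one point deserving care — and the place I expect to spend the most attention — is that e-consistency is defined relative to the exponent map $n \mapsto n^{(\log n)^k}$ rather than $n \mapsto n^k$, so one must check that the reduction from general $k$ to $k=1$ actually closes: composing $n \mapsto n^{\log n}$ with $\alpha_i$ inside a fixed number of steps still lands in the regime covered by Definition~10 for $\alpha_i$, because $(n^{\log n})^{\log(n^{\log n})} = n^{(\log n)(\log n + O(1))} = n^{(\log n)^{2}(1+o(1))}$, which is dominated by $n^{(\log n)^{3}}$ for large $n$; hence a single invocation of e-consistency of $\alpha_i$ at a large enough $k$ absorbs any fixed iterate. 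Once that bookkeeping is in place, the arithmetic is routine and identical in shape to Lemma~14; since $\msqrt$, $\mlog$ and exponentiation are all linear-time constructible and the whole definition of $\alpha$ is a fixed composition of linear-time constructible operations, constructibility of $\alpha$ is inherited and needs no separate argument here.
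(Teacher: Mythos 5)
Your proposal is correct in substance, and its core computation — invoking the e-consistency of $\alpha_1,\alpha_2$, passing $\mlog$ through the resulting powers, and pulling the constant out of the outer $\msqrt$ — is exactly the paper's chain of inequalities. Where you genuinely differ is the extra reduction layer: you first reduce the e-consistency of $\alpha$ to the single exponent $k=1$ and then recover general $k$ by iterating $g(n)=n^{\log(n)}$. The paper needs no such reduction: Definition 10 already supplies, for \emph{every} $k$, numbers $l_1,l_2$ with $\alpha_i(n^{(\log(n))^k})\leqa(\alpha_i(n))^{l_i}$, so it simply fixes $k$, applies e-consistency of $\alpha_1$ and $\alpha_2$ at that same $k$, and bounds $(\mlog\circ\alpha_1(n^{(\log(n))^k}))\cdot(\mlog\circ\alpha_2(n^{(\log(n))^k}))\leqa l_1 l_2\cdot(\mlog\circ\alpha_1(n))\cdot(\mlog\circ\alpha_2(n))$, after which taking $\msqrt$ and exponentiating gives the claim directly. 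Your detour is valid in principle — for a nondecreasing $\alpha$ the iterates of $g$ majorize $n^{(\log(n))^k}$, so the $k=1$ bound can be composed with itself — but the justification as written contains a slip: $\log(n^{\log(n)})$ is of order $(\log(n))^2$, not $\log(n)+O(1)$, so $g(g(n))$ is of order $n^{(\log(n))^3}$; your stated domination by $n^{(\log(n))^3}$ survives, but the intermediate identity does not, and in any case the reduction buys nothing over the direct argument, so I would spell out the iteration (or drop it) rather than call it ``implicit in earlier lemmas'' — the paper states such a shortcut only for f-consistency, after Definition 9. Two cosmetic points: $\mlog((\alpha_i(n))^{l_i})\le l_i\cdot\mlog(\alpha_i(n))+l_i$ rather than $+1$ (harmless, as the constants are absorbed), and the linear-time constructibility you mention at the end is not part of this lemma's claim; it is only needed later, for bounds of Type 2 in Theorem 2.
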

 \begin{proof}

Analogously to the proof of Lemma 14, we only have to prove that $\alpha$ is e-consistent, i.e., that to any $k\in\N$ there is an $l\in\N$ with 
\[ 2^{\msqrti(\,(\mlogi\,\circ\,\alpha_1(n^{(\log(n))^k}))\cdot(\mlogi\,\circ \,\alpha_2(n^{(\log(n))^k}))\,)  } \,
\leqa \,( 2^{\msqrti(\,(\mlogi\,\circ\,\alpha_1(n))\cdot(\mlogi\,\circ \,\alpha_2(n))\,)  })\,^l. \]
This means that \\
\hspace*{0.9cm}
$ \msqrt(\,(\mlog\,\circ\,\alpha_1(n^{(\log(n))^k}))\cdot(\mlog\,\circ \,\alpha_2(n^{(\log(n))^k}))\,)$  \\
\hspace*{1cm}\hfill
$\leqa \,   l\cdot \msqrt(\,(\mlog\,\circ\,\alpha_1(n))\cdot(\mlog\,\circ \,\alpha_2(n))\,)$ , \\
and this follows if to any $k$ there is an $m$ such that 
\[(\mlog\,\circ\,\alpha_1(n^{(\log(n))^k}))\cdot(\mlog\,\circ \,\alpha_2(n^{(\log(n))^k}))
\,\leqa \,  m \cdot(\mlog\,\circ\,\alpha_1(n))\cdot(\mlog\,\circ \,\alpha_2(n)).\] 
Since $\alpha_1$ and $\alpha_2$ are e-consistent, there are numbers $l_1,l_2\in\N$ such that
\bea
(\mlog\,\circ\,\alpha_1(n^{(\log(n))^k}))\cdot(\mlog\,\circ \,\alpha_2(n^{(\log(n))^k})) &&
\eea
\vspace*{-1.25cm}

\bea \hspace*{6.4cm}
& \leqa & (\mlog\,\circ\,(\,(\alpha_1(n))^{l_1}\,)\,\cdot\, (\mlog\,\circ\,(\,(\alpha_2(n))^{l_2}\,)\\
& \leqa & l_1\cdot(\mlog\,\circ\,\alpha_1(n)\,)\,\cdot\,  l_2\cdot (\mlog\,\circ\,\alpha_2(n)\,)\\
& = & l_1\cdot l_2\cdot(\mlog\,\circ\,\alpha_1(n)\,)\cdot (\mlog\,\circ\,\alpha_2(n)\,).
\eea
So the assertion has been shown.
\qed
\end{proof}

By means of Lemmas 15, 17 and 18, a dense universe of bound above the polynomial ones can be constructed 
analogously to the proof of Theorem 1.

\begin{definition}[Bounds of Type 2]
A time bound $\beta$ is said to be of Type 2 \en iff
\[\beta(n)= n^{\alpha(n)}\]
with a nondecreasing function $\alpha$ such that $2\,\leqa\,\alpha(n)\,\leqa\,\log(n)$ and $\alpha$ is e-consistent and constructible in linear time.
\end{definition}
It should be noticed that all bounds of Type 2 are time-constructible.
The polynomials $\beta(n)=n^k$ with $k\geq 2$ represent the (with respect to $\leqi$) smallest bounds of Type 2.
The largest one is given by the smallest quasipolynomial
$\beta(n)= n^{\log(n)}$ which is it-equivalent to $n^{(\log(n))^k} \eqi 2^{(\log(n))^{k+1}}$ for all $k\in\Nplus$.

\begin{theorem}
There is a tame set $U_2$ consisting of time bounds of Type 2 \en such that 
\[\Bpol  \, \cup \,\Bqpol \; \emi \; U_2\]
and $U_2$ is densely linearly ordered by $\,\leqi$.
 Moreover, if we have $\beta_i(n)= n^{\alpha_i(n)}$, for $i=1,2$ and 
$\beta_1,\beta_2\in U_2$, then it holds:
$\;\beta_1\lei\beta_2\:$ iff $\:\alpha_1\llpow\alpha_2$.
\end{theorem}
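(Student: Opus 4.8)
The plan is to mimic the proof of Theorem~1, replacing bounds of Type~1 by bounds of Type~2 throughout and invoking Lemmas~17 and~18 (together with the already available Lemma~15) in place of Lemmas~13 and~14. First I would fix the starting set $W_0=\{\,n^2,\; n^{\log(n)}\,\}$. Both functions are of Type~2 --- their exponents $\alpha(n)=2$ and $\alpha(n)=\log(n)$ are nondecreasing, e-consistent, constructible in linear time, and satisfy $2\leqa\alpha(n)\leqa\log(n)$ --- and they represent $\Bpol$ and $\Bqpol$ with respect to $\eqi$ (every $n^k$ with $k\geq2$ is it-equivalent to $n^2$, and every $n^{(\log(n))^k}$ to $n^{\log(n)}$), exactly as $2n$ and $n^2$ represent $\Blin$ and $\Bpol$ in the proof of Theorem~1; so $W_0\subseteq U_2$ already yields $\Bpol\cup\Bqpol\emi U_2$. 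Moreover $W_0$ is linearly ordered by $\leqi$ --- the clause ``$2\leqa\alpha_1\llpow\alpha_2$ implies $\beta_1\lei\beta_2$'' of Lemma~17, applied with $\alpha_1$ the constant $2$ and $\alpha_2=\log$, gives $n^2\lei n^{\log(n)}$ --- and for its two elements the equivalence ``$\beta_1\lei\beta_2$ iff $\alpha_1\llpow\alpha_2$'' asserted at the end of the theorem holds trivially.

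For the inductive step, assume $W_j$ is a finite set of Type~2 bounds which is linearly ordered by $\leqi$ and in which that equivalence holds. I would build $W_{j+1}$ by adjoining to $W_j$, for every $\leqi$-consecutive pair $\beta_1\lei\beta_2$ of $W_j$ (one admitting no $\beta_3\in W_j$ with $\beta_1\lei\beta_3\lei\beta_2$), the ``midpoint'' bound
\[\beta(n)= n^{\,2^{\msqrti(\,(\mlogi\circ\alpha_1(n))\cdot(\mlogi\circ\alpha_2(n))\,)}},\qquad\mbox{where }\beta_i(n)=n^{\alpha_i(n)}.\]
By Lemma~18 the exponent $\alpha(n)=2^{\msqrti((\mlogi\circ\alpha_1(n))\cdot(\mlogi\circ\alpha_2(n)))}$ is nondecreasing and e-consistent with $1\leqa\alpha(n)\leqa\max(\alpha_1(n),\alpha_2(n))\leqa\log(n)$, while $2\leqa\alpha(n)$ holds since $2\leqa\alpha_i$ forces $\mlog\circ\alpha_i(n)\geq1$, hence the exponent $\geq1$, for almost all $n$; linear-time constructibility is routine. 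Thus $\beta$ is of Type~2. Because the equivalence holds in $W_j$, from $\beta_1\lei\beta_2$ we get $2\leqa\alpha_1\llpow\alpha_2$, so Lemma~15 yields $\alpha_1\llpow\alpha\llpow\alpha_2$, and then Lemma~17 gives $\beta_1\lei\beta\lei\beta_2$. A short case analysis (as in the proof of Theorem~1) then shows that $W_{j+1}$ is again linearly ordered by $\leqi$, consists only of Type~2 bounds, and that the equivalence ``$\beta_1'\lei\beta_2'$ iff $\alpha_1'\llpow\alpha_2'$'' persists for all of its members.

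Having this, I would put $U_2=\bigcup\nolimits_{j\in\N}W_j$. It consists of Type~2 bounds and satisfies $\Bpol\cup\Bqpol\emi U_2$ because $W_0\subseteq U_2$; it is linearly ordered by $\leqi$ because each $W_j$ is and every newly adjoined bound lies strictly between a pair of $\leqi$-consecutive old ones; and the ordering is dense because any two $\leqi$-comparable elements of $U_2$ already lie in some common $W_j$, so a bound strictly between them is produced at stage $j+1$ (if not earlier). The equivalence $\beta_1\lei\beta_2$ iff $\alpha_1\llpow\alpha_2$ holds on $U_2$ since it holds on each $W_j$. Finally, for tameness, take distinct $\beta_1,\beta_2\in U_2$; by the linear ordering we may assume $\beta_1\lei\beta_2$, hence $\alpha_1\llpow\alpha_2$, so $\alpha_1^2\leqa\alpha_2$, and together with $2\leqa\alpha_1$ this gives $\alpha_2(n)-\alpha_1(n)\geq\alpha_1(n)\geq2$ for almost all $n$. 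Consequently $\frac{\beta_1(n)}{\beta_2(n)}=n^{\alpha_1(n)-\alpha_2(n)}\leq n^{-2}$ for almost all $n$, whence $\lim_{n\to\infty}\frac{\beta_1(n)}{\beta_2(n)}=0$; since distinct elements of $U_2$ lie in pairwise distinct $\eqi$-classes (the midpoints are strictly between the old bounds at every stage), all the required limits exist and $U_2$ is tame.

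I expect the only points really needing care to be (i)~that the midpoint operation must not leave Type~2, i.e.\ that $2\leqa\alpha(n)\leqa\log(n)$ and e-consistency survive arbitrarily many iterations of the construction, and (ii)~the bookkeeping that the invariant ``$\beta_1\lei\beta_2$ iff $\alpha_1\llpow\alpha_2$'' is preserved at each stage, since this single fact simultaneously drives the linear-ordering claim, the density, and the tameness. Both are taken care of by Lemmas~15, 17 and~18, which were designed precisely for this; once they are in hand the rest is a faithful transcription of the proof of Theorem~1, and no genuinely new obstacle arises.
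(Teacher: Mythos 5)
Your proposal is correct and follows exactly the paper's own route: the paper's proof of Theorem 2 likewise takes $W_0=\{n^2,\,n^{\log(n)}\}$, adjoins at each stage the midpoint bound $n^{\alpha(n)}$ with $\alpha(n)=2^{\msqrti((\mlogi\circ\alpha_1(n))\cdot(\mlogi\circ\alpha_2(n)))}$ for $\leqi$-consecutive pairs, sets $U_2=\bigcup_j W_j$, and appeals to Lemmas 15, 17 and 18. You merely spell out details (the check $2\leqa\alpha$, preservation of the invariant, and the explicit tameness limit) that the paper compresses into ``analogously to the proof of Theorem 1.''
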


\begin{proof}
Analogously to the proof of Theorem 1, let
$\: U_2=  \bigcup\nolimits_{i\in\Ni} W_i\,,\:$
where we now start with the two-element set
$W_0= \{ n^2,  n^{\log(n)} \}$ and put
\bea
W_{j+1}& =& W_j \,\cup\,\{\beta: \mbox{ there are functions $\beta_1, \beta_2\in W_j$ such that }  \beta_1\,\lei\,\beta_2,\\
&& \hspace*{1.97cm} \mbox{but there is no $\beta_3\in W_j$ with $\beta_1\lei\beta_3\lei\beta_2$},\\
&& \hspace*{1.97cm} \mbox{and }\:
\beta(n)=n^{\alpha(n)}\: \mbox{ with }
\alpha(n) = 2^{\msqrti(\,(\mlogi\,\circ\,\alpha_1(n))\cdot(\mlogi\,\circ \,\alpha_2(n))\,)  }, \\
 && \hspace*{7.2cm} \mbox{where $\beta_i(n)=n^{\alpha_i(n)}\,$ for $i=1,2$} \,\},
\eea
for all $j\in\N$. 

By  Lemmas 15, 17 and 18, one sees that the assertions of Theorem 2 are valid.
\qed
\end{proof}

\begin{corollary}
There is a tame set $U_0$, which consists of time bounds of Type  1 and 2 only, \linebreak
 such that 
\[\Blin \,\cup\, \Blogs \,\cup\, \bigcup\nolimits_{m\in\Nplusi} \! \Bqmlin{m} \, \cup \,\Bpol\, \cup \,\Bqpol \; \emi \; U_0\]
and $U_0$ is densely linearly ordered by $\,\leqi$.
The set $\It(U_0)$ is it-closed, tame and consists of time-constructible bounds only.
\end{corollary}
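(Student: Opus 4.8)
The plan is to take $U_0 := U_1 \cup U_2$, with $U_1$ and $U_2$ the universes furnished by Theorems 1 and 2, and then to control how the two pieces meet at the bound $n^2$. By Theorems 1 and 2, $n^2$ is at the same time the $\leqi$-greatest element of $U_1$ and the $\leqi$-least element of $U_2$, and it genuinely lies in $U_1\cap U_2$: it is of Type 1 via the factor function $\alpha(n)=n$ and of Type 2 via the exponent function $\alpha\equiv 2$.

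First I would dispose of the routine parts. Every element of $U_0$ is of Type 1 or of Type 2, hence time-constructible. From Theorem 1, $\Blin\cup\Blogs\cup\bigcup\nolimits_{m\in\Nplusi}\Bqmlin{m}\cup\Bpol\emi U_1\subseteq U_0$, and from Theorem 2, $\Bpol\cup\Bqpol\emi U_2\subseteq U_0$; since $\emi$ is monotone in its target and distributes over finite unions in its source, the claimed embedding into $U_0$ follows. On $U_1$ and on $U_2$ the relation $\leqi$ is a dense linear order, and $\beta_1\leqi n^2\leqi\beta_2$ for every $\beta_1\in U_1$, $\beta_2\in U_2$, so $\leqi$ linearly orders $U_0$; density survives the gluing, since two distinct bounds of $U_0$ either lie in a common $U_i$, where I invoke its density, or strictly on opposite sides of $n^2$, which then sits between them (the subcase that one of them is $n^2$ again reduces to density of one $U_i$). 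For tameness of $U_0$ the only new case is a pair $\beta_1\in U_1$, $\beta_2\in U_2$: writing $\beta_1(n)=n\cdot\alpha_1(n)$ and $\beta_2(n)=n^{\alpha_2(n)}$, the constraints $\alpha_1(n)\leqa n$ and $2\leqa\alpha_2(n)$ yield $\beta_1(n)\leqa n^2\leqa\beta_2(n)$, so $\frac{\beta_1(n)}{\beta_2(n)}=\frac{\beta_1(n)}{n^2}\cdot\frac{n^2}{\beta_2(n)}$ is a product of two sequences that converge (to values $\leq 1$) by the tameness of $U_1$ and of $U_2$, hence itself converges.

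It remains to handle $\It(U_0)=\It(U_1)\cup\It(U_2)$. Its elements are time-constructible, iterates of time-constructible bounds being time-constructible by Lemma 2 applied inductively, and it is closed under the operator $\It$ because $\It$ is idempotent. For tameness I would record two structural facts. First, every $\gamma\in\It(U_1)$ is polynomially bounded: if $\gamma=\beta\ite{m}$ with $\beta(n)=n\cdot\alpha(n)\in U_1$, then Lemma 13 gives $\gamma\leqa n\cdot(\alpha(n))^{l}$ for some $l$, and $\alpha(n)\leqa n$ yields $\gamma\leqa n^{l+1}$. Second, every $\gamma=\beta\ite{m}\in\It(U_2)$, $\beta(n)=n^{\alpha(n)}\in U_2$, is either an iterate of $n^2$ — precisely when $\beta\eqi n^2$, which (since $U_2$ is linearly ordered by $\leqi$ with least element $n^2$) means $\beta=n^2$, so that $\gamma\in\It(n^2)\subseteq\It(U_1)$ — or superpolynomial: if instead $n^2\lei\beta$, then by the equivalence of Theorem 2, $\beta_1\lei\beta_2$ iff $\alpha_1\llpow\alpha_2$, the exponent $\alpha$ of $\beta$ is unbounded, and from $\beta\leqa\gamma$ and $\beta(n)=n^{\alpha(n)}$ with $\alpha(n)\to\infty$ we get $n^{c}\leo\gamma$ for every $c$. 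Now take $\gamma_1\in\It(U_1)$, $\gamma_2\in\It(U_2)$. If $\gamma_2\in\It(n^2)$, then both bounds lie in $\It(U_1)$ and $\lim\gamma_1(n)/\gamma_2(n)$ exists by the tameness of $\It(U_1)$. If $\gamma_2$ is superpolynomial, then $\gamma_1\leqa n^{k}\leo\gamma_2$, so $0\le\gamma_1(n)/\gamma_2(n)\le n^{k}/\gamma_2(n)\to 0$. Pairs with both bounds in $\It(U_1)$ are settled by the tameness of $\It(U_1)$, and pairs with both bounds in $\It(U_2)$ by the tameness of $\It(U_2)$, obtained by the argument parallel to the one for $\It(U_1)$ with Lemma 17 playing the role of Lemma 13. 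Hence $\It(U_0)$ is tame, which completes the corollary.

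I expect the tameness of $\It(U_0)$ to be the main obstacle. Unlike $U_1$ and $U_2$ themselves, the sets $\It(U_1)$ and $\It(U_2)$ are not separated by $n^2$: both contain all the polynomial iterates $\It(n^2)=\{n^{2^{j}}:j\in\Nplus\}$, so one cannot simply assert that $\It(U_1)$ lies entirely below $\It(U_2)$. What makes the split above work is that a non-polynomial element of $\It(U_2)$ is genuinely superpolynomial, and this rests on Theorem 2's characterization $\beta_1\lei\beta_2$ iff $\alpha_1\llpow\alpha_2$, which forces the exponent function of any $\beta\in U_2$ strictly above $n^2$ to be unbounded; the companion fact that every Type 1 iterate stays polynomially bounded is immediate from Lemma 13 together with $\alpha(n)\leqa n$. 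Everything else is bookkeeping of the kind already carried out in Theorems 1 and 2.
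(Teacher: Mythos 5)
Your construction coincides with the paper's own proof: the paper likewise sets $U_0=U_1\cup U_2$ with the universes from Theorems 1 and 2, notes $\It(U_0)=\It(U_1)\cup\It(U_2)$, and simply asserts that the tameness ``holds by construction.'' Your extra bookkeeping --- the gluing of the two orders at $n^2$ and the observation that Type-1 iterates stay polynomially bounded while the non-polynomial Type-2 iterates are superpolynomial, which settles the mixed pairs --- is correct and merely makes explicit what the paper leaves implicit.
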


\begin{proof}
This immediately  follows by putting $U_0=U_1\cup U_2$ with the sets $U_1$ and $U_2$ taken from the proofs of Theorems 1 and 2, respectively. 
Then it holds $\It(U_0)= \It(U_1)\cup\It(U_2)$, and the tameness holds by construction.
\qed
\end{proof}

It should be noticed that the sets $W_0$ from the proofs of Theorems 1 and 2 could arbitrarily enlarged by further bounds of Type 1 and 2, respectively, as long as they remain linearly ordered  by $\,\leqi$ and yield tame iteration sets.

Unfortunately, the proof of (the first part of) Lemma 17 is based on the supposition that $\alpha(n)\leqa \log(n)$. So the quasipolynomials are the maximal bounds we can reach by our technique. It remains an open problem to fill the gap between the quasipolynomial $\beta(n)= n^{\log(n)}$ and the exponential function 
$\beta_{\mbox{\footnotesize exp}}(n)= 2^n$
by another set of bounds densely ordered with respect to $\leqi$.

\section{Closures and cuts over certain universes}

In order to get an impression of the variety of regular sets,
we first study the systems of regular subsets over special universes.
For an arbitrary universe $U$ of time bounds, let
$\Reg(U)$ denote the related system:
\[ \Reg(U) =\{B: B\subseteq U \mbox{ and $B$ is regular}\,\}.\]
Since the relation $\leq$ (meaning $\leqa$ or $\leqo$) is  reflexive and transitive but not antisymmetric, it is natural to consider the canonical factorization of $\Reg(U)$, i.e.,
\[\ov{\Reg(U)}=\: \parbox{1.55cm}{$\Reg(U)_{{\mbox{\normalsize $\!/\!$}}_{\mbox{\normalsize $\equiv$}}}$}\: =\{ [B]: \: B\in\Reg(U)\,\}.\]
It consists of the {\it confinality classes}\/ over $U$, 
\[[B]=\{ B': \, B'\in\Reg(U) \mbox{ and } B'\equiv B\,\}.\]
The relation $\leq$  is canonically transferred from $\Reg(U)$ to $\ov{\Reg(U)}$ by
\[ [B_1] \leq [B_2] \en \mbox{ iff } \en B_1\leq B_2 \qquad \mbox{(for $B_1,B_2\in\Reg(U)\,$)}.\]
Obviously, the validity of  $ [B_1] \leq [B_2] $ does not depend on the choice of representatives $B_1$ and $B_2$ of the confinality classes, and $\leq$ is a reflexive, transitive and antisymmetric relation in $\ov{\Reg(U)}$.
So the ordered set $(\,\ov{\Reg(U)}, \leq\,)$ represents a more concise view to the confinality between regular subsets of the universe $U$. Nevertheless, we shall often prefer to work directly with the  representatives $B\in\Reg(U)$ instead of their 
classes $[B]\in\ov{\Reg(U)}$.

For an arbitrary binary relation $\leq$ in an arbitrary universe $U$, a subset $S\subseteq U$ is said to be (downwards) {\it closed}\/  under $\leq$ if it holds $x\in S$ whenever $x\leq y$ for some $y\in S$. It turns out that, for regular subsets $B$ of an arbitrary universe $U$ of time bounds, the three concepts of closedness with respect to $\leqa$, $\leqo$ and $\leqi$, respectively, coincide.
So the {\it closure}\/ of $B$ within $U$ can be defined as 
\[ \Cl_U(B) = \{\beta\in U:\, \mbox{there is a $\beta'\in B$ with $\beta\leqi\beta'$}\,\}.\]

\begin{lemma}
Let $U$ be a universe of time bounds and $B\subseteq U$ a regular subset. Then $B$ is closed under $\leqa$ iff it is closed under $\leqo$ iff it is closed under $\leqi$, and this holds iff $B=\Cl_U(B)$. 
\end{lemma}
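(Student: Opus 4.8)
The plan is to prove a cycle of equivalences among four properties of the regular set $B\subseteq U$: (a) $B$ is closed under $\leqa$; (b) $B$ is closed under $\leqo$; (c) $B$ is closed under $\leqi$; (d) $B=\Cl_U(B)$. Two of the links cost nothing and need no regularity. Since $\beta\leqa\beta'$ implies $\beta\leqo\beta'$ (take the constant $1$) and, by Lemma 1, $\beta\leqa\beta'$ implies $\beta\ite{m}\leqa\beta'\ite{m}$ for every $m$, hence $\beta\leqi\beta'$, every set closed under $\leqo$ or under $\leqi$ is a fortiori closed under $\leqa$; this gives (b)$\,\Rightarrow\,$(a) and (c)$\,\Rightarrow\,$(a). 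For (c)$\,\Leftrightarrow\,$(d), reflexivity of $\leqi$ gives $B\subseteq\Cl_U(B)$ unconditionally, so $B=\Cl_U(B)$ just means $\Cl_U(B)\subseteq B$, which by the very definition of $\Cl_U(B)$ is precisely the statement that $\beta\leqi\beta'\in B$ forces $\beta\in B$ for $\beta\in U$, i.e.\ (c).

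The substance is in the two implications out of (a). For (a)$\,\Rightarrow\,$(b): given $\beta\in U$ with $\beta\leqo\beta'$ for some $\beta'\in B$, write $\beta\leqa c\cdot\beta'$ with a constant $c\in\Nplus$; by the remark following Definition 1, choose $\beta''\in B$ with $c\cdot\beta'\leqa\beta''$, so $\beta\leqa\beta''\in B$, and (a) yields $\beta\in B$. For (a)$\,\Rightarrow\,$(c): the essential input is the fact, established in Section 3 and used repeatedly afterwards, that $\It(B)\leqa B$ for every regular $B$. Given $\beta\in U$ with $\beta\leqi\beta'$, $\beta'\in B$, we have $\It(\beta)\leqa\It(\beta')$; as $\beta=\beta\ite{1}\in\It(\beta)$, there is an iterate $\beta'\ite{m}\in\It(\beta')$ with $\beta\leqa\beta'\ite{m}$, and since $\beta'\ite{m}\in\It(B)$ there is a $\gamma\in B$ with $\beta'\ite{m}\leqa\gamma$; hence $\beta\leqa\gamma\in B$, and (a) yields $\beta\in B$. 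Chaining (a)$\,\Rightarrow\,$(b)$\,\Rightarrow\,$(a), (a)$\,\Rightarrow\,$(c)$\,\Leftrightarrow\,$(d) and (c)$\,\Rightarrow\,$(a) closes the equivalence and, in particular, shows that the set $\Cl_U(B)$ is unchanged whether one defines it via $\leqa$, $\leqo$ or $\leqi$.

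I expect the only step with any real content to be (a)$\,\Rightarrow\,$(c), and even there the whole weight rests on the already-proved inclusion $\It(B)\leqa B$ for regular $B$; granted that, it is a short chase through iterates. The one routine point to watch is that membership in $U$ is never endangered: the candidate $\beta$ is assumed to lie in $U$ from the start, and every dominating bound produced ($\beta''$, $\gamma$, and the iterates of $\beta'$) lies in $B\subseteq U$, so the closure hypotheses of (a), which quantify only over elements of $U$, apply directly.
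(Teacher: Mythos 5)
Your proposal is correct and takes essentially the same approach as the paper: the only cosmetic difference is that the paper chains the closures cyclically (closed under \leqa\ $\Rightarrow$ closed under \leqo\ $\Rightarrow$ closed under \leqi\ $\Rightarrow$ closed under \leqa), whereas you use \leqa-closedness as a hub with the two trivial reverse implications. The substantive ingredients are identical — regularity to absorb the constant in $c\cdot\beta'\leqa\beta''$, the fact $\It(B)\leqa B$ (equivalently, $\beta'\ite{m}\leqa\beta''$ for some $\beta''\in B$) to handle \leqi, and the immediate identification of \leqi-closedness with $B=\Cl_U(B)$.
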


\begin{proof}
First let $B$ be closed under $\leqa$ and $\beta\leqo\beta'\in B$ for some $\beta\in U$. Then $\beta\leqa c\cdot \beta'$ for some $c\in\Nplus$. Since $B$ is regular, there is a $\beta''\in B$ with $ c\cdot \beta' \leqa \beta''$, hence $\beta\leqa \beta''$ and
 $\beta\in B$.

Now let $B$  be closed under $\leqo$ and $\beta\leqi\beta'\in B$. Then $\beta\leqa \beta'\ite{m}$ for some $m\in\Nplus$.
Since $B$ is regular, it follows $\beta'\ite{m}\leqa \beta''$ for some $\beta''\in B$, hence $\beta\leqa\beta''$, thus $\beta\leqo\beta''$ and $\beta \in B$.

Finally, if $B$ is closed under $\leqi$ and $\beta\leqa\beta'\in B$, then $\beta\leqi \beta'$, hence $\beta\in B$.

It always holds $B\subseteq \Cl_U(B)$, and if $B$ is closed under $\leqi$, we also have $\Cl_U(B)\subseteq B$.
On the other hand, $\Cl_U(B)=B$ implies that $B$ is closed under $\leqi$.
\qed
\end{proof}

By Lemma 19, the closure of a regular set $B\subseteq U$ could equivalently be defined by means of $\leqo$ or $\leqa$, e.g.,
\[ \Cl_U(B) = \{\beta\in U:\, \mbox{there is a $\beta'\in B$ with $\beta\leqa\beta'$}\,\},\]
and it is the (with respect to inclusion) smallest superset of $B$ which is closed (with respect to $\leqa$, $\leqo$ or $\leqi$).

Of course, not every regular set of bounds is closed. However, the closed regular sets yield a (with respect to confinality) complete
system of representatives of all regular sets over a given universe of bounds.

\begin{lemma}
Let $U$ be a universe of time bounds. For any regular subset $B\subseteq U$, the closure $\Cl_U(B)$  is regular, too, and it holds
\[B\equiv\Cl_U(B).\]
\end{lemma}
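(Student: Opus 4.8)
The plan is to verify the two regularity conditions for $\Cl_U(B)$ directly, reducing everything to the already-established regularity of $B$ and the monotony lemmas. First I would observe that $\Cl_U(B)\neq\emptyset$ since $B\subseteq\Cl_U(B)$ and $B\neq\emptyset$. For condition i) of Definition 1, take any $\beta\in\Cl_U(B)$; by definition there is $\beta'\in B$ with $\beta\leqi\beta'$, hence (since $\leqi$ weakens $\leqa$ on iterations, and $\beta\leqa\beta'\ite{m}$ for some $m$) in particular $\beta\leqa\beta'$ after possibly passing to an iterate — more cleanly, use the characterization from Lemma 19 that $\Cl_U(B)=\{\beta\in U:\exists\beta'\in B,\ \beta\leqa\beta'\}$. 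Since $B$ is regular there is a time-constructible $\beta''\in B$ with $\beta'\leqa\beta''$; then $\beta''\in\Cl_U(B)$ (it even lies in $B$) is time-constructible and $\beta\leqa\beta''$, giving i).

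For condition ii), take $\beta_1,\beta_2\in\Cl_U(B)$; choose $\beta_1',\beta_2'\in B$ with $\beta_1\leqa\beta_1'$ and $\beta_2\leqa\beta_2'$. By Lemma 1 (monotony of $+$ and $\circ$ under $\leqa$) we get $\beta_1+\beta_2\circ\beta_1\leqa\beta_1'+\beta_2'\circ\beta_1'$, and by regularity of $B$ there is $\beta''\in B$ with $\beta_1'+\beta_2'\circ\beta_1'\leqa\beta''$. Since $\beta''\in B\subseteq\Cl_U(B)$ and $\beta_1+\beta_2\circ\beta_1\leqa\beta''$, condition ii) holds. This establishes that $\Cl_U(B)$ is regular.

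Finally, for the confinality $B\equiv\Cl_U(B)$: from $B\subseteq\Cl_U(B)$ we immediately get $B\leqa\Cl_U(B)$, hence $B\leq\Cl_U(B)$. Conversely, every $\beta\in\Cl_U(B)$ satisfies $\beta\leqa\beta'$ for some $\beta'\in B$ by the definition of the closure, so $\Cl_U(B)\leqa B$, i.e. $\Cl_U(B)\leq B$. Thus $B\equiv\Cl_U(B)$.

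I do not expect a serious obstacle here; the only point requiring a little care is making sure the two equivalent descriptions of $\Cl_U(B)$ (via $\leqi$ and via $\leqa$) supplied by Lemma 19 are invoked correctly, so that in condition i) one really does land on a time-constructible element of $B$ and in condition ii) the monotony lemma applies verbatim. Everything else is a routine unwinding of definitions.
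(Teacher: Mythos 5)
Your proposal is correct and follows essentially the same route as the paper: the paper's proof simply states that conditions i) and ii) of regularity are easily checked for $\Cl_U(B)$ (which you carry out explicitly via the $\leqa$-characterization, the regularity of $B$, and Lemma 1) and then derives $B\equiv\Cl_U(B)$ exactly as you do, from $B\subseteq\Cl_U(B)$ in one direction and from the $\leqa$-characterization of the closure given after Lemma 19 in the other. Your care in switching between the $\leqi$- and $\leqa$-descriptions of $\Cl_U(B)$, which is legitimate precisely because $B$ is regular, is the only nontrivial point and you handle it correctly.
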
 

\begin{proof}
One easily shows the conditions i) and ii) of regularity for $\Cl_U(B)$. Since $B\subseteq \Cl_U(B)$, we have $B\leqa \Cl_U(B)$.
$\Cl_U(B)\leqa B$ follows by the characterization of $\Cl_U(B)$ given after the proof of the preceding lemma.
\qed
\end{proof}

\begin{lemma}
For regular sets $B_1,B_2\subseteq U$, it holds $B_1\leq B_2$ iff $\Cl_U(B_1)\subseteq \Cl_U(B_2)$. Thus,
\[B_1\equiv B_2 \en \mbox{ iff } \en \Cl_U(B_1)= \Cl_U(B_2).\]
\end{lemma}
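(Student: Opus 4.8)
The plan is to reduce the whole statement to transitivity of $\leqa$, together with two elementary facts: $B\subseteq\Cl_U(B)$ holds for every $B\subseteq U$, and — since $B_1,B_2$ are regular — Lemma 19 lets us use the $\leqa$-version of the closure throughout, namely $\Cl_U(B)=\{\beta\in U:\ \beta\leqa\beta'\ \text{for some}\ \beta'\in B\}$. Recall also that by Lemma 4 the relation $\leq$ on regular sets may be read as $\leqa$.

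First I would prove the forward implication. Assume $B_1\leq B_2$, i.e.\ $B_1\leqa B_2$, and take $\beta\in\Cl_U(B_1)$. By the $\leqa$-characterization of the closure there is a $\beta'\in B_1$ with $\beta\leqa\beta'$; by $B_1\leqa B_2$ there is a $\beta''\in B_2$ with $\beta'\leqa\beta''$; transitivity of $\leqa$ gives $\beta\leqa\beta''$, so $\beta\in\Cl_U(B_2)$. Hence $\Cl_U(B_1)\subseteq\Cl_U(B_2)$. For the converse, assume $\Cl_U(B_1)\subseteq\Cl_U(B_2)$ and let $\beta_1\in B_1$ be arbitrary. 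Then $\beta_1\in B_1\subseteq\Cl_U(B_1)\subseteq\Cl_U(B_2)$, so there is a $\beta_2\in B_2$ with $\beta_1\leqa\beta_2$; as $\beta_1$ was arbitrary this says $B_1\leqa B_2$, i.e.\ $B_1\leq B_2$.

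Finally, the displayed equivalence is purely formal: $B_1\equiv B_2$ means $B_1\leq B_2$ and $B_2\leq B_1$, which by the equivalence just proved amounts to $\Cl_U(B_1)\subseteq\Cl_U(B_2)$ and $\Cl_U(B_2)\subseteq\Cl_U(B_1)$, i.e.\ $\Cl_U(B_1)=\Cl_U(B_2)$. There is no genuine obstacle in this argument; the only point one must be attentive to is that regularity of $B_1$ and $B_2$ is exactly what makes Lemma 19 (and hence the $\leqa$-description of the closures) available, so that no switching between $\leqa$, $\leqo$ and $\leqi$ is needed.
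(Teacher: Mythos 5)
Your proposal is correct and follows essentially the same route as the paper: both use the $\leqa$-characterization of $\Cl_U$ for regular sets (the consequence of Lemma 19) together with transitivity of $\leqa$ for the forward inclusion, and the chain $\beta_1\in B_1\subseteq\Cl_U(B_1)\subseteq\Cl_U(B_2)$ for the converse. The paper's proof is merely terser, leaving the transitivity step implicit, so no substantive difference remains.
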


\begin{proof}
This again is a consequence of the above representation of $\Cl_U(B)$ for regular $B$ by means of $\leqa$. Indeed, it immediately shows that $B_1\leq B_2$ implies  $\Cl_U(B_1)\subseteq \Cl_U(B_2)$. If the latter inclusion holds and 
$\beta_1\in B_1\subseteq \Cl_U(B_1)\subseteq \Cl_U(B_2)$, then $\beta_1\in\Cl_U(B_2)$. Thus, there is a $\beta_2\in B_2$ such that $\beta_1\leqa\beta_2$. This means that   $B_1\leq B_2$.
\qed
\end{proof}

Whereas not every set which is closed under $\leqa$ or $\leqo$ has to be regular, the sets which contain properly linear bounds and are closed under $\leqi$ are indeed regular if the universe fulfils some natural requirements:

\begin{lemma}
Let $U$ be an it-complete and it-tame universe of time-constructible bounds.
Then every subset $B\subseteq U$, which is closed under $\leqi$  and contains a bound $\beta_0\in B$ such that $2n\leqa \beta_0$, is regular.
\end{lemma}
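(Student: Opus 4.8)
The plan is to verify the two conditions of Definition~1 for the set $B$. Condition~i) is immediate: by hypothesis $U$ consists of time-constructible bounds only, so every $\beta\in B$ is already time-constructible and $\beta\leqa\beta$ witnesses i). The work lies entirely in condition~ii), namely finding, for any $\beta,\beta'\in B$, a $\beta''\in B$ with $\beta+\beta'\circ\beta\leqa\beta''$.

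First I would invoke it-tameness of $U$ to compare $\beta$ and $\beta'$ with respect to $\leqi$; without loss of generality assume $\beta\leqi\beta'$ (the other case is symmetric). Also note that from $2n\leqa\beta_0$ and the closedness of $B$ under $\leqi$, together with $\Blin\leqa B$-type facts, one sees that $\beta_0$ is superlinear, hence so is every bound $\gamma\in B$ with $\beta_0\leqi\gamma$ — and in fact, since $B$ contains $\beta_0$ with $2n\leqa\beta_0$ and is $\leqi$-closed, every element of $B$ that is $\leqi$-above $\beta_0$ is superlinear while those below are it-equivalent to a linear or near-linear bound; the key point I want is that $\beta'$ (being $\geqi$-comparable to things in $B$) can be taken superlinear, or handled directly if it is essentially linear. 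Assuming $\beta'$ superlinear, Lemma~3 gives $\beta'\leo\beta'\circ\beta'$, so $\beta+\beta'\circ\beta\leqa 2\cdot\beta'\circ\beta'\leqa(\beta'\circ\beta')\circ(\beta'\circ\beta')=\beta'\ite{4}$. Since $\beta'\ite{4}\leqi\beta'$ trivially (iterations of a single bound are it-equivalent, by the remark after Lemma~1), and $B$ is closed under $\leqi$, we get $\beta'\ite{4}\in B$. Taking $\beta''=\beta'\ite{4}$ then finishes condition~ii). The case where $\beta'$ fails to be superlinear is the one where $\beta'$ is it-equivalent to a linear bound; then $\beta\leqi\beta'$ forces $\beta$ to be (essentially) linear too, so $\beta+\beta'\circ\beta$ is $\leqo$ a linear bound, hence $\leqi\beta_0\in B$, and again closedness under $\leqi$ supplies $\beta''\in B$.

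The main obstacle I anticipate is the bookkeeping around the linear/near-linear edge case: precisely characterizing which elements of $B$ are superlinear and making sure that when $\beta'$ is not superlinear the composite $\beta+\beta'\circ\beta$ still lands inside $B$ via $\beta_0$. This needs the hypothesis $2n\leqa\beta_0$ (not merely $n\leqa\beta_0$) so that $\beta_0$ genuinely dominates, up to $\leqo$, every ultimately-linear bound, and it needs $\leqi$-closedness to pull the relevant composite back into $B$. Once that edge case is dispatched, the superlinear case is a one-line computation using Lemma~3 and the it-equivalence of iterates. I would also remark that it-completeness of $U$, though listed in the hypotheses, is not actually needed for this particular lemma — it is there to make the statement fit the surrounding development — whereas it-tameness and the presence of $\beta_0$ with $2n\leqa\beta_0$ are both essential.
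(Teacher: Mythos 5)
The central step of your main case does not go through: from ``$\beta'\ite{4}\leqi\beta'$ and $B$ is closed under $\leqi$'' you conclude $\beta'\ite{4}\in B$, but closedness of $B$ under $\leqi$ is closedness \emph{within the universe} $U$ (Section 7: $x\in B$ whenever $x\leqi y$ for some $y\in B$, with $x$ ranging over $U$). It can never put a bound into $B$ that is not already an element of $U$, and the iterate $\beta'\ite{4}$ (likewise $\beta+\beta'\circ\beta$ itself, and the majorant you need in your linear edge case) need not lie in $U$ at all, since $U$ is not assumed closed under composition. Producing a majorant of $\beta+\beta'\circ\beta$ that lies \emph{inside} $U$ is exactly what it-completeness provides, so your closing remark that it-completeness is dispensable is wrong; the lemma even fails without it. Take $U=B=\{\beta\}$ with $\beta(n)=n^2$: this is an it-tame universe of time-constructible bounds, $B$ is closed under $\leqi$ and contains $\beta_0=\beta$ with $2n\leqa\beta_0$, yet $B$ is not regular (no element of $B$ dominates $\beta+\beta\circ\beta$), and indeed $U$ is not it-complete. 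The paper's proof starts as yours does (it-tameness gives $\beta\leqa\beta'\ite{m}$, hence $\beta+\beta'\circ\beta\leqa 2\cdot\beta'\ite{m+1}$, with no superlinearity discussion needed), then compares $\beta'$ with $\beta_0$, uses $2n\leqa\beta_0$ to absorb the factor $2$ by one extra composition, and finally invokes it-completeness to get some $\widetilde{\beta}\in U$ with $\beta_0\ite{j}\leqa\widetilde{\beta}\leqa\beta_0\ite{k}$ (or the analogous sandwich with iterates of $\beta'$ resp.\ $\beta$); then $\widetilde{\beta}\eqi\beta_0$, so $\leqi$-closedness of $B$ puts $\widetilde{\beta}$ into $B$, and $\beta''=\widetilde{\beta}$ works. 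This detour through an element of $U$ that is it-equivalent to something already known to be in $B$ is the idea missing from your argument.

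Some secondary points. The hypothesis $2n\leqa\beta_0$ does not make $\beta_0$ superlinear ($\beta_0(n)=2n$ satisfies it), so the superlinear/linear dichotomy you anchor at $\beta_0$ is not available in the form you state; its actual role is to absorb constant factors, as above. Also, $\beta+\beta'\circ\beta$ is not symmetric in $\beta$ and $\beta'$, so ``without loss of generality $\beta\leqi\beta'$'' is not legitimate — both cases have to be treated, as the paper does. Finally, in your non-superlinear case you need $\beta+\beta'\circ\beta\leqa\beta''$ for some $\beta''\in B$, which does not follow from $\beta+\beta'\circ\beta\leqi\beta_0$ together with closedness; the repair again goes through it-completeness. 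Once you argue directly from $\beta\leqa\beta'\ite{m}$, the appeal to Lemma 3 and superlinearity is unnecessary.
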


\begin{proof}
Condition i) of regularity is fulfilled for all $B\subseteq U$, since all elements of $U$ are time-constructible.
To show condition ii), let $\beta,\beta'\in B$.
\\
Case 1:  $\beta\leqi\beta'$, say $\beta\leqa\beta'\ite{m}$ for some $m\in\Nplus$. Then 
$\beta +\beta'\circ\beta \leqa \beta'+ \beta'\ite{m+1} \leq  \beta'\ite{m+1}+ \beta'\ite{m+1}= 2\cdot  \beta'\ite{m+1}$.
\\
For $\beta'\leqi \beta_0$, say $\beta'\leqa \beta_0\ite{l}$, we have
$\beta +\beta'\circ\beta \leqa 2\cdot  \beta'\ite{m+1}\leqa 2\cdot  \beta_0\ite{l\cdot(m+1)}\leqa  \beta_0\ite{l\cdot(m+1)+1}$.
Since $U$ is it-complete, there is a subset $\widetilde{B}\subseteq U$ such that 
$\It(\beta_0)\eqa \widetilde{B}$. This means that $ \beta_0\ite{l\cdot(m+1)+1}\leqa \widetilde{\beta}\leqa$ $\beta_0\ite{k}$
for some $\widetilde{\beta}\in\widetilde{B}$ and $k\in\Nplus$. It follows that $\widetilde{\beta}\eqi \beta_0\in B$,
hence $\widetilde{\beta}\in B$ due to the closedness of $B$ under $\leqi$, and we have $\beta +\beta'\circ\beta \leqa    \widetilde{\beta}$.
\\
For $\beta_0\leqi \beta'$, say $\beta_0\leqi \beta'\ite{l}$, it follows 
$\beta +\beta'\circ\beta \leqa 2\cdot  \beta'\ite{m+1}\leqa \beta'\ite{l\cdot(m+1)}$.
Since $U$ is it-complete, analogously to the above conclusion, one finds a  $\widetilde{\beta}\in B$ such that 
$\beta +\beta'\circ\beta \leqa    \widetilde{\beta}$.
\\
Case 2:  $\beta'\leqi\beta$, say $\beta'\leqa\beta\ite{m}$ for some $m\in\Nplus$.
Then 
$\beta +\beta'\circ\beta \leqa \beta\ite{m+1}+ \beta\ite{m+1} = 2\cdot  \beta\ite{m+1}$.
\\
For $\beta\leqi\beta_0$, it follows that $2\cdot  \beta\ite{m+1}\leqa \beta_0\ite{l\cdot(m+1)+1}$,
if $\beta_0\leqi\beta$, we have $2\cdot  \beta\ite{m+1}\leqa \beta\ite{l\cdot(m+1)}$, with a suitable $l\in\Nplus$ in both cases.
Due to the it-completeness of $U$ and the closedness of $B$ under $\leqi$, we again find a
$\widetilde{\beta}\in B$ such that $\beta +\beta'\circ\beta \leqa    \widetilde{\beta}$.
\\
So the regularity of $B$ has been shown.
\qed
\end{proof}

Now we are able to characterize the order structure of regular sets over certain universes $U$, i.e., the ordered set
 $(\,\ov{\Reg(U)},\leq\,)$, by means of the structure of non-empty $\leqi$-closed subsets ordered by the set-theoretic inclusion.
For some more background on the theory of (linearly) ordered sets, the reader is referred to \cite{Ro}. Details of topological notions
and results can be found in \cite{En,HY}.

Given a universe $U$ of bounds, we put
\[\Clit(U) =\{ B:\:\emptyset\not= B\subseteq U \mbox{ and $B$ is closed  under $\leqi$}\,\}.\]

\begin{proposition}
Let $U$ be an it-complete and it-tame universe of time-constructible bounds $\beta$ such that always $2n\leqa \beta$.
Then the mapping $\varphi$ defined by $\,\varphi([B])= \Cl(B)$, for all $[B]\in\ov{\Reg(U)}$, is an order-isomorphism
between the ordered sets $(\,\ov{\Reg(U)},\leq\,)$ and  $(\,\Clit(U),\subseteq\,)$.
\end{proposition}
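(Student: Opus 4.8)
The plan is to check that $\varphi$ is a well-defined bijection onto $\Clit(U)$ which is order-preserving and order-reflecting; this essentially assembles Lemmas 19--22, with Lemma 22 carrying the only real weight.

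\emph{Well-definedness and codomain.} If $[B_1]=[B_2]$, i.e.\ $B_1\equiv B_2$ for regular $B_1,B_2\subseteq U$, then Lemma 21 yields $\Cl_U(B_1)=\Cl_U(B_2)$, so $\varphi([B])=\Cl_U(B)$ does not depend on the chosen representative. For regular $B$, the set $\Cl_U(B)$ is non-empty (it contains the non-empty set $B$, regular sets being non-empty by Definition 1), is contained in $U$, and is closed under $\leqi$ by transitivity of $\leqi$; hence $\Cl_U(B)\in\Clit(U)$, so $\varphi$ indeed maps $\ov{\Reg(U)}$ into $\Clit(U)$.

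\emph{Injectivity and monotonicity in both directions.} By Lemma 21, for regular $B_1,B_2\subseteq U$ one has $B_1\leq B_2$ iff $\Cl_U(B_1)\subseteq\Cl_U(B_2)$; equivalently, $[B_1]\leq[B_2]$ iff $\varphi([B_1])\subseteq\varphi([B_2])$. This single equivalence shows at once that $\varphi$ is order-preserving and order-reflecting, and that $\varphi$ is injective: if $\varphi([B_1])=\varphi([B_2])$ then $B_1\leq B_2$ and $B_2\leq B_1$, so $[B_1]=[B_2]$. Here Lemma 20 is used tacitly, since it guarantees $\Cl_U(B)\equiv B$, which is what makes the identification of $B$ with its closure harmless.

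\emph{Surjectivity.} This is the step that uses the standing hypothesis $2n\leqa\beta$ for every $\beta\in U$, and it is precisely where Lemma 22 is invoked. Let $B\in\Clit(U)$. Pick any $\beta_0\in B\subseteq U$; by hypothesis $2n\leqa\beta_0$, so Lemma 22 (using that $U$ is it-complete, it-tame, and consists of time-constructible bounds) shows that $B$ is regular. Thus $[B]\in\ov{\Reg(U)}$, and since $B$ is regular and closed under $\leqi$, Lemma 19 gives $\Cl_U(B)=B$, i.e.\ $\varphi([B])=B$. Hence $\varphi$ is onto $\Clit(U)$. Together with the previous paragraphs, $\varphi$ is an order-isomorphism of $(\,\ov{\Reg(U)},\leq\,)$ onto $(\,\Clit(U),\subseteq\,)$. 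I do not expect any genuine obstacle: the substantive point --- that an arbitrary non-empty $\leqi$-closed subset of such a $U$ is regular --- has already been isolated in Lemma 22, and what remains is the routine bookkeeping of Lemmas 19--21.
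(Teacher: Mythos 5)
Your proof is correct and follows the same route as the paper: Lemma 21 gives well-definedness, injectivity and the two-way order compatibility, and Lemma 22 (applicable since any $\beta_0\in B$ satisfies $2n\leqa\beta_0$ by hypothesis) gives regularity of each $B\in\Clit(U)$, whence $\Cl_U(B)=B$ and surjectivity. Your additional checks (that $\Cl_U(B)$ lies in $\Clit(U)$, and the explicit appeal to Lemma 19 for $\Cl_U(B)=B$) just make explicit what the paper leaves tacit.
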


\begin{proof}
By Lemma 21, $\varphi([B])$ is independent on the choice of a representative  of the class $[B]$, and $\varphi$ is an injective 
order-homomorphism of $(\,\ov{\Reg(U)},\leq\,)$ to  $(\,\Clit(U),\subseteq\,)$.
 Lemma 22 shows that for every $B\in\Clit(U)$ it holds $B\in\Reg(U)$. Then we have 
$\varphi([B])=\Cl(B)=B$, hence $\varphi$ is surjective.
\qed
\end{proof}

Now let $U_0$ be a tame universe of time bounds of Type 1 and 2 which is densely linearly ordered by $\leqi$, see Corollary 3.
The iterative closure $U=\It(U_0)$ fulfills all the suppositions of Proposition 4. Thus, the ordered sets 
$(\ov{\Reg(U)},\leq)$ and $(\Clit(U),\subseteq)$ are order-isomorphic.
By Lemma 7, they are even linearly ordered sets.

The closed sets $B\in\Clit(U)$ are characterized by their elements belonging to $U_0$. More precisely, for each $B\in\Clit(U)$ we have 
\[ B = \bigcup \{\It(\beta):\: \beta\in B\cap U_0\,\},\]
and for $B_1,B_2\in \Clit(U)$ it holds
\[B_1\subseteq B_2 \quad \mbox{ iff } \quad B_1\cap U_0 \subseteq B_2\cap U_0.\]

\begin{lemma} Let $U_0$ be a universe of time bounds according to Corollary 3,
and let $U=\It(U_0)$.
Then the mapping $\varphi$ defined by $\varphi(B) = B\cap U_0$,  for any $B\in\Clit(U)$,
is an order-isomorphism of $(\Clit(U),\subseteq)$ onto $(\Clit(U_0),\subseteq)$.
\end{lemma}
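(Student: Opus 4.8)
The plan is to show that $\varphi(B)=B\cap U_0$ is a well-defined bijection $\Clit(U)\to\Clit(U_0)$ that preserves and reflects inclusion. Throughout I will use the two facts stated just before the lemma, namely that every $B\in\Clit(U)$ satisfies $B=\bigcup\{\It(\beta):\beta\in B\cap U_0\}$ and that $B_1\subseteq B_2$ iff $B_1\cap U_0\subseteq B_2\cap U_0$ for $B_1,B_2\in\Clit(U)$; these already hand me injectivity and the order-isomorphism property on the image, so the real work is that $\varphi$ lands in $\Clit(U_0)$ and is onto it.

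First I would verify that $\varphi(B)=B\cap U_0\in\Clit(U_0)$ for each $B\in\Clit(U)$. Non-emptiness: since $B\neq\emptyset$, pick $\beta\in B$; as $U=\It(U_0)$, $\beta\leqi\beta_0$ (indeed $\beta\in\It(\beta_0)$) for some $\beta_0\in U_0$, and since $\beta_0\leqi\beta_0\leqi\cdots$ in fact $\beta\eqi\beta_0$, so $\beta_0\in B$ by closedness of $B$ under $\leqi$; hence $\beta_0\in B\cap U_0$. Closedness of $B\cap U_0$ under $\leqi$ \emph{within $U_0$}: if $\gamma\in U_0$ and $\gamma\leqi\beta'$ for some $\beta'\in B\cap U_0$, then $\gamma\in B$ because $B$ is closed under $\leqi$ in $U$, and $\gamma\in U_0$, so $\gamma\in B\cap U_0$.

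Next, surjectivity. Given $C\in\Clit(U_0)$, set $B=\Cl_U(C)=\{\beta\in U:\beta\leqi\beta'\text{ for some }\beta'\in C\}$. This $B$ is non-empty and $\leqi$-closed in $U$ by construction, so $B\in\Clit(U)$, and I must check $B\cap U_0=C$. The inclusion $C\subseteq B\cap U_0$ is immediate since $C\subseteq U_0$ and $C\subseteq\Cl_U(C)=B$. For the reverse, suppose $\beta\in B\cap U_0$: then $\beta\in U_0$ and $\beta\leqi\beta'$ for some $\beta'\in C\subseteq U_0$; since $C$ is closed under $\leqi$ within $U_0$ and both $\beta,\beta'$ lie in $U_0$, we get $\beta\in C$. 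Hence $B\cap U_0=C$ and $\varphi(B)=C$.

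Finally, injectivity and the order property follow directly from the displayed facts: if $\varphi(B_1)=\varphi(B_2)$, i.e.\ $B_1\cap U_0=B_2\cap U_0$, then $B_1\subseteq B_2$ and $B_2\subseteq B_1$, so $B_1=B_2$; and more generally $B_1\subseteq B_2$ iff $B_1\cap U_0\subseteq B_2\cap U_0$, i.e.\ iff $\varphi(B_1)\subseteq\varphi(B_2)$, so $\varphi$ and $\varphi^{-1}$ are both order-preserving. The only point that requires a little care — and the mild obstacle — is the surjectivity step, specifically confirming that $\Cl_U(C)\cap U_0$ does not pick up spurious elements of $U_0$ that lie below some iterate $\beta'\ite{m}$ with $\beta'\in C$ but not below $\beta'$ itself; this is exactly where one invokes that $C$ is $\leqi$-closed, since $\beta'\ite{m}\eqi\beta'$ for all $m\in\Nplus$ by Lemma 1, so nothing new is caught.
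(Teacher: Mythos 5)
Your proof is correct and follows essentially the same route as the paper: well-definedness of $\varphi$, the order/injectivity part from the displayed facts preceding the lemma, and surjectivity via an explicit pre-image. Your witness $\Cl_U(C)$ is in fact the same set as the paper's witness $\It(C)$ (since every $\beta\in\It(U_0)$ is it-equivalent to a bound in $U_0$ and $C$ is $\leqi$-closed), so the arguments coincide up to presentation.
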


\begin{proof}
One easily sees that $B\cap U_0 \in \Clit(U_0)$ if $B\in\Clit(\It(U_0))$. It remains to show that $\varphi$ is surjective.
This however follows, since for any $B_0\in\Clit(U_0)$ we have $\It(B_0)\in\Clit(U)$ and $\varphi(\It(B_0))=\It(B_0)\cap U_0=B_0$.
\qed 
\end{proof}

So we have shown that the structure of confinality classes of regular sets over $U=\It(U_0)$, i.e., the ordered set
$(\,\ov{\Reg(U)},\leq\,)$ is isomorphic to $(\Clit(U_0),\subseteq)$. Now we are going to explain that the latter structure has a
well-known order theoretic characterization which is closely related to the Cantor discontinuum.

We start with the relationship between the nontrivial closed subsets and the cuts over linearly ordered sets.
For an arbitrary linearly ordered set $(U,\leq)$, let 
\[ \Csets(U)=\{ S:\: \emptyset \subset S\subset U
\mbox{ and $S$ is closed under $\leq$}\,\}.\]
It is easily seen that this system is linearly ordered by the inclusion, i.e., $(\Csets(U),\subseteq)$ is a linearly ordered set.
\\
By a {\it cut}\/ over  $(U,\leq)$, one understands a pair 
\[C=(C_1,C_2)\]
such that $C_1\in \Csets(U)$, $C_1\cap C_2=\emptyset$, and $C_1\cup C_2=U$.
$C_1$ is called the {\it lower segment}\/, whereas $C_2$ is the {\it upper segment}\/ of the cut $C$.
The system of all cuts,
\[ \Cuts(U)=\{C:\: C \mbox{ is a cut over  $(U,\leq)$}\},\]
is canonically ordered by the inclusion between the lower segments. More precisely, let
\[C\, \incl\, C' \quad \mbox{ iff }\quad C_1\subseteq C_1'\, , \quad \mbox{ for cuts $C=(C_1,C_2)$ and $C'=(C_1',C_2')$.}\]
Again, it is easily seen that $(\Cuts(U),\incl)$ is a linearly ordered set. Moreover, the mapping 
$\fct{\varphi}{\Csets(U)}{\Cuts(U)}$ defined by 
\[ \varphi(S)= (\,S\,,\,U\setminus S\,), \quad \mbox{ for all $S\in\Csets(U)$},\]
is an order-isomorphism between $(\Csets(U),\subseteq)$ and  $(\Cuts(U),\incl)$.
So we have 
\begin{lemma}
For any linearly ordered set $(U,\leq)$, there is an order-isomorphism of  $(\Csets(U),\subseteq)$ onto  $(\Cuts(U),\incl)$.
\qed
\end{lemma}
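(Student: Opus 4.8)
The plan is to verify directly that the explicit map $\varphi(S)=(\,S\,,\,U\setminus S\,)$ exhibited just before the statement is the desired isomorphism; no construction is required, only an unwinding of the definitions of $\Csets(U)$, $\Cuts(U)$ and $\incl$. First I would check that $\varphi$ lands in $\Cuts(U)$: given $S\in\Csets(U)$, the pair $(\,S\,,\,U\setminus S\,)$ has $S\in\Csets(U)$ by hypothesis, while $S\cap(U\setminus S)=\emptyset$ and $S\cup(U\setminus S)=U$ hold trivially, so it is a cut. Thus $\varphi$ is a well-defined function from $\Csets(U)$ to $\Cuts(U)$.

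The key observation, and really the only point worth stating explicitly, is that the definition of a cut forces the upper segment to be the complement of the lower one: if $C=(C_1,C_2)$ is a cut then $C_1\cap C_2=\emptyset$ together with $C_1\cup C_2=U$ gives $C_2=U\setminus C_1$. This single remark yields both injectivity and surjectivity at once. Injectivity: if $\varphi(S)=\varphi(S')$, then comparing lower segments gives $S=S'$. Surjectivity: for an arbitrary cut $C=(C_1,C_2)$ we have $C_1\in\Csets(U)$ and $C_2=U\setminus C_1$, hence $\varphi(C_1)=(\,C_1\,,\,U\setminus C_1\,)=(C_1,C_2)=C$.

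Finally, order preservation in both directions is immediate: by the definition of $\incl$, the relation $\varphi(S)\incl\varphi(S')$ means that the lower segment $S$ of $\varphi(S)$ is contained in the lower segment $S'$ of $\varphi(S')$, i.e.\ $S\subseteq S'$. So $S\subseteq S'$ holds if and only if $\varphi(S)\incl\varphi(S')$, which makes $\varphi$ and its inverse order-preserving, and therefore $\varphi$ is an order-isomorphism of $(\Csets(U),\subseteq)$ onto $(\Cuts(U),\incl)$.

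I do not anticipate any genuine obstacle here; the statement is essentially definitional, and the proof amounts to recording that a cut is completely determined by its lower segment. The only care needed is to keep the bookkeeping of "which component of the pair is the lower segment" consistent with the definition of $\incl$ given in the excerpt.
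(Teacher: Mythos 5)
Your proof is correct and follows exactly the paper's route: the paper exhibits the same map $\varphi(S)=(\,S\,,\,U\setminus S\,)$ immediately before the lemma and leaves the routine verification implicit, which is precisely what you carry out (well-definedness, bijectivity via $C_2=U\setminus C_1$, and order-preservation in both directions from the definition of $\incl$).
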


The notion of cuts over ordered sets is due to R. Dedekind \cite{De} who introduced the real numbers by means of cuts
over the naturally ordered set of rational numbers $(\Q,\leq)$. In fact, for 
\[ \Cuts^0(\Q)=\{ (C_1,C_2)\in\Cuts(\Q):\: \mbox{ $C_1$ does not contain a greatest element}\},\]
it turns out that $(\Cuts^0(\Q),\incl)$ is order-isomorphic to the ordered set of reals $(\R,\leq)$, how ever this might be introduced 
or generated.
Moreover, the arithmetical operations of addition and multiplication can be generalized from $\Q$ to $\R$ by means of 
Dedekind cuts. The restriction to $\Cuts^0(\Q)$, instead of taking $\Cuts(\Q)$, is necessary, since
each rational number $r\in\Q$ determines exactly two different cuts, namely 
\[ (\{x\in\Q:\, x<r\},\{x\in\Q:\, r\leq x\})\: \incl \: (\{x\in\Q:\, x\leq r\},\{x\in\Q:\, r< x\}).\]
Only the first one of these two cuts belongs to $\Cuts^0(\Q)$.

According to Lemma 24, we are interested in characterizing the structure of all closed sets and cuts, respectively,
over certain  universes $U$.
Notice a slight difference between these structures: the whole universe $U$ has been excluded in defining $\Csets(U)$, in order to obtain an isomorphism between  $(\Csets(U),\subseteq)$ and  $(\Cuts(U),\incl)$. We shall have to take this into account in applying our results later.

Now we employ a well-known result by Cantor, see \cite{Ro}.

\begin{lemma}[Cantor]
Any two densely linearly ordered sets, $(U_1,\leq_1)$ and $(U_2,\leq_2)$ with countably infinite universes
$U_1$ and $U_2$ are order-isomorphic to each other if both of them have or don't have least and greatest elements, respectively.
\qed
\end{lemma}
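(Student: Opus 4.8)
The plan is to prove Cantor's theorem by the classical \emph{back-and-forth} argument. First I would fix enumerations $U_1=\{a_i:\,i\in\N\}$ and $U_2=\{b_i:\,i\in\N\}$ of the two countably infinite universes and build an increasing chain $f_0\subseteq f_1\subseteq f_2\subseteq\cdots$ of \emph{finite partial order-isomorphisms} from $U_1$ into $U_2$; that is, each $f_n$ has finite domain $\dom(f_n)\subseteq U_1$ and finite range $\ran(f_n)\subseteq U_2$, and for all $x,y\in\dom(f_n)$ one has $x\leq_1 y$ iff $f_n(x)\leq_2 f_n(y)$. I would seed the construction with an $f_0$ chosen to honour the endpoint hypotheses: if $U_1$ and $U_2$ both have a least element, say $m_1$ and $m_2$, put $(m_1,m_2)\in f_0$; if both have a greatest element $M_1,M_2$, put $(M_1,M_2)\in f_0$; otherwise leave those pairs out. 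Since the two orders agree on the presence of a least (resp.\ greatest) element, $f_0$ is a well-defined finite partial order-isomorphism. The bookkeeping of the construction is then: at stage $2n+1$ force $a_n$ into the domain, at stage $2n+2$ force $b_n$ into the range, and at the end take $f=\bigcup_n f_n$.

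The hard part is the \textbf{one-point extension lemma}, and this is the step where all the hypotheses of the theorem are actually used. It states: for every finite partial order-isomorphism $g$ with $f_0\subseteq g$ and every $a\in U_1\setminus\dom(g)$ there is $b\in U_2\setminus\ran(g)$ such that $g\cup\{(a,b)\}$ is again a partial order-isomorphism (and symmetrically with $U_1$ and $U_2$ interchanged, which I would use in the back steps). To prove it I would split the finite set $\dom(g)$ at $a$ into $P=\{x\in\dom(g):\,x<_1 a\}$ and $Q=\{x\in\dom(g):\,a<_1 x\}$, so $\dom(g)=P\cup Q$. If $P\ne\emptyset\ne Q$, put $p=\max P$, $q=\min Q$; then $g(p)<_2 g(q)$, and by the density of $(U_2,\leq_2)$ there is $b$ with $g(p)<_2 b<_2 g(q)$, which is automatically outside $\ran(g)$ and makes the extension order-preserving. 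If $P=\emptyset$, then $U_1$ has no least element --- otherwise its least element $m_1$, which by the seeding lies in $\dom(g)$ and differs from $a$, would satisfy $m_1<_1 a$ and hence lie in $P$ --- so by hypothesis $U_2$ also has no least element, and I may pick $b$ strictly below $\min\ran(g)$ (any $b\in U_2$ if $\ran(g)=\emptyset$); again $b\notin\ran(g)$ and the extension is order-preserving. The case $Q=\emptyset$ is the mirror image, using the greatest-element hypothesis. I expect this lemma to be the only genuine obstacle; the rest is routine.

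Granting the lemma, the recursion runs as follows: at stage $2n+1$, if $a_n\in\dom(f_{2n})$ keep $f_{2n+1}=f_{2n}$, otherwise use the lemma in the ``forth'' direction to set $f_{2n+1}=f_{2n}\cup\{(a_n,b)\}$; at stage $2n+2$, if $b_n\in\ran(f_{2n+1})$ keep $f_{2n+2}=f_{2n+1}$, otherwise use the lemma in the ``back'' direction to set $f_{2n+2}=f_{2n+1}\cup\{(a,b_n)\}$. Each $f_n$ then contains $f_0$ and is a finite partial order-isomorphism, and the chain is increasing, so $f=\bigcup_{n\in\N}f_n$ is a well-defined function. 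Its domain is all of $U_1$ and its range all of $U_2$ by the respective stages, and it satisfies $x\leq_1 y\iff f(x)\leq_2 f(y)$ because any two elements of $U_1$ already lie in the domain of a common $f_n$; since an order-preserving map between linear orders is injective, $f$ is a bijection, hence an order-isomorphism of $(U_1,\leq_1)$ onto $(U_2,\leq_2)$. That completes the argument.
\qed
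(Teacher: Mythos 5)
Your proposal is correct. Note that the paper does not prove this lemma at all: it states it as a classical result of Cantor and refers to the literature (Rosenstein's book on linear orderings), closing the statement with \qed. What you have written is precisely the standard back-and-forth proof that such a reference would supply, and your treatment of the only delicate points is sound: seeding $f_0$ with the pairs of least and of greatest elements (when they exist on both sides, as the hypothesis guarantees) is exactly what makes the one-point extension lemma go through, since in the case $P=\emptyset$ (resp.\ $Q=\emptyset$) the presence of a least (resp.\ greatest) element of $U_1$ would force it into $P$ (resp.\ $Q$) via $f_0\subseteq g$, so the corresponding endpoint must be absent from $U_2$ as well and a suitable $b$ below $\min\ran(g)$ (resp.\ above $\max\ran(g)$) exists; in the interior case, density of the target order yields $b$ strictly between $g(\max P)$ and $g(\min Q)$, and such a $b$ indeed cannot lie in $\ran(g)$ because $g$ is order-preserving on $P\cup Q$. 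The alternation of forth and back stages over fixed enumerations then gives totality and surjectivity of the union, and the iff-form of order-preservation gives injectivity. So your argument is a complete and correct proof of a statement the paper simply imports as known.
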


This yields a nice characterization of the order structure of cuts over related universes by means of the Cantor discontinuum.
Under the {\it Cantor discontinuum}\/, we here understand the ordered set
\[(\,\D\,,\,\leqx\,), \quad \mbox{ where } \en \D= \{0,1\}^{\Ni},\]
this is the set of all {\it binary sequences}\/ $\fct{f}{\N}{\{0,1\}}$; $\leqx$ denotes the {\it lexicographic ordering}\/, i.e., 
$f_1\leqx f_2\,$ iff $\,f_1=f_2$ or $\, f_1\not=f_2$ and $f_1(n_0)< f_2(n_0)$, where $n_0=\min\{n:\, f_1(n)\not= f_2(n)\}$.
It is well-known that $(\D,\leqx)$ is order-isomorphic to the {\it Cantor ternary set}\/, which is a subset of the real closed interval $[0,1]$ ordered by the natural order between real numbers.

Even if we did not found the following result explicitly in literature, it is surely folklore.

\begin{proposition}
Let $(U,\leq)$ be a densely linearly ordered set with a countably infinite universe $U$ having a least but no greatest element.
Then the structure of cuts over $U$, i.e., the ordered set $(\Cuts(U),\incl)$, is order-isomorphic to the Cantor discontinuum 
$(\D,\leqx)$.
\end{proposition}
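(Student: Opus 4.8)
The plan is to reduce the statement to the order-theoretic characterisation of the Cantor set and then feed in Cantor's isomorphism theorem recalled above. First, by the order-isomorphism between $(\Csets(U),\subseteq)$ and $(\Cuts(U),\incl)$ established just before the proposition, it suffices to identify the order type of $(\Csets(U),\subseteq)$, the system of proper non-empty $\leq$-closed subsets of $U$ ordered by inclusion. Observe first that this really is a \emph{linear} order: the $\leq$-closed subsets of a linearly ordered set are pairwise comparable under inclusion.

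Next I would check that $(\Csets(U),\subseteq)$ carries exactly the structural features that pin down the order type of the Cantor discontinuum, and that $(\D,\leqx)$ carries them too. Concretely: \emph{(i) Dedekind completeness} — the union of any subfamily of $\Csets(U)$ that is bounded above by a proper closed set is again proper, non-empty and $\leq$-closed, hence its supremum, and the intersection of any non-empty subfamily contains the least element $u_0$ of $U$, hence is its infimum; \emph{(ii) separability} — the countable family $Q$ of canonical cuts $\{x\in U: x<u\}$ and $\{x\in U: x\le u\}$, $u\in U$, is dense in $\Csets(U)$, since density of $U$ forces $S'\setminus S$ to be infinite whenever $S\subsetneq S'$ is not already a jump, and then a member of $Q$ lies strictly between $S$ and $S'$; \emph{(iii) a dense set of jumps} — the jumps of $\Csets(U)$ are precisely the pairs $(\{x<u\},\{x\le u\})$ with $u\in U$, $u\ne u_0$, and by the same density argument every non-degenerate interval contains one; \emph{(iv) no isolated points} — by density of $U$ no element of $\Csets(U)$ can be simultaneously of the forms $\{x<u\}$ and $\{x\le v\}$, so none has both an immediate predecessor and an immediate successor. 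For $(\D,\leqx)$ all four are classical: $\D$ is (order-isomorphic to) a compact subset of $[0,1]$, hence Dedekind complete; the eventually-constant sequences form a countable dense subset; writing $\overline{0},\overline{1}$ for the constant tails, the jumps are the pairs $(s0\overline{1},\,s1\overline{0})$ over finite binary words $s$, and they are dense; and $\D$ has no isolated points. The least element $u_0$ of $U$ makes $\{u_0\}$ the least element of $\Csets(U)$, matching $\overline{0}$; reconciling the top end is the bookkeeping point flagged in the remark after the $\Csets$/$\Cuts$ lemma, and is handled there by restoring the missing extreme cut.

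It then remains to invoke the ``folklore'' characterisation: any two Dedekind-complete, separable linear orders with a dense set of jumps, no isolated points, and matching end-point structure are order-isomorphic. I would prove this in the standard back-and-forth manner: collapsing each jump-pair to a single point turns $\Csets(U)$ (with the top end adjusted as above) and $\D$ into Dedekind-complete, separable \emph{dense} linear orders with least and greatest elements; their canonical countable dense subsets are order-isomorphic by Cantor's theorem recalled above; and since the collapse is intrinsic — a point is ``doubled'' exactly when it is an endpoint of a jump — this isomorphism lifts back to $\Csets(U)\cong\D$, whence $(\Cuts(U),\incl)\cong(\D,\leqx)$. This is parallel to the earlier identification of $(\Cuts^0(\Q),\incl)$ with $(\R,\le)$ via Dedekind cuts, the difference being that here, unlike in $\Cuts^0(\Q)$, we do not discard the ``closed'' cut at each point of $U$, so that $\Csets(U)$ acquires a dense set of jumps and the Cantor discontinuum appears in place of the continuum.

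The step I expect to be the real obstacle is the combination of (ii)/(iii) with the lifting in the last paragraph: one has to make sure that after collapsing the (dense) set of jumps the resulting orders are genuinely dense and that passing back to their Dedekind completions recovers $\Csets(U)$ and $\D$ faithfully, with no points spuriously added or lost, and — hand in hand with this — that the end-point conventions are made to agree, the least cut $\{u_0\}$ being already in place while the missing top extreme of $\Cuts(U)$ must be supplied exactly as the preceding remark prescribes. Getting this book-keeping precisely right, rather than any single hard estimate, is where the proof must take care.
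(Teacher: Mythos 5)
Your route is genuinely different from the paper's. The paper first invokes Cantor's theorem (Lemma 25) to replace $(U,\leq)$ by the concrete representative $(\B,\wlex)$ of binary words ending in $1$, and then defines an explicit order-preserving map $C\mapsto f_C$ from cuts to binary sequences; you instead verify an abstract order-theoretic characterization of the Cantor order type (Dedekind completeness, a countable order-dense subset, dense jumps, no isolated points, endpoint data) for $(\Csets(U),\subseteq)$ and prove that characterization by collapsing the jumps and applying Cantor's theorem to the collapsed orders. Your verifications are essentially correct: the jumps are exactly the pairs $(\{x<u\},\{x\le u\})$ with $u$ above the least element, they are dense, the canonical segments form a countable order-dense family, density of $U$ excludes isolated points, and the collapse-and-lift argument is the standard proof of the folklore characterization. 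As a strategy this is legitimate and in some respects more transparent than the paper's explicit coding.

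The point you defer as ``bookkeeping'' is, however, the one place where the argument cannot be completed, and it is not settled by the remark after Lemma 24. Since $U$ has no greatest element, $(\Cuts(U),\incl)$ has no greatest cut: given any cut $(C_1,C_2)$, pick $u\in C_2$ and $v>u$; the cut with lower segment $\{x\le u\}$ is strictly larger. But $(\D,\leqx)$ has the greatest element $1^{\infty}$, and an order-isomorphism preserves greatest elements, so the endpoint data do not match and your characterization theorem cannot be applied to $(\Cuts(U),\incl)$ and $(\D,\leqx)$ as they stand. ``Restoring the missing top cut'' is not available inside $\Cuts(U)$: it changes the object, amounting to re-admitting $U$ itself as a lower segment, which is precisely what the definition of $\Csets(U)$ excludes. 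Carried through honestly, your method yields $(\Cuts(U),\incl)\cong(\D\setminus\{1^{\infty}\},\leqx)$, equivalently that $\Csets(U)\cup\{U\}$ ordered by inclusion is isomorphic to $(\D,\leqx)$. Be aware that the paper's own proof shares this soft spot: no cut of $\B$ is mapped to $1^{\infty}$ (condition ii) would force all words $1^k$, which are cofinal in $\B$, into $C_1$, giving $C_1=\B$), so the concluding claim that $\varphi$ is bijective holds only onto $\D$ minus its top. Your instinct that the top end needs reconciling was exactly right; the gap in your proposal is that you treat this as a convention fixed by the cited remark rather than as the substantive obstruction it is, which in fact requires adjusting the statement (assume $U$ also has a greatest element, or replace $\D$ by $\D$ without its greatest element).
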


\begin{proof}
It can be used that an arbitrary isomorphism 
$\fct{\varphi}{U}{U'}$ between two linearly ordered sets $(U,\leq)$ and $(U',\leq')$ immediately yields an order-isomorphism 
$\fct{\ov{\varphi}}{\Cuts(U)}{\Cuts(U')}$ between $(\Cuts(U),\incl)$ and $(\Cuts(U'),\incl)$. Indeed, this is fulfilled by putting
\[ \ov{\varphi}(C_1,C_2)=(\{\varphi(x):\, x\in C_1\},\{\varphi(x):\, x\in C_2\}), \quad 
\mbox{ for any cut $(C_1,C_2)\in \Cuts(U)$}.\]
So, due to Lemma 25, we can restrict ourselves to a special representative of a densely linearly ordered set with related properties.
Let $\B$ denote the set of binary words terminating with the letter $1$, enriched by the {\it empty word}\/ $\Lambda$,
\[\B=\{w\in \{0,1\}^*:\: w=\Lambda \mbox{ or the last letter of $w$ is $1$}\}.\]
By $\wlex$, we denote the lexicographic ordering in $\B$, i.e.,
\[w_1\,\wlex\, w_2 \mbox{ iff } \mbox{ \begin{minipage}[t]{12cm}
                                              $w_1$ is an initial piece of $w_2$ or \\
                                             if $w_0$ is the largest common initial piece of $w_1$ and $w_2$,\\
then $w_00$ is an initial piece of $w_1$ and  $w_01$ is an initial piece of $w_2$.
\end{minipage} } \]
Obviously, $(\B,\wlex)$ is a countably infinite and densely linearly ordered set with the least element $\Lambda$ but without a greatest element. For example, the words $11\ldots 1$ form an unbounded subset of $\B$.
If we put $\,\val(\Lambda)=0\,$ and 
\[\val(x_l\,\ldots\,x_1)= \sum\nolimits_{i=1}^l x_i\cdot 2^{-i}\,, \quad
                         \mbox{for $l\in\Nplus$ and $x_1,\ldots,x_l\in\{0,1\}$}, \]
then $\,\{\val(w):\, w\in\B\}\,$ is the set of so-called {\it dyadic rational numbers}\/ within the half-open interval $[0,1)$,
 and it holds
\[w_1\wlex w_2 \quad \mbox{ iff } \quad \val(w_1)\leq \val(w_2).\]

Let $C=(C_1,C_2)\in\Cuts(\B)$. There is exactly one sequence $\,f_C\in \D\,$ such that
\begin{itemize}
\item[ i)] all $w_1\in C_1$ are initial pieces of $f_C$, and
\item[ii)] no $w_2\in C_2$ is an initial piece of $f_C$.
\end{itemize}
It is left to the reader to define $f_C$ explicitly in dependence on the segments $C_1$ and $C_2$ of the cut $C$.
For example, $f_C= 0^{\infty}$ for $C_1= \{\Lambda\}$;
if $w1= \min(C_2)$ for some $w\in\{0,1\}^*$, then $f_C=w01^{\infty}$;
if $w1= \max(C_1)$ for $w\in\{0,1\}^*$, then $f_C=w10^{\infty}$.
(Here we concatenate words with sequences in the straightforward way, and $x^{\infty}$ denotes the stationary sequence of 
a letter $x\in\{0,1\}$.)

Finally, it is easily seen that for any two cuts, $C$ and $C'$ from $\Cuts(\B)$, it holds $C\incl C'$ iff $f_C\leqx f_{C'}$.
Thus, the mapping $\varphi(C)= f_C$ defines an order-isomorphism of $(\Cuts(\B),\incl)$ into $(\D,\leqx)$.
Moreover, $\varphi$ is bijective.
\qed 
\end{proof}

For $U=\It(U_0)$, where the universe $U_0$ is taken according to Corollary 3, let 
\bea
 \ov{\Reg(U)}^0 &=& \ov{Reg(U)} \setminus \{ [U_0]\} \quad \mbox{ and}\\
\Clit^0(U_0) &=& \Clit(U_0) \setminus \{U_0\}.
\eea
Then we have the following order-isomorphisms:
\[( \ov{\Reg(U)}^0,\leq) \cong (\Clit^0(U_0),\subseteq) \cong (\Cuts(U_0),\incl) \cong (\D,\leqx).\]
The linearly ordered sets $( \ov{\Reg(U)},\leq)\cong(\Clit(U_0),\subseteq)$,
where $\leq$ means $\leqa$ or $\leqo$ for representatives of the related classes,
 are order-isomorphic to the extension of the Cantor discontinuum by just one isolated element which is greater than all the others, i.e., it has to be an immediate successor of the sequence $1^{\infty}\in\D$.

In particular, there is an order-isomorphic embedding $\varphi_1$ of the Cantor discontinuum $(\D,\leqx)$ into
 $( \ov{\Reg(U)},\leq)$ such that $\varphi_1(0^{\infty})= [\It(2n)]$, 
$\varphi_1(1^{\infty})= [\{\beta\in U:\, \beta\lei n^{\log(n)}\}] = [\{\beta\in U_0:\, \beta\lei n^{\log(n)}\}]$
 and $\ran(\varphi_1) =  \ov{\Reg(U)}^0$.
On the other hand, we also have an order-isomorphic embedding $\varphi_2$ of the real closed interval $[0,1]$ into the Cantor discontinuum. This can be defined by means of the binary expansions of the reals $r\in[0,1]$: If 
$r=\sum_{i=0}^{\infty} x_i\cdot 2^{-(i+1)}$ with digits $x_i\in\{0,1\}$ such that either $r=1$ or $x_i=0$ for infinitely many $i\in\N$, then put
 \[ \varphi_2(r) =f_r\in\D, \quad \mbox{ where } f_r(i)=x_i \mbox{ for all } i\in\N.\]
It holds
$\varphi_2(0)=0^{\infty}$, $\varphi_2(1)=1^{\infty}$, and $\varphi_2(r_1)\leqx \varphi_2(r_2)$ iff $r_1\leq r_2$, for all
$r_1,r_2\in[0,1]$.
Thus, 
$\varphi_1\circ\varphi_2$ is an order-isomorphism of $[0,1]$ onto $\ov{\Reg(U)}^0$ which maps $0$ to the confinality class
$[\It(2n)]$ and $1$ to $[\{\beta\in U:\, \beta\lei n^{\log(n)}\}]= [\{\beta\in U_0:\, \beta\lei n^{\log(n)}\}]$. Also, it is well-known that the whole continuum $\R$ is order-isomorphic to the open interval $(0,1)$. Hence $\R$ is order-isomorphically embeddable into
  $( \ov{\Reg(U)},\leq)$.

Finally, it should be noticed that the notion of regular set is an absolute one. This means that it does not depend on the universe of bounds just considered. Thus, with respect to the maximal universe $U'\imax$ of all time-constructible bounds,
 see the end of Section 2,  we conclude that both
 $(\D,\leqx)$ and $([0,1],\leq)$ as well as  $(\R,\leq)$ are order-isomorphically embeddable into the ordered set 
of all confinality classes $(\ov{\Reg(U'\imax)},\leq) \cong (\ov{\Reg(U\imax)},\leq)$. Of course, the  relation $\leq$
 is not a linear ordering in $\ov{\Reg(U'\imax)}$ or $\ov{\Reg(U\imax)}$.
 
The following theorem summarizes the results just obtained.

\begin{theorem}
For each of the linearly ordered sets  $(\D,\leqx), \, ([0,1],\leq)$ and  $(\R,\leq)$ there is an an order-isomorphic embedding into
the ordered set $(\ov{\Reg(U'\imax)},\leq)\cong (\ov{\Reg(U\imax)},\leq)$ of all confinality classes of regular sets of bounds.\\
In particular, an order-isomorphic embedding $\varphi$  of  $([0,1],\leq)$ can be defined in such a way that 
$\varphi(0)\equiv \Blin$, $\varphi(1)\equiv [\{\beta\in U_0:\, \beta\lei n^{\log(n)}\}]$, and 
$\ran(\varphi)$ consists of all classes confinal to elements of $\ov{\Reg(U)}^0$, where $U_0$ is the universe according to Corollary 3, and $U=\It(U_0)$.
\qed
\end{theorem}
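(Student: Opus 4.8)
Since the theorem merely records what the discussion preceding it has established, the plan is to splice together the chain of order-isomorphisms built up there. First I would fix a universe $U_0$ as in Corollary 3 and put $U=\It(U_0)$; this $U$ satisfies the hypotheses of Proposition 4: it is it-closed, hence it-complete; it is it-tame, since every member of $\It(U_0)$ is it-equivalent to a member of the $\leqi$-linearly ordered set $U_0$; it consists of time-constructible bounds; and $2n\leqa\beta$ for every $\beta\in U$, because the $\leqi$-least element of $U_0$ is $2n$ and iteration never decreases a bound (Lemma 1). Proposition 4 then identifies $(\ov{\Reg(U)},\leq)$ with $(\Clit(U),\subseteq)$, Lemma 23 identifies the latter with $(\Clit(U_0),\subseteq)$, and --- since for $B\subseteq U_0$ closedness under $\leqi$ just means downward closedness for the linear order $\lei$ --- the subsystem $\Clit^0(U_0)=\Clit(U_0)\setminus\{U_0\}$ coincides with $\Csets(U_0)$, so Lemma 24 gives $(\Clit^0(U_0),\subseteq)\cong(\Cuts(U_0),\incl)$ and Proposition 5 gives $(\Cuts(U_0),\incl)\cong(\D,\leqx)$. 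Composing, I obtain an order-isomorphism $\varphi_1$ of $(\D,\leqx)$ onto $(\ov{\Reg(U)}\setminus\{[U_0]\},\leq)=(\ov{\Reg(U)}^0,\leq)$. Tracing the $\incl$-least cut $(\{2n\},U_0\setminus\{2n\})$ and the $\incl$-greatest cut $(U_0\setminus\{n^{\log(n)}\},\{n^{\log(n)}\})$ back through the chain gives $\varphi_1(0^\infty)=[\It(2n)]\equiv\Blin$ and $\varphi_1(1^\infty)=[\{\beta\in U:\beta\lei n^{\log(n)}\}]=[\{\beta\in U_0:\beta\lei n^{\log(n)}\}]$, the last equality because these two regular sets are confinal.

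From here the remaining assertions follow quickly. With $\varphi_2\colon[0,1]\to\D$ the order-embedding defined in the discussion via binary expansions (so $\varphi_2(0)=0^\infty$, $\varphi_2(1)=1^\infty$), the composite $\varphi:=\varphi_1\circ\varphi_2$ is an order-isomorphism of $([0,1],\leq)$ onto $\ov{\Reg(U)}^0$ with $\varphi(0)\equiv\Blin$ and $\varphi(1)\equiv[\{\beta\in U_0:\beta\lei n^{\log(n)}\}]$; this is the ``in particular'' clause, $\ran\varphi$ consisting of all classes confinal to elements of $\ov{\Reg(U)}^0$. An order-embedding of $(\D,\leqx)$ is $\varphi_1$ itself, and one of $(\R,\leq)$ is obtained by restricting $\varphi$ to the open interval $(0,1)$ and precomposing with an order-isomorphism $\R\cong(0,1)$. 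Finally, to reach the maximal universes I would invoke that regularity, the relations $\leqa$, $\leqo$, $\leqi$ and confinality are all absolute --- independent of the ambient universe --- so a regular $B\subseteq U$ remains a regular subset of $U'\imax$ and of $U\imax$ with unchanged confinality class, whence each of the three embeddings lifts into $(\ov{\Reg(U'\imax)},\leq)\cong(\ov{\Reg(U\imax)},\leq)$, the isomorphism of the two targets being furnished by Lemmas 5 and 6. (The closing observation that $\leq$ is not a linear order on $\ov{\Reg(U\imax)}$ is a side remark: unlike the tame universes above, $U\imax$ contains regular subsets of incomparable order of growth.)

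The one genuinely delicate point, and the place where I would concentrate care, is the endpoint bookkeeping. One must verify that deleting precisely the single class $[U_0]$ is what trims $(\ov{\Reg(U)},\leq)$ to a faithful copy of $\D$ --- equivalently that passing from $\Clit(U_0)$ to $\Csets(U_0)$, i.e. from all initial segments to the proper ones, is exactly what Lemma 24 and Proposition 5 require --- and that the presence of a $\leqi$-greatest element $n^{\log(n)}$ and a $\leqi$-least element $2n$ in $U_0$ does no harm: $n^{\log(n)}$ lies in the upper segment and $2n$ in the lower segment of every cut over $U_0$, so $(\Cuts(U_0),\incl)$ still carries the order type $(\D,\leqx)$, its $\incl$-greatest and $\incl$-least elements answering to $1^\infty$ and $0^\infty$. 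Matching these against the prescribed values $\varphi(0)\equiv\Blin$ and $\varphi(1)\equiv[\{\beta\in U_0:\beta\lei n^{\log(n)}\}]$ is then just a matter of unwinding the definitions along the chain; everything else is routine transport of structure along isomorphisms already established.
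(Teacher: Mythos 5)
Your proposal is correct and follows essentially the same route as the paper: the chain $(\ov{\Reg(U)}^0,\leq)\cong(\Clit^0(U_0),\subseteq)\cong(\Cuts(U_0),\incl)\cong(\D,\leqx)$ obtained from Proposition 4, Lemmas 23--24 and Proposition 5, composed with the binary-expansion embedding $\varphi_2$ of $[0,1]$ into $\D$ and with $\R\cong(0,1)$, and finally transferred to $(\ov{\Reg(U'\imax)},\leq)\cong(\ov{\Reg(U\imax)},\leq)$ by the absoluteness of regularity and confinality (Lemmas 5 and 6). Your explicit verification that the $\leqi$-greatest element $n^{\log(n)}$ of $U_0$ does no harm --- so that $(\Cuts(U_0),\incl)$ indeed has Cantor order type with greatest and least cuts answering to $1^{\infty}$ and $0^{\infty}$ --- is, if anything, slightly more careful than the text, which cites Proposition 5 even though its hypothesis of having no greatest element is not literally met by $U_0$.
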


\section{On o-regular sets, oracle hierarchies and alternation}

Now we shall show that the iteration sets $\It(\beta)$ of time bounds $\beta$  of Type 1 or 2 are even o-regular. 
The notion of o-regularity of a set of bounds was introduced in \cite{He1}. It guarantees a certain robustness of the complexity classes defined by means of oracle machines of related time bounds. We shall see that the hierarchies build by such classes 
coincide with those obtained by time-bounded alternating Turing machines.

\begin{definition}
A regular set $B$ of time bounds is said to be {\it o-regular}\/ iff
for each $\beta\in B$, there is some $\beta'\in B$ such that for almost all $n\in\Nplus$ the following holds:
\[ \mbox{if \en$\sum_{i=1}^l m_i \leq \beta(n)\,$ for some $l,m_1,\dots ,m_l\in\Nplus$,
then  \en $\sum_{i=1}^l \beta(m_i) \leq \beta'(n)\,$.}\]
\end{definition}

This is a slightly generalized version of the notion of o-regularity from \cite{He1},
which still allows to show all the results proved there, simply by straightforward adaptations of the given proofs.
The present version has first been used in \cite{Ko}. Analogously to Lemma 6, it is proved that any set $B$ of time-constructible bounds which is confinal to an o-regular set is o-regular, too.
The following result shows that Theorems 1 and 2 yield a large variety of o-regular sets.

\begin{proposition}
If $\beta$ is a time bound of Type 1 or 2, then the iteration set $\:\It(\beta)$ is o-regular.
\end{proposition}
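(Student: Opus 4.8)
The plan is to avoid working directly with the iterates $\beta\ite{m}$ and instead reduce o-regularity of $\It(\beta)$ to a single convenient confinal representation. Note first that $\It(\beta)$ is regular by Lemma 9, and all its members $\beta\ite{m}$ are time-constructible by Lemma 2. Hence, by the o-regularity analogue of Lemma 6 stated above (a set of time-constructible bounds confinal to an o-regular set is o-regular), it suffices to exhibit \emph{one} o-regular set confinal to $\It(\beta)$. For $\beta$ of Type 1, $\beta(n)=n\cdot\alpha(n)$, Lemma 13 gives $\It(\beta)\eqa\widetilde B$ with $\widetilde B=\{\gamma_k:k\in\Nplus\}$, $\gamma_k(n)=n\cdot(\alpha(n))^k$; for $\beta$ of Type 2, $\beta(n)=n^{\alpha(n)}$, Lemma 17 gives $\It(\beta)\eqa\widetilde B$ with $\widetilde B=\{\delta_k:k\in\Nplus\}$, $\delta_k(n)=n^{(\alpha(n))^k}$. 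In both cases the members of $\widetilde B$ are time-constructible (Lemma 2, since $\alpha$ is linear-time constructible) and $\widetilde B$ is regular (Lemma 6), so the whole task comes down to verifying the defining inequality of o-regularity for each $\gamma_k$, respectively each $\delta_k$.

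For Type 1 I would fix $k$ and use f-consistency of $\alpha$ to pick $l$ with $\alpha(n^{k+1})\leqa(\alpha(n))^l$; the claim is that $\gamma_{lk+k}\in\widetilde B$ works for $\gamma_k$. Given $\sum_{i=1}^{L}m_i\le\gamma_k(n)$ with $L,m_1,\dots,m_L\in\Nplus$, each $m_i\le\gamma_k(n)=n(\alpha(n))^k\le n^{k+1}$ (this is where $\alpha(n)\le n$ is needed), so by monotonicity $\alpha(m_i)\leqa\alpha(n^{k+1})\leqa(\alpha(n))^l$, and therefore
\[
\sum_{i=1}^{L}\gamma_k(m_i)=\sum_{i=1}^{L}m_i(\alpha(m_i))^k\;\leqa\;(\alpha(n))^{lk}\sum_{i=1}^{L}m_i\;\le\;(\alpha(n))^{lk}\cdot n(\alpha(n))^k=\gamma_{lk+k}(n),
\]
all estimates holding for almost all $n$ with a threshold depending only on $k$, not on the tuple. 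The point to observe is that $\sum m_i$ already absorbs the number $L$ of summands, so $L$ needs no separate control here.

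For Type 2 the same scheme works, but with exponential growth. Fix $k$; from $\alpha(n)\le\log(n)$ one gets $\delta_k(n)\le n^{(\log(n))^k}$, so e-consistency yields $l$ with $\alpha(n^{(\log(n))^k})\leqa(\alpha(n))^l$, and the claim is that $\delta_{lk+k+1}$ works for $\delta_k$. Given $\sum_{i=1}^{L}m_i\le\delta_k(n)$, we have both $L\le\delta_k(n)$ and $m_i\le\delta_k(n)\le n^{(\log(n))^k}$ for each $i$, hence $\alpha(m_i)\leqa(\alpha(n))^l$ and
\[
\delta_k(m_i)=m_i^{(\alpha(m_i))^k}\;\leqa\;\bigl(n^{(\alpha(n))^k}\bigr)^{(\alpha(n))^{lk}}=n^{(\alpha(n))^{lk+k}}=\delta_{lk+k}(n);
\]
summing the at most $L\le\delta_k(n)\le\delta_{lk+k}(n)$ such terms and using $\alpha(n)\ge2$ gives $\sum_{i=1}^{L}\delta_k(m_i)\leqa\delta_{lk+k}(n)^2\le n^{2(\alpha(n))^{lk+k}}\le n^{(\alpha(n))^{lk+k+1}}=\delta_{lk+k+1}(n)$, again for almost all $n$ uniformly in the tuple.

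The routine parts are the monotonicity manipulations and checking that $\widetilde B$ is time-constructible and regular. The one genuinely delicate point is the bookkeeping in the Type 2 case: in contrast to Type 1, the hypothesis $\sum m_i\le\delta_k(n)$ does not allow the count $L$ of summands to be absorbed into the sum, so one must treat $L\le\delta_k(n)$ separately and pay for it with one extra power of $\alpha$ in the exponent — which is exactly why the witness is $\delta_{lk+k+1}$ rather than $\delta_{lk+k}$. A small thing to double-check throughout is that the cofinite set of $n$ on which the estimates hold is independent of the tuple $(L,m_1,\dots,m_L)$; this is automatic here because every threshold comes from the fixed relations $\alpha(n)\le n$ (resp.\ $\alpha(n)\le\log n$) and from f-consistency (resp.\ e-consistency) invoked only at the fixed parameter $k$.
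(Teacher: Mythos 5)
Your proof is correct, and it reaches the same arithmetic core as the paper's but packages it differently. The paper verifies the o-regularity inequality directly for an arbitrary iterate $\ov{\beta}\in\It(\beta)$: it uses Lemma 13 (resp.\ 17) to sandwich $\ov{\beta}$ between power forms, pays a correction term $c\cdot\ov{\beta}(n)$ for the finitely many exceptional arguments $m_i$ where the $\leqa$-bound on $\ov{\beta}$ need not hold, and closes the estimate back inside $\It(\beta)$ via the reverse direction of the same lemma. You instead pass once and for all to the confinal sets $\{n\cdot(\alpha(n))^k\}$ and $\{n^{(\alpha(n))^k}\}$, check o-regularity there with exact evaluations (so no correction term is needed and the witness exponents $lk+k$, resp.\ $lk+k+1$, come out explicitly), and then invoke the transfer principle stated in the paper just before the proposition (o-regularity passes to confinal sets of time-constructible bounds) together with Lemma 6 to pull the conclusion back to $\It(\beta)$. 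What your route buys is a cleaner computation and a uniform treatment of the "almost all $n$" thresholds, which you rightly flag; what it costs is reliance on the unproved (though asserted) transfer lemma and on the regularity and time-constructibility of the power sets. On the latter, citing Lemma 2 is slightly off target, since $\alpha$ and $(\alpha(n))^k$ are not time bounds in the paper's sense ($\alpha(n)\leqa n$, not $n\leq\alpha(n)$), so constructibility of $n\cdot(\alpha(n))^k$ and $n^{(\alpha(n))^k}$ needs the same "standard arguments" (linear-time constructibility of $\alpha$ plus binary-counter output tricks) that the paper invokes for its Type 1 and Type 2 bounds; this is routine, not a gap. Your observation that the summand count $L$ is absorbed by $\sum m_i$ in Type 1 but must be paid for with one extra power of $\alpha$ in Type 2 matches exactly what happens, implicitly, in the paper's chain of inequalities.
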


\begin{proof}
First let $\beta$ be of Type 1, i.e.,
\[\beta(n) = n\cdot \alpha(n) \quad
\mbox{with a nondecreasing, f-consistent function $\alpha$ satisfying $2\leqa\alpha(n)\leqa n$.} \]
The regularity of $\It(\beta)$ follows by Lemma 9. To show the o-regularity, let $\ov{\beta}\in\It(\beta)$.
By Lemma 13, we have $\ov{\beta}(n)\leqa n\cdot \alpha^{\overline{m}}(n)$, with a suitable $\overline{m}\in\Nplus$. Thus,
for almost all $n$, from  $\sum_{i=1}^l m_i \leq \ov{\beta}(n)\,$ with $l,m_1,\dots ,m_l\in\Nplus\,$ it follows that
$l\leq \ov{\beta}(n)$ and 
\[ m_i \leq \sum\nolimits_{i=1}^l m_i \leq n\cdot \alpha^{\overline{m}}(n) \leq n^{\overline{m}+1}.\]
So there are suitable $c,\widetilde{m},\widehat{m}\in\N$ such that for almost all $n\in\N$:
\bea
\sum\nolimits_{i=1}^l \ov{\beta}( m_i )&\leq& 
c\cdot \ov{\beta}(n) + \sum\nolimits_{i=1}^l ( m_i\cdot\alpha^{\overline{m}}( m_i )\,) \\
&\leq&
 c\cdot \alpha^{\overline{m}}( n ) + \left( \sum\nolimits_{i=1}^l m_i \right)\cdot  \alpha^{\overline{m}}( n^{\overline{m}+1} )\\
&\leq& n \cdot \alpha^{\overline{m}}( n ) +  n \cdot  \alpha^{\overline{m}}( n) \cdot  \alpha^{\widetilde{m}}( n)\\
&\leq& n \cdot \alpha^{\overline{m}}( 1+  \alpha^{\widetilde{m}}( n)\,)\\
&\leq& n \cdot \alpha^{\overline{m}}(\alpha^{\widetilde{m}+1}( n)\,)\\
&\leq& n \cdot \alpha^{\overline{m}}( n^{\widetilde{m}+1}\,)\\
&\leq& n \cdot \alpha^{\widehat{m}}( n)\\
&\leq& \beta'(n),
\eea
with a suitable $\beta'\in\It(\beta)$. 
(We remember that the exponents at function $\alpha$ refer to the exponentiation of the values but not to iterations of the function.)

Now let  $\beta$ be of Type 2, i.e.,
\[\beta(n) = n^{\alpha(n)} \quad 
\mbox{with a nondecreasing, e-consistent function $\alpha$ satisfying $2\leqa\alpha(n)\leqa \log(n)$.}\]
By Lemma 9, $\It(\beta)$ is regular.  To show the o-regularity, let $\ov{\beta}\in\It(\beta)$.
By Lemma 17, $\ov{\beta}(n)\leqa n^{ \alpha^{\overline{m}}(n)}$ with a suitable $\overline{m}\in\N$. 
From $\sum_{i=1}^l m_i \leq \ov{\beta}(n)\,$ with $l,m_1,\dots ,m_l\in\Nplus\,$ it follows 
$l\leq \ov{\beta}(n)$,
\[ m_i \leq \sum\nolimits_{i=1}^l m_i \leq n^{ \alpha^{\overline{m}}(n)} \en
\mbox{ and } \en
\alpha(m_i) \leq \alpha( n^{(\log(n))^{\overline{m}}}) \leq (\, \alpha(n)\,)^{\widetilde{m}}\]
for almost all $n\in\N$,
with a suitable $\widetilde{m}\in\N$.
So for almost all $n\in\N$ we have 
\bea
\sum\nolimits_{i=1}^l \ov{\beta}( m_i )
&\leq&  c\cdot \ov{\beta}(n) + \sum\nolimits_{i=1}^l (\, m_i^{\alpha^{\overline{m}}( m_i )}\,) \\
&\leq&  n^{ \alpha^{\overline{m}+1}( n )} +  n^{\alpha^{\overline{m}}( n)} \cdot  n^{\alpha^{\overline{m}}( n)\cdot
                                                                                                                                                       \alpha^{\overline{m}(m_i)}}\\
&\leq&  n^{ \alpha^{\overline{m}+1}( n )} +  n^{\alpha^{\overline{m}}( n)} \cdot  n^{\alpha^{\overline{m}}( n)\cdot
                                                                                                                       \alpha^{\widetilde{m}\cdot\overline{m}(n)}}\\
&\leq&  n^{\alpha^{\overline{m}+1}(n)}\cdot( \,1+ n^{ \alpha^{(\overline{m}+\widetilde{m}\cdot\overline{m})}(n)}\,)\\
&\leq&  n^{\alpha^{\overline{m}+1}(n)}\cdot (\,  n^{ \alpha^{(\overline{m}+\widetilde{m}\cdot\overline{m}+1)}(n)}\,)\\
&\leq&  n^{(\,\alpha^{\overline{m}+1}(n) + \alpha^{(\overline{m}+\widetilde{m}\cdot\overline{m}+1)}(n)\,)} \\
&\leq&  n^{(\,\alpha^{\overline{m}+1}(n)\, \cdot\, \alpha^{(\overline{m}+\widetilde{m}\cdot\overline{m}+1)}(n)\,)} \\
&\leq&  n^{(\,\alpha^{2\overline{m}+\widetilde{m}\cdot\overline{m}+2)}(n)\,)}\\
& \leq &  \beta'(n)\,,
\eea
with suitable $c,\widetilde{m}\in\N$ and some $\beta'\in\It(\beta)$.
\qed
\end{proof}

\begin{corollary}
Let $U$ be an it-complete and it-tame universe of bounds of Type 1 or 2.
Then every regular subset of $U$ is even o-regular.
\end{corollary}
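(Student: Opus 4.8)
The plan is to read the corollary off from Proposition 6 (that $\It(\beta)$ is o-regular whenever $\beta$ is a bound of Type 1 or 2) together with the fact, established in Section 3, that every regular set $B$ satisfies $\It(B)\leqa B$. So I would start by fixing a regular subset $B\subseteq U$; then $B\not=\emptyset$, every member of $B$ is time-constructible, and — because $U$ consists of bounds of Type 1 or 2 — each $\beta\in B$ is itself of Type 1 or 2. I would also recall the argument for $\It(B)\leqa B$: condition ii) of Definition 1 yields $\{\beta\circ\beta:\beta\in B\}\leqa B$, and an induction on $m$ using Lemma 1 promotes this to the statement that for every $\beta\in B$ and every $k\in\Nplus$ there is a $\beta''\in B$ with $\beta\ite{k}\leqa\beta''$; hence $\It(B)=\bigcup_{\beta\in B}\It(\beta)\leqa B$.

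The main step is then to check the defining property of o-regularity for $B$ element by element. I would fix $\beta\in B$; since $\beta$ is of Type 1 or 2, Proposition 6 gives that $\It(\beta)$ is o-regular, and applying its o-regularity property to the member $\beta\in\It(\beta)$ yields some $\ov{\beta}=\beta\ite{m}\in\It(\beta)$ such that, for almost all $n\in\Nplus$, from $\sum_{i=1}^{l}m_i\leq\beta(n)$ it follows that $\sum_{i=1}^{l}\beta(m_i)\leq\ov{\beta}(n)$. Since $\ov{\beta}\in\It(\beta)\subseteq\It(B)$ and $\It(B)\leqa B$, there is a $\beta'\in B$ with $\ov{\beta}\leqa\beta'$. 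The set of $n$ for which the above implication holds and the set $\{n:\ov{\beta}(n)\leq\beta'(n)\}$ are both cofinite, so on their (again cofinite) intersection $\sum_{i=1}^{l}m_i\leq\beta(n)$ implies $\sum_{i=1}^{l}\beta(m_i)\leq\beta'(n)$. Thus $\beta'\in B$ is the witness required by the definition of o-regularity for $\beta$, and as $\beta$ ranged over all of $B$, $B$ is o-regular.

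I do not expect a genuine obstacle here; the only thing needing care is reconciling the ``almost all $n$'' in the definition of o-regularity with the ``almost everywhere'' hidden inside $\leqa$, and that is handled routinely by intersecting two cofinite sets. The one point I would stress is that one should resist the temptation to represent $B$ as confinal to a single iteration set $\It(\beta)$ and then appeal to the o-regular analogue of Lemma 6: such a $\beta$ need not exist, because a regular $B$ may be a ``limit'' set without a $\leqi$-greatest element. Arguing locally at each $\beta\in B$ — using $\It(\beta)$ merely as an auxiliary o-regular set from which a witness is borrowed and then pushed back into $B$ via $\It(B)\leqa B$ — avoids this. I note in passing that the deduction uses only that $U$ consists of bounds of Type 1 or 2, not the full hypotheses (it-complete, it-tame); those are just the standing assumptions under which the surrounding results are phrased.
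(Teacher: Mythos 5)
Your proof is correct, but it takes a more direct route than the paper. The paper's own argument is a two-line appeal to its transfer machinery: by Lemmas 19, 20 and 22 one has $B\equiv \Cl_U(B)\equiv\bigcup\{\It(\beta):\beta\in B\}$, by Proposition 6 this union is a union of o-regular sets and hence o-regular, and then the o-regular analogue of Lemma 6 (a set of time-constructible bounds confinal to an o-regular set is o-regular, stated just before Proposition 6) pulls o-regularity back to $B$; this is where the it-completeness/it-tameness hypotheses enter, via the closure lemmas. You instead verify the defining condition of o-regularity for $B$ pointwise: for $\beta\in B$ you borrow the witness $\ov{\beta}\in\It(\beta)$ from Proposition 6 applied to the o-regular set $\It(\beta)$, and push it back into $B$ using $\It(B)\leqa B$, which holds for every regular set by condition ii) of Definition 1; the bookkeeping with the two cofinite sets is exactly right, and regularity of $B$ is given, so nothing else is needed. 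What your route buys is self-containedness — you avoid the "union of o-regular sets is o-regular" step, the closure Lemmas 19--22, and the unproved o-regular analogue of Lemma 6 — and it makes visible that the hypotheses of it-completeness and it-tameness are not actually used, only that $U$ consists of (time-constructible) bounds of Type 1 or 2; what the paper's route buys is brevity given machinery it has already set up and reuses elsewhere. Your caution about not reducing $B$ to a single $\It(\beta)$ is also well placed, since a regular $B$ need not have a $\leqi$-greatest element; note, though, that the paper sidesteps this by using the whole union $\bigcup_{\beta\in B}\It(\beta)$ rather than a single iteration set.
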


\begin{proof}
Let $B$ be a regular subset of a universe $U$ fulfilling the supposition. By Lemmas 19, 20 and 22, we have 
\[ B= \Cl_U(B) \equiv  \bigcup\, \{\It(\beta):\: \beta\in B\}.\]
Thus, by Proposition 6, $B$ is the union of o-regular sets.
It follows that it is o-regular, too.
\qed
\end{proof}

We continue with repeating some crucial notions and results from \cite{He1}.
For a prefix  $\mbox{\rm X}\in\{\mbox{\rm D}, \mbox{\rm N}\}$, a language $L$ and a set $B$ of bounds, let
\bea
 \XT^L(B) & = & \bigcup\nolimits_{\beta\in B} \XT^L(\beta), \quad \mbox{where} \\
\XT^L(\beta) & = & 
\{L': \; L'\subseteq \X^* \mbox{ and $L'$ is accepted by an $\mbox{\rm X}$OTM $\frak{M}$  using $L$ as oracle } \\
 & & \hspace*{2.45cm}\mbox{and working with a related time complexity $t_{\frak{M}^L}\in \Ord(f)$} \}.
\eea
By an $\mbox{\rm X}$OTM, we mean a deterministic and nondeterministic, respectively, {\it oracle Turing machine}\/.
For further details, as the definition of time complexity of oracle machines, the reader is referred to textbooks  like \cite{DK,Pa,Re}.

For a class $\cal{C}$ of languages, let
\bea
 \XT^{\cal{C}}(B) & = & \bigcup\nolimits_{L\in\cal{C}} \XT^L(B).
\eea
In particular, if $B$ is an o-regular set of bounds and $\mbox{\rm X}\in\{\mbox{\rm D}, \mbox{\rm N}\}$, then it holds
\[\XT^{\DTi(B)}(B) = \XT(B). \]

The classes of the {\it oracle hierarchies}\/ are inductively defined analogously to the polynomial-time hierarchy 
(obtained for $B=\Bpol$) 
 and the linear-time hierarchy (obtained for $B=\Blin$). We put
\bea 
\Sa{1}{B} & = & \NT(B),\\
\Sa{k+1}{B} & = & \NT^{\Sai{k}{B}}(B) \quad \mbox{and}\\
\Pa{k}{B} & = & \co\Sa{k}{B}\en = \en \{\X^*\setminus L:\: L\in \Sa{k}{B}\,\}, \quad \mbox{for $k\in\Nplus$}, \quad \mbox{and}\\
\OH(B) &=& \bigcup\nolimits_{k\in\Nplusi} \Sa{k}{B}.
\eea

For every regular set $B$, the following language
\bea
V & = & \{ \langle w, \code(\frak{M})^t\rangle: \;
   w\in\X^*, |w|\leq t\in\Nplus \mbox{ and} \\
& & \hspace*{2.95cm} \mbox{$\,\frak{M}$ is a 2-tape $\Nd$TM   that accepts $w$ in $\leq t$ steps}\}
\eea
is $\NT(B)$-complete with respect to the relation of {\it $B$-reducibility}\/, $\leq_B$, between languages. This is defined as the m-reducibility by means of word functions computable with time bounds from $B$. More precisely, for $L,L'\subseteq \X^*$ let
\bea
L\leq_B L' & \mbox{ iff } & \mbox{there is a word function $\fct{f}{\X^*}{\X^*}$ which is computable
                        in time $\beta$,}\\
         & & \mbox{for some $\beta\in B$, \en such that for all $w\in\X^*$:
               \en    $w\in L \mbox{ iff } f(w)\in L'$}.
 \eea

Using the relativized versions of $V$, defined as
\bea
V^L & = & \{ \langle w, \code(\frak{M})^t\rangle: \;
   w\in\X^*, |w|\leq t\in\Nplus \mbox{ and $\frak{M}$ is a 2-tape $\Nd$OTM }\\
  & & \hspace*{3.0cm}
   \mbox{such that $\frak{M}^L$ accepts $w$ in $\leq t$ steps}\},
\eea
for any $L\subseteq \X^*$, we obtain universal $\Sa{k}{B}$-complete languages $V_k$ by  putting
\bea
V_1&=& V, \quad\mbox{ and}\\
V_{k+1}&=& V^{V_k}  \quad\mbox{ for all $k\in\Nplus$.}
\eea
For example, it follows that
\bea
\Sa{k+1}{B} &=& \NT^{V_k}(B) \quad \mbox{ for all $k\in\Nplus$},\\
 \OH(B) & = & \Sa{k}{B} \quad \mbox{ iff } \quad V_k\in \Pa{k}{B},
\eea
and a collapse of the oracle hierarchy of an o-regular set $B_1$ implies a collapse of the oracle hierarchy of each o-regular $B_2$ satisfying $B_1\leq B_2$. For details on these results and further ones, we refer to \cite{He1}.

It was not yet explicitly mentioned in \cite{He1} that
the oracle hierarchies can also be defined by means of ATMs ({\it alternating Turing machines}\/).  Let
\bea
 \Salt{k}{B} & = & \bigcup\nolimits_{\beta\in B} \Salt{k}{\beta}, \quad \mbox{where} \\
\Salt{k}{\beta} & = & 
\{L: \; L\subseteq \X^* \mbox{ and $L$ is accepted by a $\Sigma_k$-ATM $\frak{M}$  } \\
 & & \hspace*{5.4cm}\mbox{with a time complexity $t_{\frak{M}}\in \Ord(\beta)$} \}.
\eea
By a $\Sigma_k$-ATM, one usually means an ATM such that for each accepted input there is an accepting subtree $T$ whose root is existential and the number of quantifier changes along any computation path in $T$ is bounded by $k-1$. The $\Pi_k$-ATMs and the 
classes $\Palt{k}{B}$ are analogously defined, now with universal roots of the accepting subtrees. It easily follows that
$\Palt{k}{B}=\co\Salt{k}{B}$.
For further details, we again refer to standard textbooks.

\begin{proposition}
Let $B$ be an o-regular set of bounds and $k\in\Nplus$. Then
\[\Sa{k}{B}= \Salt{k}{B}.\]
\end{proposition}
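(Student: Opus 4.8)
The plan is to establish the two inclusions of $\Sa{k}{B}=\Salt{k}{B}$ simultaneously by induction on $k$. Since $\Pa{k}{B}=\co\Sa{k}{B}$ and $\Palt{k}{B}=\co\Salt{k}{B}$, the identity for the $\Sigma$-classes at level $k$ also yields $\Pa{k}{B}=\Palt{k}{B}$, which I shall use freely. For $k=1$ there is nothing to prove: $\Sa{1}{B}=\NT(B)$ by definition, and an accepting subtree of a $\Sigma_1$-ATM has an existential root and no quantifier change, hence is an accepting computation of an ordinary nondeterministic machine, so $\Salt{1}{B}=\NT(B)$ as well. Assume now $\Sa{k}{B}=\Salt{k}{B}$ (hence $\Pa{k}{B}=\Palt{k}{B}$) and treat the two inclusions at level $k+1$ separately.

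For $\Salt{k+1}{B}\subseteq\Sa{k+1}{B}$ I would start from a $\Sigma_{k+1}$-ATM $\frak{M}$ accepting $L$ with time complexity in $\Ord(\beta)$, $\beta\in B$, normalized so that it first performs only existential steps and, on entering a universal state, runs a $\Pi_k$-subcomputation. Let $A$ consist of the codes $\langle c,1^{\,t}\rangle$ of a universal-state configuration $c$ and a step budget $t$ such that the $\Pi_k$-part of $\frak{M}$, started at $c$, accepts within $t$ steps; after a preamble that merely parses $c$, this is again a $\Pi_k$-ATM computation obeying a time bound from $B$ in the length of its input (here the regularity of $B$ absorbs the constant-factor and composition overhead), so $A\in\Palt{k}{B}=\Pa{k}{B}=\co\Sa{k}{B}$. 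Then $L$ is decided by a nondeterministic oracle machine that simulates the existential phase of $\frak{M}$, reaching a configuration $c$ after $\Ord(\beta(n))$ steps, accepts immediately if $c$ is accepting, and otherwise decides membership of $\langle c,1^{\,t}\rangle$ in $A$, with $t$ the remaining budget, by a single query to a $\Sa{k}{B}$-oracle for $\X^*\setminus A$; this machine runs in time $\Ord(\beta)$ with $\beta\in B$, so $L\in\NT^{\Sai{k}{B}}(B)=\Sa{k+1}{B}$.

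For $\Sa{k+1}{B}\subseteq\Salt{k+1}{B}$, which I expect to be the main obstacle, take $L\in\NT^{A}(B)$ with $A\in\Sa{k}{B}=\Salt{k}{B}$, where the oracle machine runs in time $\Ord(\beta)$, $A$ is accepted by a $\Sigma_k$-ATM and $\X^*\setminus A$ by a $\Pi_k$-ATM, each in time $\Ord(\gamma)$, and fix (by regularity) a time-constructible $\delta\in B$ with $\beta,\gamma\leqa\delta$. The $\Sigma_{k+1}$-ATM for $L$ guesses existentially: an accepting run of the oracle machine on the input, the answers to its queries $y_1,\dots,y_l$, and, for each query answered ``yes'', an existential certificate for the first quantifier block of the $\Sigma_k$-ATM on $y_i$; a deterministic check confirms that the guessed run is a correct accepting run. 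In the remaining alternating phases it universally selects a query index and continues the corresponding subcomputation --- the $\Sigma_k$-ATM for ``yes'' queries, whose first block has already been supplied, and the $\Pi_k$-ATM for ``no'' queries --- and a routine alignment of the alternation blocks (viewing each subcomputation as a $\Sigma_{k+1}$-computation and letting the universal selection merge with its first universal block) keeps the whole machine of type $\Sigma_{k+1}$. The decisive point is the running time of the guessing phase: the total query length $\sum_i|y_i|$ is bounded by the running time of the oracle machine, hence by $\delta(n)$; the certificate for $y_i$ has length $\Ord(\gamma(|y_i|))\subseteq\Ord(\delta(|y_i|))$, and the queries whose length lies below the threshold for $\gamma\leqa\delta$, being at most $l\leq\beta(n)$ in number, together contribute only $\Ord(\beta(n))$; so the guessing phase costs $\Ord\big(\sum_i\delta(|y_i|)\big)$. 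Applying the definition of o-regularity to $\delta$, with $\sum_i|y_i|\leq\delta(n)$, gives $\sum_i\delta(|y_i|)\leq\delta'(n)$ for some $\delta'\in B$; every computation path beyond the guessing phase stays within a further bound from $B$ obtained by composition, so the $\Sigma_{k+1}$-ATM works within a bound from $B$ and $L\in\Salt{k+1}{B}$.

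The hard part is thus the second inclusion, and within it the time analysis rather than the combinatorics: the fusion of a conjunction of $\Sigma_k$- and $\Pi_k$-certifications into one $\Sigma_{k+1}$-ATM is the classical device of pulling all leading existential guesses to the front and then branching universally over the conjuncts, whereas it is precisely in bounding the cumulative cost of re-certifying all oracle answers that the hypothesis of \emph{o-regularity} --- and not merely regularity --- is essential, just as in the familiar special cases $B=\Bpol$ and $B=\Blin$, where $\sum_i m_i\leq\beta(n)$ trivially yields $\sum_i\beta(m_i)\leq\beta'(n)$ for a suitable $\beta'$ in the same set.
\qed
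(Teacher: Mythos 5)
Your proof is correct and follows essentially the same route as the paper: induction on $k$, with the inclusion $\Sa{k+1}{B}\subseteq\Salt{k+1}{B}$ handled by existentially guessing the oracle machine's queries and answers and re-certifying them with the $\Sigma_k$- and $\Pi_k$-ATMs, where o-regularity bounds the cumulative cost $\sum_i\delta(|y_i|)$, and the converse inclusion by splitting off the initial existential phase and treating the remaining $\Pi_k$-part as the oracle. The only cosmetic differences are that the paper runs the argument with the complete oracle $V_k$ instead of an arbitrary $A\in\Sa{k}{B}$, and encodes the residual $\Pi_k$-language via pairs $\langle w,u\rangle$ rather than configurations $\langle c,1^{\,t}\rangle$; neither changes the substance.
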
 

\begin{proof}
This is shown by induction on $k$. For $k=1$, the assertion holds due to
\[\Sa{1}{B}= \NT(B)= \Salt{1}{B}.\]
Now suppose that the assertion holds for some $k\in\Nplus$. Then the inclusion
\[\Sa{k+1}{B} \subseteq \Salt{k+1}{B}\]
follows by
\[ \Sa{k+1}{B} = \NT^{V_k}(B) \quad \mbox{ and } \quad V_k\in\Sa{k}{B} = \Salt{k}{B}.\]
Indeed, let $L\in\NT^{V_k}(\beta)=\NT^{\overline{V_k}}(\beta)$ be accepted by an NOTM $\frak{M}$ with oracle $V_k$ and with some time bound $\beta\in B$.
If $V_k$ is accepted by a $\Sigma_k$-ATM $\frak{M}_1$ within a time complexity $\beta_1\in B$,
its complement $\overline{V_k}$ can be accepted by a $\Pi_k$-ATM $\frak{M}_2$ with the same time bound $\beta_1$.
Moreover, due to the regularity of $B$, we can suppose that $\beta_1$ is time-constructible.

To accept $L$, a $\Sigma_{k+1}$-ATM $\frak{M}'$ first existentially guesses the sequence of oracle queries put by $\frak{M}$ along
a certain computation path as well as the answers according to the oracle set $V_k$. Then it confirms these answers by means of 
$\frak{M}_1$ and $\frak{M}_2$, respectively, and accepts the input if $\frak{M}$ would it accept on this path.
There is an $\Sigma_{k+1}$-ATM $\frak{M}'$ working in this way, and by means of the o-regularity of $B$, one can implement it
in such a way that the time complexity is bounded by some $\beta'\in B$.

To show the converse inclusion,
\[\Salt{k+1}{B} \subseteq \Sa{k+1}{B},\]
let $L$ be accepted by a $\Sigma_{k+1}$-ATM $\frak{M}$ with a time complexity bounded by some $\beta\in B$.
By separating the first stage of existential guessing of $\frak{M}$, we get a representation
\[L=\{ w:\: \exists^{\beta(|w|)}u\, \langle w,u\rangle\in L_0\,\}\]
with a language $L_0\in \Palt{k}{B}$. The meaning of the {\it bounded existential quantifier}\/ is as usual: for any $t\in\N$,
\[  \exists^t u \: \ldots \: \ldots \quad \mbox{ means } \quad \exists u (\,|u|\leq t \mbox{ and } \ldots \: \ldots\:).\]
$ \langle w,u\rangle$ denotes the result of a standard pairing function of $\X^*\times \X^*$ into $\X^*$ which is computable in linear time and has projections (assigning the components $w$ and $u$ to any $ \langle w,u\rangle$) which are linear-time computable, too.

The above representation of $L$ immediately yields that $L\in\NT^{L_0}(B)\subseteq \Sa{k+1}{B}$, since
$\Palt{k}{B}=\co\Salt{k}{B}=\co\Sa{k}{B}=\Pa{k}{B}$ by the hypothesis of induction. 
\qed
\end{proof}

For the sake of completeness, we still mention a third useful characterization of the classes of the oracle hierarchies.
It uses a prefix of bounded quantifications applied to languages from $\DT(B)$. The related details and proofs can be found in
\cite{Ko}. As usual, a {\it bounded generalization}\/ 
\[  \forall^t u \:\ldots \: \ldots \quad \mbox{ means } \quad \forall u (\,|u|\leq t \Longrightarrow \ldots \: \ldots\:).\]

\begin{proposition}[Kosel]
Let $B$ be an o-regular set of bounds, $k\in \Nplus$ and $L\subseteq \X^*$. Then \\[0.5ex]
$L\in \Sa{k}{B}$ \en iff \en there are $\:\beta_1, \ldots ,\beta_k\in B$ and $L'\in \DT(B)$ such that
\[ L = \{ w\in\X^*:\: \exists^{\beta_1(|w|)}u_1\, \forall^{\beta_2(|w|)}u_2\: \ldots \:Q^{\beta_k(|w|)}u_k\:
 \langle w,u_1,u_2, \,\ldots\ , u_k\rangle \in L'\,\},\]
where $Q\in\{\exists,\forall\}$ such that the prefix of quantifiers becomes alternating.
\qed
\end{proposition}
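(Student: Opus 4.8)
The plan is to obtain the characterization from the alternating description of the oracle hierarchy given in Proposition~7, since bounded quantifier prefixes and $\Sigma_k$-ATMs encode essentially the same information; this is the route carried out in detail in \cite{Ko}. By Proposition~7 it suffices to prove, by induction on $k$, the two dual statements that $L\in\Salt{k}{B}$ holds iff $L$ has the asserted form with an outermost $\exists$, and $L\in\Palt{k}{B}$ holds iff $L$ has the analogous form with an outermost $\forall$. The two halves are exchanged under complementation using $\Palt{k}{B}=\co\Salt{k}{B}$ together with the de~Morgan duality of bounded quantifiers, so at each stage only the $\Sigma$-case needs to be treated.

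For $k=1$ one has $\Salt{1}{B}=\NT(B)$. If $L\in\NT(\beta)$ with $\beta\in B$, then along every computation path an accepting NTM makes at most $\beta(|w|)$ nondeterministic choices; coding such a choice sequence as a word $u$ with $|u|\leqa\beta(|w|)$, the predicate ``$u$ describes an accepting computation on $w$'' is decided by a step-by-step deterministic simulation in $\Ord(\beta(|w|))$ steps, hence, since $\beta$ is nondecreasing and $|w|\leq|\langle w,u\rangle|$, by a machine with a time bound in $\DT(B)$. Conversely, given the quantifier form, an NTM guesses $u_1$ (of length $\leqa\beta_1(|w|)$, with $\beta_1$ time-constructible by condition~i) of Definition~1) and then runs the $\DT(B)$-decider of $L'$ on $\langle w,u_1\rangle$; since $|\langle w,u_1\rangle|\in\Ord(\beta_1(|w|))$, condition~ii) of Definition~1 supplies a bound in $B$ for the whole computation, so $L\in\NT(B)=\Salt{1}{B}$.

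For the step $k\to k+1$, let $\frak{M}$ be a $\Sigma_{k+1}$-ATM for $L$ with time complexity $\beta\in B$. Splitting off its initial existential phase, the computation beginning in the first configuration in which $\frak{M}$ branches universally is a $\Pi_k$-ATM computation on data coded by a pair $\langle w,u_1\rangle$ with $|u_1|\in\Ord(\beta(|w|))$; since this residual machine still runs within $\beta\leqa\beta(|\langle w,u_1\rangle|)$ steps, the residual language lies in $\Palt{k}{B}$. The induction hypothesis rewrites it as a $\Pi_k$-prefix over a language in $\DT(B)$, and substituting the pair back in yields a $\Sigma_{k+1}$-prefix for $L$. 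The only delicate point is that the inner quantifier bounds now refer to $|\langle w,u_1,\dots\rangle|$ rather than to $|w|$; here I would move the genuine length restrictions ``$|u_i|\leqa\gamma_i(|\langle w,u_1,\dots\rangle|)$'' inside the deterministic matrix $L'$, which is legitimate because the pairing function and the $\gamma_i$ are linear-time computable and a regular (indeed o-regular) $B$ is closed under the relevant compositions, so that all outer bounds may then be replaced by sufficiently large $\beta_i\in B$ obtained from condition~ii) of Definition~1 --- enlarging a bound being harmless once the true restriction has been internalized. The reverse inclusion is routine: a $\Sigma_{k+1}$-ATM guesses $u_1,\dots,u_{k+1}$ in the prescribed alternation of existential and universal modes and then deterministically checks $\langle w,u_1,\dots,u_{k+1}\rangle\in L'$; the guessing costs $\Ord(\sum_i\beta_i(|w|))$ steps, the check costs time in $B$ applied to an argument of that order, and regularity --- together with o-regularity where the lengths of several queried strings are summed, exactly as in the proof of Proposition~7 --- yields an overall bound in $B$. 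Finally, Proposition~7 translates $\Salt{k}{B}$ back into $\Sa{k}{B}$.

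The step I expect to be the main obstacle is precisely this bound-bookkeeping in the inductive step: a bounded universal quantifier cannot be freely enlarged, so one must carefully arrange that the size constraints on the inner certificates $u_2,\dots,u_{k+1}$, which naturally grow with the length of the tuple $\langle w,u_1,\dots\rangle$ rather than with $|w|$, are checked inside the deterministic matrix $L'$, relying on the time-constructibility of the $\gamma_i$ and on the closure of an o-regular $B$ under the compositions $\beta+\beta'\circ\beta$. Once this is arranged, the remainder is the standard simulation argument.
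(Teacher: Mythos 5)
The paper itself contains no proof of this proposition; it is quoted from Kosel's thesis \cite{Ko}, so your proposal can only be judged on its own terms. The route you choose (translate $\Sa{k}{B}$ into $\Salt{k}{B}$ via Proposition~7, then induct on $k$ by splitting off the outermost quantifier block) is indeed the natural one, and the composition bookkeeping you worry about (inner bounds depending on $|\langle w,u_1,\dots\rangle|$ rather than $|w|$) is handled, as you say, by Lemma~1 and condition~ii) of Definition~1. The genuine gap lies elsewhere, in your treatment of the quantifier bounds themselves. Members of a regular set $B$ need not be time-constructible: condition~i) of Definition~1 only supplies a constructible majorant $\beta_1'$ with $\beta_1\leqa\beta_1'$, not constructibility of $\beta_1$. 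Moreover a bounded quantifier is not monotone under replacing its bound: enlarging the bound of an $\exists$ can add words to the defined language, enlarging the bound of a $\forall$ can remove them. Hence your parenthetical ``with $\beta_1$ time-constructible by condition~i)'' in the base case, and your claim in the inductive step that ``the $\gamma_i$ are linear-time computable'', are unjustified; and the repair you propose --- checking the true length restrictions $|u_i|\leq\gamma_i(\cdot)$ inside the deterministic matrix $L'$ --- presupposes exactly the constructibility that is missing, since that test has to be carried out by a $\DT(B)$-machine.

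In the machine-to-formula direction this is repairable: the certificates are choice sequences, the simulating matrix itself determines how long a prefix of each $u_i$ it consumes, so the matrix is insensitive to padding and the quantifier bounds may then be replaced by larger, time-constructible members of $B$; equivalently, one strengthens the induction hypothesis so that the produced representation always has time-constructible $\beta_i$ (or a padding-insensitive matrix). The formula-to-machine direction is the real obstacle: for an arbitrary, possibly non-constructible $\beta_i\in B$ the simulating machine has no evident way to enforce $|u_i|\leq\beta_i(|w|)$, and this cannot be dismissed as routine guessing. For instance, for $B=\Blin\cup\{\beta_1\}$ with a nondecreasing, non-computable $\beta_1\leqa 2n$ whose values encode an undecidable set, an $\exists^{\beta_1(|w|)}$-prefix over a linear-time matrix can define a language far outside $\NT(B)$; so some effectivity of the bounds must enter the argument, e.g.\ by proving the left-to-right direction with time-constructible $\beta_i$ and padding-insensitive matrices and invoking the right-to-left direction only for such representations. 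Your proof as written does not confront this point, so the ``if'' direction of your induction is incomplete.
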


The code $\langle w,u_1,u_2, \,\ldots\ , u_k\rangle$  of a $(k+1)$-tuple $( w,u_1,u_2, \,\ldots\ , u_k )$ 
is straightforwardly defined by induction on $k$ and using the above pairing function $\langle \,\cdot\:,\,\cdot\,\rangle$.

In combination with facts of classical complexity theory, our results yield a separation of determinism from nondeterminism for 
several regular sets of bounds.

\begin{proposition}
Let $B$ be an o-regular set of bounds such that $\beta(n)\leqo n\cdot \log^*(n)$ for all $\beta\in B$ and there is a superlinear 
bound $\beta_0\in B$. Then it holds
\[\DT(B) \not= \NT(B).\]
\end{proposition}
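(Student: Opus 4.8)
The plan is to argue by contradiction. Suppose $\DT(B)=\NT(B)$; I aim to deduce that $\NLin\subseteq\DT(n\cdot\log^*(n))$, that is, that every language of nondeterministic linear time is decidable by a deterministic machine in time $O(n\cdot\log^*(n))$, and then to invoke the classical lower bound of Paul, Pippenger, Szemer\'edi and Trotter, which forbids exactly this. So the argument splits into two essentially immediate inclusions together with one deep classical input.

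For the inclusions: since $B$ is o-regular it is in particular regular, hence $\Blin\leqa B$, so $\NLin=\NT(\Blin)\subseteq\NT(B)$; and by hypothesis every $\beta\in B$ satisfies $\beta\leqo n\cdot\log^*(n)$, hence $\DT(\beta)\subseteq\DT(n\cdot\log^*(n))$, so that $\DT(B)=\bigcup_{\beta\in B}\DT(\beta)\subseteq\DT(n\cdot\log^*(n))$. The assumed equality $\DT(B)=\NT(B)$ then chains these to $\NLin\subseteq\DT(B)\subseteq\DT(n\cdot\log^*(n))$. If one prefers to stay within the oracle-hierarchy framework of this section, the same conclusion follows by collapsing $\OH(B)$: from $\DT(B)=\NT(B)$ and $\NT^{\DTi(B)}(B)=\NT(B)$ — the latter valid because $B$ is o-regular — one obtains $\OH(B)=\NT(B)=\DT(B)$, and then $\NLin=\Sa{1}{\Blin}\subseteq\Sa{1}{B}\subseteq\OH(B)=\DT(B)$; but this detour is not needed.

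The contradiction now comes from the classical separation, used in the sharpened form $\NLin\not\subseteq\DT(n\cdot\log^*(n))$ — equivalently, that a deterministic multitape Turing machine simulating nondeterministic linear time requires (essentially) time $\Omega(n\cdot\log^*(n))$. The hypothesis that $B$ carries a superlinear bound $\beta_0$ only signals that the interesting content lies strictly above the linear bounds, the case $B\equiv\Blin$ being the bare Paul--Pippenger--Szemer\'edi--Trotter inequality $\Lin\ne\NLin$; the real work is done by the upper constraint $\beta\leqo n\cdot\log^*(n)$, which confines $\DT(B)$. Accordingly, I expect the only genuine obstacle to lie not in the (very short) structural argument above but in having at hand the classical lower bound pushed up to the threshold $n\cdot\log^*(n)$ rather than merely the plain inequality $\DT(\beta_{\iid})\ne\NT(\beta_{\iid})$: with only the weaker input the separation would be recoverable solely for sets $B$ confinal to $\Blin$.
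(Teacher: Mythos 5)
The final step of your argument rests on an input that is not available: you invoke the Paul--Pippenger--Szemer\'edi--Trotter result ``in the sharpened form $\NLin\not\subseteq\DT(n\cdot\log^*(n))$''. That sharpened form is not a classical theorem (and is, to date, not known); what \cite{PPST} proves is $\DT(n)\neq\NT(n)$, and the strongest known deterministic-time lower bounds for $\NT(n)$ lie well below the threshold $n\cdot\log^*(n)$. So the chain $\NLin\subseteq\DT(B)\subseteq\DT(n\cdot\log^*(n))$, which you do derive correctly, does not by itself yield a contradiction. You half-acknowledge this in your closing remark, but you treat the missing lower bound as something one merely needs ``at hand'', whereas it is precisely the point the paper's proof is designed to circumvent.

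The paper's route is the one you dismiss as an unnecessary detour. From $\DT(B)=\NT(B)$ and the o-regularity of $B$ (via the universal $\NT(B)$-complete language $V$ of \cite{He1}), the whole oracle hierarchy collapses: $\OH(B)=\DT(B)\subseteq\DT(n\cdot\log^*(n))$. The ingredient taken from \cite{PPST} is not a lower bound for $\NLin$ but the containment $\DT(n\cdot\log^*(n))\subseteq\Salt{4}{n}$, and by Proposition 7 ($\Sa{k}{B}=\Salt{k}{B}$ for o-regular $B$) the collapse places the fourth alternating level over any $\beta_0\in B$ inside $\OH(B)$, giving $\Salt{4}{\beta_0}\subseteq\OH(B)\subseteq\Salt{4}{n}$ together with the corresponding $\Pi_4$-classes. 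This contradicts the Paul--Reischuk hierarchy theorem \cite{PR}, which yields a language in $\Palt{4}{\beta_0}\setminus\Salt{4}{n}$ whenever $n\leo\beta_0$. Note also that this is exactly where the hypothesis of a superlinear $\beta_0\in B$ does real work --- it is what makes $n\leo\beta_0$ and hence the level-$4$ separation applicable --- contrary to your reading that it ``only signals'' that $B$ exceeds the linear bounds. As written, your proposal has a genuine gap at its single non-structural step, and closing it in your way would require proving a separation ($\NLin\not\subseteq\DT(n\cdot\log^*(n))$) that is strictly stronger than anything the cited classical results provide.
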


\begin{proof}
This is indirectly shown. By means of the sketched results from \cite{He1}, from the assumption $\DT(B)=\NT(B)$  it would follow that $V\in\DT(B)$ and  $\OH(B)=\DT(B)\subseteq \DT( n\cdot \log^*(n))$.
By a classical result proved in \cite{PPST}, we have
\[  \DT( n\cdot \log^*(n)) \subseteq \Salt{4}{\Blin} = \Salt{4}{n}.\]
Hence our assumption would yield 
\[ \Salt{4}{\beta_0} \subseteq \OH(B) \subseteq \Salt{4}{n}\]
for any superlinear bound $\beta_0\in B$. But this is a contradiction to a key result from \cite{PR} which says that
$\, \Palt{k}{\beta_2}\setminus \Salt{k}{\beta_1} \not= \emptyset\,$
whenever $\beta_1\leo \beta_2$.
\qed
\end{proof}

\begin{corollary}
Let $\beta(n)=n\cdot \alpha(n)$ be a bound of Type 1 with an unbounded function $\alpha$ such that 
$\alpha(n)\llpow \log^*(n)$.
Then $\It(\beta)$ is an o-regular set and it holds 
\[\DT(\It(\beta))\not=\NT(\It(\beta)).\]
\end{corollary}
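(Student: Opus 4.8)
The plan is to deduce both assertions from results already established, so the proof is short. The o-regularity of $\It(\beta)$ is immediate: $\beta$ is a bound of Type 1, hence Proposition 6 applies and gives that $\It(\beta)$ is o-regular.

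It then suffices to check that $B=\It(\beta)$ satisfies the remaining hypotheses of Proposition 9 and to invoke it. First, every $\ov{\beta}\in\It(\beta)$ satisfies $\ov{\beta}(n)\leqo n\cdot\log^*(n)$: since $\alpha$ is nondecreasing, f-consistent and fulfils $2\leqa\alpha(n)\leqa n$, Lemma 13 yields $\It(\beta)\eqa\{\,n\cdot\alpha'(n):\alpha'\in\Pow(\alpha)\,\}$, so there is an $m\in\Nplus$ with $\ov{\beta}(n)\leqa n\cdot(\alpha(n))^m$; the supposition $\alpha\llpow\log^*$ means $(\alpha(n))^m=\alpha^m(n)\leqa\log^*(n)$, hence, by Lemma 1, $\ov{\beta}(n)\leqa n\cdot\log^*(n)$, and in particular $\ov{\beta}\leqo n\cdot\log^*(n)$. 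Second, $B$ contains a superlinear bound, namely $\beta$ itself: $\beta\in\It(\beta)$, and since $\alpha$ is nondecreasing and unbounded we have $\lim_{n\to\infty}\frac{n}{\beta(n)}=\lim_{n\to\infty}\frac{1}{\alpha(n)}=0$, i.e.\ $\beta_{\iid}\leo\beta$.

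Now Proposition 9, applied to the o-regular set $B=\It(\beta)$, gives $\DT(\It(\beta))\neq\NT(\It(\beta))$, which completes the proof. No essential difficulty arises; the corollary merely packages Propositions 6 and 9, the one point requiring a little care being the reduction of an arbitrary element of $\It(\beta)$ to the normal form $n\cdot\alpha^m(n)$ via Lemma 13 and the subsequent use of $\llpow$ to dominate it by $n\cdot\log^*(n)$.
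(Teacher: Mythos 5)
Your proof is correct and follows essentially the same route as the paper: o-regularity from Proposition 6, the bound $\ov{\beta}\leqa n\cdot\alpha^m(n)\leqa n\cdot\log^*(n)$ via Lemma 13 and the hypothesis $\alpha\llpow\log^*$, superlinearity of $\beta$ from the unboundedness of $\alpha$, and then Proposition 9. The only difference is that you spell out the intermediate steps (the limit argument for superlinearity and the use of Lemma 1) in slightly more detail than the paper does.
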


\begin{proof}
The o-regularity of $\It(\beta)$ holds by Proposition 6.
Since $\alpha$ is unbounded, $\beta$ is a superlinear function.
For $\beta'\in\It(\beta)$, by Lemma 13 there is an exponent $k\in\N$ such that $\beta'(n)\leqa n\cdot \alpha^k(n)$.
From the supposition of the corollary, it follows $\beta'(n)\leqo n\cdot \log^*(n)$, and Proposition 9 yields the 
claimed separation.
\qed
\end{proof}

In combination with the remarks at the end of Section 5, Corollary 5 shows that the separation of determinism from nondeterminism stated in Proposition 9 applies to a variety of o-regular sets of bounds below $n\cdot \log^*(n)$.

We would like to emphasize a typical advantage of regular sets of bounds on which the separation from Proposition 9
is based. As it was discussed in \cite{Gu}, it is open whether the equality $\DT(\beta) = \NT(\beta)$ implies
a collapse of the related alternating hierarchy, i.e., $\DT(\beta) = \Salt{k}{\beta}$ for all $k\in\Nplus$.
For o-regular sets $B$ of bounds, however, the equalities $\DT(B) = \NT(B)$ or $\Sa{k}{B}=\Sa{k+1}{B}$, for any $k\in\Nplus$,
cause a collapse of the whole oracle hierarchy (and, equivalently, of the alternating hierarchy) to $\DT(B)$ or to
its $k$-th level $\Sa{k}{B}=\Salt{k}{B}$.

Of course, if $\DT(B) \not= \NT(B)$, then there is a (time-constructible) $\beta\in B$ such that $\DT(\beta) \not= \NT(\beta)$.
By means of a suitable generalization of a transfer result, which was shown in \cite{Ka} for polynomial bounds, one could conclude 
$\DT(\beta) \not= \NT(\beta)$ for a variety of bounds $\beta(n)$ below $n\cdot\log^*(n)$. It is left to the interested reader to
elaborate these consequences of Proposition 9.

\section{Conclusion}

It was a leading idea of the project reported by the present paper to search for a universe of time bounds which, on the one hand, 
is large enough to obtain the main complexity classes, but which, on the other hand, is simple enough to allow some new insights.
The results summarized by Corollary 3 demonstrate an attempt in this respect. In Theorem 3, we used them to show  the richness of the system of (confinality classes of) regular sets of bounds. By Proposition 6, even the system of o-regular sets 
owns this richness. So the separation result of Proposition 9 earns some attention.

The order-theoretic view of Section 7 can now be combined with the results on universal complete languages obtained in 
\cite{He1} and reported in the previous section. For the sake of simplicity, it is useful to suppose that the underlying
universe of time bounds yields a linear order of confinality classes of regular sets.
\\[1ex]
{\bf Supposition 1.}\en
\it
 Let $U^*$ be a universe of time bounds such that the system of confinality classes of regular subsets
of $U^*$ is linearly ordered by $\leq$ canonically transferred from 
 $\Reg(U^*)$ to $\ov{\Reg(U^*)}$,
see the details at the beginning of Section 7.
This means that $\,(\ov{\Reg(U^*)}, \leq)\,$ is a linearly ordered set. 
Suppose, moreover, that there are regular sets $B_0, B^0\subseteq U^*$ such that 
\[ \DT(B_0) \not= \NT(B_0) \quad \mbox{ and } \quad 
   \DT(B^0) = \NT(B^0). \]
\rm

With the universal $\NT(B)$-complete language $V$, see \cite{He1} and the previous section, we have for every regular set $B$:
\[\DT(B)=\NT(B) \quad \mbox{ iff } \quad V\in\DT(B).\]
Thus, $V\not\in \DT(B_0)$ and $V\in\DT(B^0)$.
Moreover, if $B_1\leq B_2$ for any regular sets $B_1$ and $B_2$, then 
$\,\DT(B_1)=\NT(B_1)$, i.e.,  $V\in \DT(B_1)$, implies that
  $\, \DT(B_2)=\NT(B_2)$, i.e., $ V\in \DT(B_2)$.

So the determinism versus nondeterminism problem causes a cut $(C_-,C_+)$ of the linearly ordered set 
$\,(\ov{\Reg(U^*)}, \leq)$, where
\bea
C_- & = & \{[B]\in\ov{\Reg(U^*)} :\: V\not\in \DT(B)\,\} \quad \mbox{ and}\\
C_+ & = & \{[B]\in\ov{\Reg(U^*)}:\: V\in \DT(B)\,\}.
\eea
For example, we have $[B_0]\in C_-$ and $[B^0]\in C_+$.

Since for the union of all representatives of classes from $C_-$,
\[ B_- = \bigcup \{ B\in\Reg(U^*)    :\: V\not\in \DT(B)\},\]
it holds that $ V\not\in \DT(B_-)$, we have $[B_-]\in C_-$. Thus, $[B_-]$ is the greatest element of $C_-$
with respect to the relation $\leq$ between confinality classes of regular sets.
The representative $B_-$ even  is the largest (i.e., with respect to inclusion greatest) of all regular subsets $B\subseteq U^*$ satisfying
$V\not\in \DT(B)$, i.e., $\DT(B) \not=\NT(B)$.
\\[1ex]
{\bf Supposition 2.}\en
\it
Let $U_0^*$ be a nonempty set of time-constructible bounds which is linearly ordered by $\leqi$ such that
$2n\leqa \beta(n)$ for all $\beta\in U_0^*$ and $U^*=\It(U_0^*)$ is tame.
\vspace{1ex}

\rm
Then, by  Lemma 7,  $(\,\ov{\Reg(U^*)},\leq\,)$ is a linearly ordered set.
Moreover, $U^*$ is it-closed and it-complete, and it consists of time-constructible bounds only. By Corollary 3, there is a universe 
$U_0^*$ of this kind which satisfies
\[\Blin \,\cup\, \Blogs \,\cup\, \bigcup\nolimits_{m\in\Nplusi} \! \Bqmlin{m} \, \cup \,\Bpol\, \cup \,\Bqpol \; \emi \; U_0^*.\]
We can even ensure that 
\[U_{00}\, =\, \Blin \,\cup\, \Blogs \,\cup\ \bigcup\nolimits_{m\in\Nplusi} \! \Bqmlin{m} \, \cup \,\Bpol\,\cup \,\Bqpol \,\cup\,\Bhex
 \; \emi \; U_0^*. \]
Now Lemmas 19-23 as well as Proposition 4 apply to $U=U^*$, for all such sets $U^*=\It(U_0^*)$. Moreover,
if $B_0\equiv\Blin$ and $B_0\subseteq U^*$, then $\DT(B_0) \not= \NT(B_0)$, and 
if $B^0\equiv\Bhex$ and $B^0\subseteq U^*$, then $\DT(B^0) = \NT(B^0)$.
Thus, Supposition 1 is completely fulfilled by any such universe $U^*$.

The determinism versus nondeterminism problem even defines a cut $(B_{0,-},B_{0,+})$ of the linearly ordered set
$(U_0^*,\leqi)$, where 
\bea
B_{0,-} & = & \{\beta\in U_0^*: \: \It(\beta)\in C_-\} \quad \mbox{ and }\\
B_{0,+} & = & \{\beta\in U_0^*: \: \It(\beta)\in C_+\}. 
\eea
By Proposition 1 or Corollary 1, $B_{0,-}$ and $B_{0,+}$ are regular.
It is easily seen that $\,B_{0,-}=B_-$, the largest regular subset of $U^*$ for which determinism is properly weaker than 
nondeterminism.

Now $U_0^*$ is furthermore supposed to consist of a lower segment $U_0$ according to Corollary 3 and an upper segment $\widehat{U_0}$
containing some bound $\widehat{\beta}$ with $\It(\widehat{\beta})\equiv \Bhex$.
This means that $(U_0,\widehat{U_0})$ is a cut of the linearly ordered set $(U_0^*,\leqi)$.

It might be interesting to discuss the consequences of certain assumptions concerning determinism versus nondeterminism
with respect to the related cut of  $(U_0^*,\leqi)$.
For example, if we assume that $\DT(\Bqpol)=\NT(\Bqpol)$ or even that $\DT(\Bpol)=\NT(\Bpol)$, i.e., $\P=\NP$,
then the related regular set $B_-$ would be a proper subset of  $U_0$, and one could ask for its order-theoretic properties
within $(U_0,\leqi)$, maybe hoping to obtain a contradiction.

For example, it would be interesting to know whether the cut $(\,B_-\,,\, U_0\setminus B_-\,)$ of $(U_0,\leqi)$ could be a jump of
$(\Cuts(U_0),\incl)$.
By the way, Proposition 9 shows that $\It(n\cdot \log^*(n))\leq B_-$.
A related question asks to which points of the Cantor discontinuum $(\D,\leqx)$ this cut could correspond, with respect to
an order-isomorphism according to Proposition 5.
\vspace{0.5ex}

Quite a similar discussion could be elaborated with respect to the question of a collapse of the oracle hierarchies defined by
o-regular sets of bounds, instead of the determinism versus nondeterminism question. Now the universal $\Sa{k}{B}$-complete
languages $V_k$, for $k\in\Nplus$, should be employed.
Notice in this context that the set $\Bhex$ is o-regular, too.
For example, for any $k\in\Nplus$, the sets
\[B_{k,-} = \bigcup \{B:\: B\subseteq U^*, B \mbox{ is o-regular and } V_k\not\in\Pa{k}{B}\,\}\]
are the largest of all o-regular subsets $B$ of $U^*$ for which 
$\Sa{k}{B}\not=\Pa{k}{B}$, i.e., $\Sa{k}{B}\not= \OH(B)$.
Moreover, we have
\[  \ldots \;\leq \, B_{3,-} \,\leq \,  B_{2,-} \,\leq \,  B_{1,-}\,\leq \,  B_{0,-}\,,\]
where $B_{0,-}$ comes from the above discussion concerning determinism versus nondeterminism.
\vspace{1ex}

At this point, we want to close our final remarks which should emphasize the last sentence of the introduction.

\end{document}